\newtheorem{theorem}{Theorem}
\newtheorem{lemma}[theorem]{Lemma}
\newtheorem{definition}{Definition}
\newtheorem{example}{Example}
\newcommand{\BibTeX}{B\kern-.05em{\sc i\kern-.025em b}\kern-.08em\TeX}
\DeclarePairedDelimiter\ceil{\lceil}{\rceil}
\DeclarePairedDelimiter\floor{\lfloor}{\rfloor}
\newcommand{\littlep}{{\rm p}}
\newcommand{\manyone}{\ensuremath{\mbox{$\,\leq_{\rm m}^{{\littlep}}$\,}}}
\newcommand{\EP}[3]{
	\begin{center}
		\smallskip
		{\small 
			\begin{tabularx}{\columnwidth}{@{}l@{\hspace*{2mm}}l@{}}
				\toprule
				\multicolumn{2}{c}{\sc{#1}} \\
				\midrule
				{\bf Given:}& \parbox[t]{0.80\columnwidth}{#2\vspace*{1mm}} \\
				{\bf Question:}& \parbox[t]{0.80\columnwidth}{#3\vspace*{.5mm}} \\ 
				\bottomrule
			\end{tabularx}
		}
		\smallskip
	\end{center}
}
\newcommand{\proofonlyif}{\smallskip\textit{Only if:\quad}}
\newcommand{\proofif}{\smallskip\textit{If:\quad}}
\newcommand{\PenroseBanzhaf}{\beta}
\newcommand{\p}{\ensuremath{\mathrm{P}}}
\newcommand{\np}{\ensuremath{\mathrm{NP}}}
\newcommand{\conp}{\ensuremath{\mathrm{coNP}}}
\newcommand{\PP}{\ensuremath{\mathrm{PP}}}
\newcommand{\PH}{\ensuremath{\mathrm{PH}}}
\newcommand{\OMIT}[1]{}
\newenvironment{proofs}{\noindent{\textsc{Proof.}}}{\literalqed\bigskip}
\def\literalqed{{\ \nolinebreak\hfill\mbox{$\square$\quad}}}
\newenvironment{proofsketch}{{\textsc{Proof Sketch.}}}
\begin{document}

\begin{frontmatter}

\paperid{3527}

\title{Control by Deleting Players from Weighted Voting Games Is NP$^{\text{PP}}$-Complete for the Penrose--Banzhaf Power Index}

\author[A]{\fnms{Joanna}~\snm{Kaczmarek}\orcid{0000-0001-6652-6433}\thanks{Corresponding Author. Email: joanna.kaczmarek@hhu.de}%
}
\author[A]{\fnms{J\"{o}rg}~\snm{Rothe}\orcid{0000-0002-0589-3616}%
}

\address[A]{Institut f\"{u}r Informatik, MNF, Heinrich-Heine-Universit\"{a}t D\"{u}sseldorf, D\"{u}sseldorf, Germany}

\begin{abstract}
Weighted voting games are a popular class of coalitional games that are widely used to model real-life situations of decision-making.
They can be applied, for instance, to analyze legislative processes in parliaments or voting in corporate structures.
Various ways of tampering with these games have been studied, among them merging or splitting players, 
fiddling with the quota, and controlling weighted voting games by adding or deleting players.
While the complexity of control by adding players to such games so as to change or maintain a given player's power has been recently settled, the complexity of control by deleting players from such games (with the same goals) remained open.
We show that when the players' power is measured by the probabilistic Penrose--Banzhaf index, some of these problems are complete for $\np^{\PP}$---the class
of problems solvable by $\np$ machines equipped 
with a $\PP$ (``probabilistic polynomial time'') oracle.
Our results optimally improve the currently known lower bounds of
hardness for much smaller complexity classes, thus providing protection
against SAT-solving techniques in practical applications.
\end{abstract}

\end{frontmatter}

\section{Introduction}

Due to their numerous applications in real-life decision-making, weighted voting games (WVGs), a popular class of simple coalitional games, have been studied intensively for the last decades (see the book chapters by Bullinger \emph{et al.}~\cite{bul-elk-rot:b-2nd-edition:economics-and-computation-cooperative-game-theory} and Chalkiadakis and Wooldridge~\cite{cha-woo:b:handbook-comsoc-weighted-voting-games}). 

WVGs have been applied to model and analyze collective decision-making in legislative processes and for voting in parliaments~\cite{sha:b:polsci:power} (e.g., the parliament of the European Union) or in corporate structures (such as the International Monetary Fund~\cite{gam:j:power-indices-for-political-and-financial-decision-making-review}); it can also be used for making decisions in joint stock companies where the shareholders' stock values specify their weights, etc.

A central question with regard to WVGs is how significant the players in them are, that is, how much influence they have when it comes to forming winning coalitions.
This is typically measured by so-called power indices.
We focus on the \emph{probabilistic Penrose--Banzhaf index}, which is due to Dubey and Shapley~\cite{dub-sha:j:banzhaf} who were seeking to avoid certain disadvantages of the \emph{normalized Penrose--Banzhaf index}, originally proposed by Penrose~\cite{pen:j:banzhaf-index} and later re-invented by Banzhaf~\cite{ban:j:weighted-voting-doesnt-work}.
Prasad and Kelly~\cite{pra-kel:j:voting} showed that computing the probabilistic Penrose--Banzhaf index is $\#\p$-complete even under \emph{parsimonious} reductions (which preserve the number of solutions).
Here, $\#\p$ denotes Valiant's ``counting class'' of functions that give the number of solutions of $\np$ problems~\cite{val:j:permanent}.

Certain ways of how to tamper with the outcome of WVGs have been studied.
For example, Zuckerman \emph{et al.}~\cite{zuc-fal-bac-elk:j:manipulating-quote-in-wvg} considered the consequences of manipulating the quota of a given WVG.
Aziz \emph{et al.}~\cite{azi-bac-elk-pat:j:false-name-manipulations-wvg} introduced the notions of merging, splitting,\footnote{``Splitting players'' is also known as ``false-name manipulation.''}
and annexing players in WVGs and studied the complexity of whether such actions can be beneficial in terms of raising a distinguished player's power for 
the Shapley--Shubik~\cite{sha-shu:j:shapley-shubik-index} 
and the normalized Penrose--Banzhaf index.
Later on, Rey and Rothe~\cite{rey-rot:j:false-name-manipulation-PP-hard} improved their $\np$-hardness lower bounds to $\PP$-hardness for merging and splitting players with respect to the Shapley--Shubik index (and established $\PP$-hardness of these problems also for the probabilistic Penrose--Banzhaf index), where $\PP$ denotes the complexity class ``probabilistic polynomial time,'' which is due to Gill~\cite{gil:j:probabilistic-tms} and contains both $\np$ and $\conp$, 
and even $\p^{\np[\mathcal{O}(\log n)]}$~\cite{bei-hem-wec:j:powerprob},
a.k.a.\ $\Theta_2^{p}$,
the class of problems solvable in deterministic polynomial time with at most logarithmically many queries to an $\np$ oracle.
Faliszewski and Hemaspaandra~\cite{fal-hem:j:power-index-comparison} proved that comparing a given player's Shapley--Shubik or probabilistic Penrose--Banzhaf index in two given WVGs is $\PP$-complete.

Rey and Rothe~\cite{rey-rot:j:structural-control-in-weighted-voting-games} introduced control by either adding players to or removing them from a given WVG, and they established the first complexity results for the corresponding problems.
Their work was continued by Kaczmarek and Rothe~\cite{kac-rot:j:controlling-weighted-voting-games-by-deleting-or-adding-players-with-or-without-changing-the-quota,kac-rot:c:control-by-adding-players-to-change-or-maintain-shapley-shubik-or-the-penrose-banzhaf-power-index-in-WVGs-is-complete-for-np-pp,kac-rot:t:control-by-adding-players-to-change-or-maintain-shapley-shubik-or-the-penrose-banzhaf-power-index-in-WVGs-is-complete-for-np-pp} who, in particular, improved some of their complexity results (and also introduced another type of these problems where the game's quota is changed proportionally with the number of players that is changed by adding or deleting them).
In particular, control by adding players to a given WVG with the goal of either changing or maintaining a distinguished player's Shapley--Shubik or probabilistic Penrose--Banzhaf index was recently shown to be even $\np^{\PP}$-complete~\cite{kac-rot:c:control-by-adding-players-to-change-or-maintain-shapley-shubik-or-the-penrose-banzhaf-power-index-in-WVGs-is-complete-for-np-pp,kac-rot:t:control-by-adding-players-to-change-or-maintain-shapley-shubik-or-the-penrose-banzhaf-power-index-in-WVGs-is-complete-for-np-pp}, where $\np^\PP$ is the class of problems solvable by a nondeterministic Turing machine having access to a $\PP$ oracle.
$\np^\PP$ is an extraordinarily large complexity class; for example, the entire polynomial hierarchy is contained in it by Toda's celebrated result~\cite{tod:j:pp-ph}.

So, while the complexity of control by \emph{adding} players to WVGs with the goal of increasing, nonincreasing, decreasing, nondecreasing, or maintaining a given player's power has been settled already for both the Shapley--Shubik and the probabilistic Penrose--Banzhaf index, the complexity of control by \emph{deleting} players from such games (with the same goals) has remained open by now.
Building on the proof technique of Kaczmarek and Rothe~\cite{kac-rot:c:control-by-adding-players-to-change-or-maintain-shapley-shubik-or-the-penrose-banzhaf-power-index-in-WVGs-is-complete-for-np-pp,kac-rot:t:control-by-adding-players-to-change-or-maintain-shapley-shubik-or-the-penrose-banzhaf-power-index-in-WVGs-is-complete-for-np-pp}, we establish optimal (i.e., $\np^{\PP}$-completeness) results for some of these problems (specifically, for the goals of decreasing, nonincreasing, and maintaining a given player's probabilistic Penrose--Banzhaf index).
Note, however, that the constructions needed for the case of deleting players are way more involved than in the case of adding players.

In order to motivate these problems, recall that WVGs are the default framework to model decision-making in legislative bodies such as the EU parliament,\footnote{A bit more precisely speaking, decision-making in the EU parliament can be described as a three-dimensional vector weighted voting game (see, e.g., the book by Chalkiadakis \emph{et al.}~\cite{cha-elk-woo:b:computational-aspects-of-cooperative-game-theory} and the book chapter by Bullinger \emph{et al.}~\cite{bul-elk-rot:b-2nd-edition:economics-and-computation-cooperative-game-theory}).}
with population sizes roughly specifying the countries' weights.
Specifically, to motivate control by adding players, Kaczmarek and Rothe~\cite[p.~1]{kac-rot:c:control-by-adding-players-to-change-or-maintain-shapley-shubik-or-the-penrose-banzhaf-power-index-in-WVGs-is-complete-for-np-pp} write,
\emph{`The EU is
	constantly expanding: New members join in (or, rarely, they leave),
	which is exactly
	control by adding players,
	raising the question of if and how the power of old EU members is
	changed by adding new ones to the EU---just one clear-cut case of
	motivation among various others.  If new members join, an old one may
	insist on having the same power afterwards (motivating the goal of
	``maintaining one's power''), or at least not lose power
	(``nondecreasing one's power''), or Poland may insist that Germany's
	power does not increase when Ukraine joins (``nonincreasing one's
	power'').'}
Similarly, control by deleting players is motivated by, for instance, the Brexit: the United Kingdom leaving the EU in 2020.
In such a case, it is natural that the remaining countries may request to maintain their power afterwards, or that, say, Luxembourg may insist that its power at least does not decrease due to the United Kingdom leaving.

Another possible motivation could be the desire to make someone less powerful, without having any particular preference about who specifically (if anyone at all) might become more powerful instead.
For instance, in our example, there might have been countries that were more closely connected with the United Kingdom and, as a result of the Brexit, suffered greater losses than other EU members.

In addition to the motivation of our problems by such real-world scenarios, studying control by deleting players from (or adding them to) a given WVG is also interesting from a purely scholarly point of view. 
Specifically, control attacks on a plenty of voting rules (such as control by adding or deleting either voters or candidates) have been thoroughly investigated 
for decades~\cite{bau-rot:b-2nd-edition:economics-and-computation-preference-aggregation-by-voting,fal-rot:b:handbook-comsoc-control-and-bribery}.
Control attacks in cooperative game theory (e.g., by deleting players from or adding them to WVGs), however, have been pretty much neglected so far (except for the papers mentioned above), which may seem a bit surprising.

We continue the work of Rey and Rothe~\cite{rey-rot:j:structural-control-in-weighted-voting-games} and Kaczmarek and Rothe~\cite{kac-rot:j:controlling-weighted-voting-games-by-deleting-or-adding-players-with-or-without-changing-the-quota} (applying their techniques from~\cite{kac-rot:c:control-by-adding-players-to-change-or-maintain-shapley-shubik-or-the-penrose-banzhaf-power-index-in-WVGs-is-complete-for-np-pp,kac-rot:t:control-by-adding-players-to-change-or-maintain-shapley-shubik-or-the-penrose-banzhaf-power-index-in-WVGs-is-complete-for-np-pp}) on the computational complexity of structural control by deleting players from a given WVG.
Specifically, for the Penrose--Banzhaf power index, Rey and Rothe~\cite{rey-rot:j:structural-control-in-weighted-voting-games} showed $\conp$-hardness for the goals of maintaining, nonincreasing, or decreasing a distinguished player~$p$'s power, and Kaczmarek and Rothe~\cite{kac-rot:j:controlling-weighted-voting-games-by-deleting-or-adding-players-with-or-without-changing-the-quota} slightly improved this lower bound for the latter goal of decreasing $p$'s power to 
$\Theta_2^{p}$-hardness.

We improve these previous $\conp$-hardness and 
$\Theta_2^{p}$-hardness
lower bounds to $\np^{\PP}$-completeness, a much larger complexity class.
Thus, for these three problems, we now have matching upper and lower bounds.
Due to modern SAT-solvers and ILP techniques, $\np$-hardness (or $\conp$- or even 
$\Theta_2^{p}$-hardness)
results do not provide real protection against control attacks in practice.

However, and this is very significant for practical applications,
tackling $\np^{\PP}$-hard problems is completely out of reach for these techniques, as $\np^{\PP}$ is so huge.
For example, Baumeister \emph{et al.}~\cite{bau-jae-neu-nis-rot:j:acceptance-in-incomplete-argumentation-frameworks} were able to provide stronger refinements for SAT-based CEGAR algorithms for deciding their problems from argumentation theory, evidenced in practice by noticeably improved runtimes.
However, their problems are complete for ``only'' $\Sigma_3^p = \mathrm{NP}^{\mathrm{NP}^{\mathrm{NP}}}$, and we are not aware of any successful attempts of applying SAT-solving techniques to higher levels of the polynomial hierarchy than the third level, not to mention to a class as large as~$\np^{\PP}$, which contains the entire hierarchy
$\PH$, the class $\PP^{\PH}$, and even $\np^{\PP^{\PH}}$~\cite{tod:j:pp-ph,hem:c:comsoc-and-complexity-theory-bffs}.
We would like to point out a technical difference between the problems of control by adding players and those of control by deleting players.
In the former case, two sets of players are given: players who are already in the given game and those who can be added to it.
In the latter case, all players defined through the reduction are already part of the game.
This introduces a new challenge when applying the technique by Kaczmarek and Rothe~\cite{kac-rot:c:control-by-adding-players-to-change-or-maintain-shapley-shubik-or-the-penrose-banzhaf-power-index-in-WVGs-is-complete-for-np-pp,kac-rot:t:control-by-adding-players-to-change-or-maintain-shapley-shubik-or-the-penrose-banzhaf-power-index-in-WVGs-is-complete-for-np-pp}, namely the necessity to know the \emph{exact} number of solutions for an instance of the problem  being reduced.
In this paper, we show how to handle this situation, which technically requires substantially more effort than in the case of control by adding players. 

We provide the needed notions from cooperative game theory and complexity theory in Section~\ref{sec:preliminaries}.
In Section~\ref{sec:control-problems}, we present our results.
We conclude in Section~\ref{sec:conclusions} and outline some open questions.
Due to space limitations, some of our proofs are omitted here and are moved to the appendix.

\section{Preliminaries}
\label{sec:preliminaries}

We start by recalling some notions from cooperative game theory, especially those needed to define weighted voting games, which are the subject of our paper.

\begin{definition}
  Let $N = \{1, \ldots , n\}$ be a \emph{set of players}, and let
  \[
  v~:~2^N \rightarrow \mathbb{R}_{+} \cup \{0\}
  \]
  be a \emph{characteristic function} that maps each subset of $N$ (called a \emph{coalition}) to some nonnegative real value.
	The pair $(N,v)$ is called a \emph{coalitional game}, and it is said to be \emph{simple} if it is \emph{monotonic} (i.e., for all $T,T'$ with $T \subseteq T' \subseteq N$, $v(T) \leq v(T')$) and $v(T) \in \{0,1\}$ for each coalition $T \subseteq N$.
\end{definition}

Weighted voting games as a special type of simple coalitional games are defined as follows.
\begin{definition}
	A \emph{weighted voting game (WVG)}
	$\mathcal{G} = (w_{1}, \ldots, w_{n};q)$
	is a simple coalitional game with player set $N$ and characteristic function $v$ defined as follows: $v(T) = 1$ if $w_{T} \ge q$, and $v(T) = 0$ otherwise, where $q$ is a natural number
	called the \emph{quota} and $w_{T}~=~\sum_{i \in T} w_{i}$ with $w_{i}$ being the nonnegative integer \emph{weight of player~$i \in N$}. 

        Coalition~$T$ is said to be a \emph{winning coalition} if $v(T)=1$,
	and it is called a \emph{losing coalition} if $v(T)=0$.

        Moreover, we call a player~$i$ \emph{pivotal for coalition
	  $T\subseteq N\setminus\{i\}$} if
        \[
        v(T\cup \{i\})-v(T)=1.
        \]
\end{definition}

One of the things we want to know about players is how significant they are in a given game.
We usually measure this by so-called \emph{power indices}.
The main information used in determining the power index of a player~$i$ is the number of coalitions $i$ is pivotal for, i.e., the number of coalitions that would lose unless $i$ joins them.
We study one of the most popular and well-known power indices: the \emph{probabilistic Penrose--Banzhaf power index}.
This index was introduced by Dubey and Shapley~\cite{dub-sha:j:banzhaf} as an
alternative to the original \emph{normalized Penrose--Banzhaf index}~\cite{pen:j:banzhaf-index,ban:j:weighted-voting-doesnt-work}.
It is defined as follows.

\begin{definition}\label{PBI}
	Let $\mathcal{G}$ be a WVG.
	The \emph{probabilistic Penrose--Banzhaf power index of a player~$i$ in
		$\mathcal{G}$} is defined by
	\[
	\PenroseBanzhaf(\mathcal{G},i) =
	\frac{1}{2^{|N|-1}}\sum\limits_{T \subseteq N \setminus \{i\}}(v(T \cup \{i\})-v(T)).
	\]
\end{definition}

We assume that the readers are familiar with the fundamental concepts of complexity theory, such as the well-known complexity classes $\p$ and $\np$, as well as $\conp = \{\overline{L} \mid L \in \np\}$ and the class ``probabilistic polynomial time,'' denoted by $\PP$ and defined by Gill~\cite{gil:j:probabilistic-tms} as the class of problems that can be solved by a nondeterministic polynomial-time Turing machine accepting its input if and only if at least half of its computation paths accept.
$\np^{\PP}$ is a class in the counting hierarchy (see, e.g., the work of Wagner~\cite{wag:j:succinct}, Parberry and Schnitger~\cite{par-sch:j:parallel-computation-with-threshold-functions}, Toda~\cite{tod:j:pp-ph}, and Tor\'{a}n~\cite{tor:c:ch,tor:j:quantifiers,tor:thesis:count}).
It is defined as the class of problems that can be solved by an $\np$
oracle Turing machine~\cite{tur:j:systems-of-logic} that has access to an oracle set from~$\PP$.

Our hardness and completeness proofs employ the common notion of polynomial-time many-one reducibility~\cite{kar:b:reducibilities}: We say a problem $A$ \emph{(polynomial-time many-one) reduces to a problem $B$} ($A \manyone B$, for short) if there exists a polynomial-time computable function $\rho$ such that for each input~$x$,
\[
x \in A \Longleftrightarrow \rho(x) \in B.
\]
Further, we say $D$ is \emph{hard for some complexity class $\mathcal{C}$} if for each $C \in \mathcal{C}$, $C \manyone D$, and $D$ is said to be \emph{complete for $\mathcal{C}$} if $D \in \mathcal{C}$ and $D$ is $\mathcal{C}$-hard.

For more background in computational complexity, we refer to the common textbooks by Garey and Johnson~\cite{gar-joh:b:int}, 
Papadimitriou~\cite{pap:b:complexity}, and Rothe~\cite{rot:b:cryptocomplexity}. 

For our proofs, we will use the following two problems:
First, de Campos \emph{et al.}~\cite{cam-sta-wey:c:complexity-stochastic-optimization} introduced the problem

\EP{\textsc{E-Minority-SAT}}
{A boolean formula $\phi$ with $n$ variables $x_1,\ldots,x_n$ and an
integer~$k$, $1\le k\le n$.}
{Is there an assignment to $x_1,\ldots,x_k$, the first $k$ variables,
such that at most half of the assignments to the remaining $n-k$
variables $x_{k+1},\ldots,x_n$ satisfy~$\phi$?}
and proved it $\np^{\PP}$-complete;
second, Kaczmarek and Rothe
\cite{kac-rot:c:control-by-adding-players-to-change-or-maintain-shapley-shubik-or-the-penrose-banzhaf-power-index-in-WVGs-is-complete-for-np-pp,kac-rot:t:control-by-adding-players-to-change-or-maintain-shapley-shubik-or-the-penrose-banzhaf-power-index-in-WVGs-is-complete-for-np-pp} considered the analogous problem
where they ask for an assignment to the first $k$ variables such that \emph{exactly} a given number of assignments to the remaining variables satisfies the given formula, 
\OMIT{
\EP{\textsc{E-Exact-SAT}}
{A boolean formula $\phi$ with $n$ variables $x_1,\ldots,x_n$, an
integer~$k$, $1\le k\le n$, and an integer~$\ell$.}
{Is there an assignment to the first $k$ variables $x_1,\ldots,x_k$
such that \emph{exactly} $\ell$ assignments to the remaining
$n-k$ variables $x_{k+1},\ldots,x_n$ 
satisfy~$\phi$?}
} %
and proved its $\np^{\PP}$-completeness as well---we use its special case with the same complexity.

Without loss of generality, we may assume that no given formula in conjunctive normal form (CNF) contains some variable $x$ and its negation $\neg x$ in any of its clauses (and this property can be checked in polynomial time) because otherwise, the clause would be true for any possible truth assignment and could be omitted.
We will also assume that our inputs for the above problems contain only those variables that actually occur in the given boolean formula.

Rey and Rothe~\cite{rey-rot:j:structural-control-in-weighted-voting-games} defined problems describing control by deleting players from a given WVG so as to change a distinguished player's power in the modified game.
When the goal is to decrease this power in terms of the probabilistic Penrose--Banzhaf index (abbreviated as \textsc{PBI}), this control problem is defined as follows:

\EP{\textsc{Control-by-Deleting-Players-to-Decrease-PBI}}
{A weighted voting game $\mathcal{G}$ with player set $N$,
a distinguished player~$i \in N$, and a positive integer
$k< |N|$.}
{Can at most $k$ players $M \subseteq N\setminus\{i\}$
be removed from $\mathcal{G}$ such that for the new game
$\mathcal{G}_{\setminus M}$ with player set $N\setminus M$, it holds that
$\PenroseBanzhaf(\mathcal{G}_{\setminus M},i) < \PenroseBanzhaf(\mathcal{G},i)$?
}

Analogously, they define the corresponding control problems for the goals of nonincreasing or maintaining the probabilistic Penrose--Banzhaf index of the distinguished player by changing the relation sign in the question to  ``$\le$'' and ``$=$,'' respectively.
We also assume that at least one player must be removed for the goals of nonincreasing or maintaining the distinguished player's power, for otherwise, these control problems would be trivial since we would reach these goals already by making no changes at all to the given game.

\begin{example}
  Let
  \[
  \mathcal{G} = (1,2,2,2,3,3;8)
  \]
  be a weighted voting game.
Let us focus on one of the players with weight~$2$ and let us call this player~$p$.
Obviously, $p$ is pivotal for coalitions of weights~$6$ and $7$, and there are eight such coalitions:
\begin{itemize}
  \item a coalition with both players of weight~$3$,
  \item this coalition extended with the player of weight~$1$,
  \item two coalitions with the other two players of weight~$2$ and the player with weight~$3$, and
  \item four coalitions consisting of three players with weights~$1$, $2$, and~$3$.
\end{itemize}
Hence, $p$'s probabilistic Penrose--Banzhaf index is
\[
\PenroseBanzhaf(\mathcal{G},p) = \frac{8}{2^5} = \frac{1}{4}.
\]
Now, remove one of the players of weight~$3$ from~$\mathcal{G}$, and let us call this player~$d$.
Then $p$ stays pivotal for the coalitions without~$d$, so $p$'s probabilistic Penrose--Banzhaf index decreases to
\[
\PenroseBanzhaf(\mathcal{G}_{\setminus\{d\}},p) = \frac{3}{2^{4}} = \frac{3}{16}.
\]
Next, delete one of the players of weight~$2$ instead of~$d$ from~$\mathcal{G}$---let us call this player~$e$.
Then we have
\[
\PenroseBanzhaf(\mathcal{G}_{\setminus\{e\}},p) = \frac{4}{2^{4}} = \frac{1}{4},
\]
i.e., player~$p$ is as powerful in the new game as in the old game.
\end{example}

Rey and Rothe~\cite{rey-rot:j:structural-control-in-weighted-voting-games} observed that $\np^{\PP}$ is the best known upper bound for the problems of control by deleting players from a given WVG, and for the specific goals of decreasing, nonincreasing, or maintaining a distinguished player~$p$'s Penrose--Banzhaf power index, they established $\conp$-hardness as a first lower bound.
There is a glaring gap in complexity between this upper and this lower bound.
Kaczmarek and Rothe~\cite{kac-rot:j:controlling-weighted-voting-games-by-deleting-or-adding-players-with-or-without-changing-the-quota}
took a first step to close this gap by showing 
$\Theta_2^{p}$-hardness
for the goal of decreasing $p$'s power, but still, the remaining gap in complexity can still be called huge (recall that 
$\Theta_2^{p}$
lies in the second level of the polynomial hierarchy, which by Toda's result~\cite{tod:j:pp-ph} is entirely contained in $\p^{\PP} \subseteq \np^{\PP}$).
Our goal is to raise these previous lower bounds to $\np^{\PP}$-hardness, thus establishing $\np^{\PP}$-completeness for these three problems, which is the best result possible as the upper and lower bounds coincide.

To this end, we build on the novel proof technique of Kaczmarek and Rothe~\cite{kac-rot:c:control-by-adding-players-to-change-or-maintain-shapley-shubik-or-the-penrose-banzhaf-power-index-in-WVGs-is-complete-for-np-pp,kac-rot:t:control-by-adding-players-to-change-or-maintain-shapley-shubik-or-the-penrose-banzhaf-power-index-in-WVGs-is-complete-for-np-pp} who recently showed $\np^{\PP}$-hardness of control problems by \emph{adding} players to a given WVG.
In particular, they are using the following transformation from boolean formulas to weight vectors.

\begin{definition}
\label{def:prereduction}
Let $\phi$ be a given boolean formula in CNF
with variables $x_1,\dots,x_n$ and $m$ clauses.
Let $k\in\mathbb{N}$ with $k\le n$ and $r = \ceil{\log_{2} n}-1$.
Define the weight vectors (which are going to be assigned as weights to players divided into three sets---$A$, $B$, and $C$---in our proofs later on) as follows.
For some $t \in\mathbb{N}\setminus\{0\}$ such that
$10^{t} > 2^{\ceil{\log_2 n}+1}$ and
for $i \in \{1,\ldots,n\}$, define
\begin{eqnarray*}
a_i & = & 10^{t(m+1)+i}+\sum_{\substack{j\,:\ \textrm{clause $j$} \\ \textrm{ contains $x_i$}}} 10^{tj}
\text{ and } \\
b_i & = & 10^{t(m+1)+i}+\sum_{\substack{j\,:\ \textrm{clause $j$} \\ \textrm{ contains $\neg x_i$}}} 10^{tj},
\end{eqnarray*}
and for $j\in\{1,\ldots,m\}$ and $s\in\{0,\ldots,r\}$, define
\begin{eqnarray*}
c_{j,s} & = & 2^{s}\cdot 10^{tj}.
\end{eqnarray*}
Define the following three weight vectors:
\begin{eqnarray*}
W_A & = & (a_1,\ldots,a_k,b_1,\ldots,b_k), \\
W_B & = & (a_{k+1},\ldots,a_n,b_{k+1},\ldots,b_n), \text{ and}\\
W_C & = & (c_{1,0},\ldots,c_{m,r}).
\end{eqnarray*}	
Additionally, let
\begin{align*}
q'  ~=~ & \sum_{i=1}^{n} 10^{t(m+1)+i} + 2^{\ceil{\log_{2} n}}\sum_{j=1}^{m} 10^{tj}.
\end{align*}
\end{definition}
\noindent
Kaczmarek and Rothe~\cite{kac-rot:c:control-by-adding-players-to-change-or-maintain-shapley-shubik-or-the-penrose-banzhaf-power-index-in-WVGs-is-complete-for-np-pp,kac-rot:t:control-by-adding-players-to-change-or-maintain-shapley-shubik-or-the-penrose-banzhaf-power-index-in-WVGs-is-complete-for-np-pp} proved that each subset of $A\cup B\cup C$ whose total weight is equal to $q'$ uniquely corresponds to a truth assignment satisfying the boolean formula~$\phi$.
We are going to use this transformation and this fact in our proofs in the next section.

\section{%
	Complexity of Control by Deleting Players}
\label{sec:control-problems}

In this section, we present our results.
We start with the goal of decreasing the distinguished player's Penrose--Banzhaf index.
For this case, we provide a complete proof of Theorem~\ref{thm:deleting-decrease-PBI} below with all details 
(except all calculations which can be found in the appendix).
For the goals of nonincreasing or maintaining the distinguished player's Penrose--Banzhaf power, 
i.e., for upcoming Theorem~\ref{thm:deleting-remaining-PBI}, 
we defer the 
proofs to the appendix.

\begin{theorem}\label{thm:deleting-decrease-PBI}
	\textsc{Control-by-Deleting-Players-to-Decrease-PBI} is $\np^{\PP}$-complete.
\end{theorem}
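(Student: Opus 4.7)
The $\np^{\PP}$ upper bound is routine: a nondeterministic machine guesses a deletion set $M \subseteq N \setminus \{i\}$ with $|M| \leq k$ and then makes one query to a $\PP$ oracle to compare $\PenroseBanzhaf(\mathcal{G}_{\setminus M}, i)$ to $\PenroseBanzhaf(\mathcal{G}, i)$; the latter comparison is in $\PP$ by Faliszewski and Hemaspaandra~\cite{fal-hem:j:power-index-comparison}.

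For $\np^{\PP}$-hardness, the plan is to reduce from the \textsc{E-Exact-SAT} variant that Kaczmarek and Rothe~\cite{kac-rot:c:control-by-adding-players-to-change-or-maintain-shapley-shubik-or-the-penrose-banzhaf-power-index-in-WVGs-is-complete-for-np-pp,kac-rot:t:control-by-adding-players-to-change-or-maintain-shapley-shubik-or-the-penrose-banzhaf-power-index-in-WVGs-is-complete-for-np-pp} proved $\np^{\PP}$-complete. Given an instance $(\phi, k, \ell)$, I would first invoke Definition~\ref{def:prereduction} on $\phi$ to obtain the weight groups $A$, $B$, $C$ and the value $q'$, carrying the key bijection between subsets of $A \cup B \cup C$ of total weight exactly $q'$ and satisfying assignments of $\phi$. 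The intent is that within $A$ the pair $\{a_{i}, b_{i}\}$ encodes the variable $x_{i}$ for $i \leq k$, and deleting precisely one player of this pair fixes the value of $x_{i}$ in a partial assignment~$\pi$.

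The WVG $\mathcal{G}$ would then be built on a player set of the form $A \cup B \cup C \cup \{p\} \cup D$, where $p$ is the distinguished player (with a small weight) and $D$ is a carefully engineered set of auxiliary padding players, with the quota $q$ chosen close to $q' + w_{p}$ so that $p$ is pivotal in $\mathcal{G}_{\setminus M}$ essentially for coalitions whose total weight equals~$q'$. Writing $P_{\bullet}$ for the number of coalitions for which $p$ is pivotal, the target condition $\PenroseBanzhaf(\mathcal{G}_{\setminus M}, p) < \PenroseBanzhaf(\mathcal{G}, p)$ unfolds to $P_{M} \cdot 2^{|M|} < P_{\emptyset}$. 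The budget $k$ together with the weight structure on $A$ should force any successful deletion to pick exactly one player from each pair $\{a_{i}, b_{i}\}$, so that any ``wrong'' deletion (both or neither of a pair, or a player from $B \cup C$ or $D$) provably fails to produce the strict decrease. For the canonical $M$ encoding $\pi$, the count $P_{M}$ should decompose as $S_{\pi}$ (the number of satisfying extensions of $\pi$, arising from subsets of $(A\setminus M)\cup B\cup C$ of weight $q'$) plus a contribution depending only on $D$, and the auxiliary set $D$ would then be calibrated so that the inequality
\[
P_{M} \cdot 2^{k} < P_{\emptyset}
\]
becomes equivalent to $S_{\pi} = \ell$.

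The main obstacle, which the authors explicitly highlight in the introduction, is precisely this calibration. In the adding variant the ``initial'' set $A$ and the ``addable'' set $B \cup C$ sit in disjoint pools, so the global count $S_{\mathrm{total}} = \sum_{\pi} S_{\pi}$ of satisfying assignments of $\phi$ never enters the PBI of $p$ in the starting game, and the matching to the E-Exact-SAT target $\ell$ is immediate. In the deletion setting every player is present from the start, so $P_{\emptyset}$ inevitably contains a term depending on $S_{\mathrm{total}}$, which is $\#\p$-hard and unknown at reduction time. The substantive technical step is therefore to design the weights and quota of $D$ so that the unknown $S_{\mathrm{total}}$ either cancels across both sides of the comparison or is absorbed into a reduction-time offset, leaving a clean inequality involving only $S_{\pi}$, $\ell$, $k$, and parameters of $D$. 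Once this cancellation is engineered, correctness in both directions follows from the weight-separation property of Definition~\ref{def:prereduction} by routine (if lengthy) arithmetic on the $P_{\bullet}$ values.
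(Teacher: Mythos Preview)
Your upper-bound argument is fine and matches the paper. The hardness outline, however, has two substantive gaps.

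First, the choice of source problem. For the \emph{strict decrease} goal, the paper reduces from \textsc{E-Minority-SAT} (``$\exists$ a partial assignment~$\pi$ with $S_\pi \le 2^{n-k-1}$''), not from \textsc{E-Exact-SAT}. This is not accidental: the target condition $P_M\cdot 2^{|M|} < P_\emptyset$ is a strict inequality, and after the cancellation you envisage it becomes a strict inequality in the integer~$S_\pi$. Making such a strict inequality equivalent to an \emph{exact} equality $S_\pi=\ell$ is not possible without further contortions; the paper sidesteps this by matching ``decrease'' with a threshold-type question. (The paper reserves \textsc{E-Exact-SAT} for the \emph{maintain} goal, where the target is itself an equality.)

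Second, and more seriously, you correctly diagnose the central obstacle---that $P_\emptyset$ already contains the unknown $S_{\mathrm{total}}=\xi=\#\textsc{SAT}(\phi)$---but you do not supply the mechanism that resolves it, and this mechanism is the heart of the paper's argument. Their solution is \emph{not} to cancel $\xi$ via the auxiliary set you call~$D$, but to introduce a second, rescaled copy~$E$ of the entire weight system $A\cup B\cup C$ (all weights multiplied by a large power of~$10$), together with a single partner player~$F$ of weight $q-q''-x'$. The copy~$E$ satisfies $\#\textsc{SubsetSum}(E,q'')=\xi$ by construction and is disjoint from~$A$, so a ``Case~2'' contribution of exactly $2k\xi$ to the pivotal count survives every deletion from~$A$ unchanged. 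It is this preserved $\xi$-term, appearing identically on both sides of the comparison, that absorbs the unknown~$\xi$ in~$P_\emptyset$. An unspecified padding set~$D$ with reduction-time-computable weights cannot achieve this, because nothing computable at reduction time equals~$\xi$.

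Finally, the claim that ``once this cancellation is engineered, correctness\ldots follows by routine\ldots arithmetic'' understates the remaining work. Forcing every non-canonical deletion (fewer than~$k$ players, a whole pair $\{a_i,b_i\}$, a player outside~$A$, etc.) to \emph{fail} to decrease the index is the longest part of the paper's proof and requires several further gadget families $S$, $T$, $U$, $V$, $X$, $X'$, $Y$, $Y'$, $Y^{*}$, $Y^{**}$, $Z$, $Z'$, $Z^{*}$, engineered so that the corresponding summands in the pivotal count are provably nondecreasing under deletion of up to~$k$ players from anywhere except the canonical half of~$A$. That is a genuine design problem, not bookkeeping, and your proposal gives no indication of how it would be handled.
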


\begin{proof}
	As mentioned above, the $\np^{\PP}$ upper bound has already been observed by Rey and Rothe~\cite{rey-rot:j:structural-control-in-weighted-voting-games}.

	To prove $\np^{\PP}$-hardness, we present a reduction from the
        $\np^{\PP}$-complete problem $\textsc{E-Minority-SAT}$.
        Let $(\phi, k)$ be a given instance of
	$\textsc{E-Minority-SAT}$, where $\phi$ is a Boolean formula in CNF
	with variables $x_1,\dots,x_n$ and $m$ clauses,
	and assume, without loss of generality, that $4\le k<n$.  

	Before we construct
	an instance of our control problem from $(\phi, k)$, we need to
	choose some numbers and introduce some notation.
	Let $z'$ be defined as in Table~\ref{tab:deleting-decrease-PBI}.
	Let
        \[
        z^{*}_1 = (k+2)z',
        \]
        and for $i\in\{2,\ldots,k\}$, define
	\[
	z^{*}_i = z^{*}_1 + \sum_{j=1}^{i-1}z^{*}_j.
	\]
	Let $t \in \mathbb{N}$ be such that 
	\begin{equation}
		10^{t} > \max\left\{2^{\ceil{\log_{2} n}+1}, 2z^{*}_k\right\}.
	\end{equation}
	For that $t$, let $A$, $B$, $C$, $a_i$, $b_i$, and $c_{j,s}$ for $i \in \{1,\ldots,n\}$, $j\in\{1,\ldots,m\}$, and $s\in\{0,\ldots,r\}$, and $q'$ be defined as in Definition~\ref{def:prereduction}.
	Finally, let
	\begin{eqnarray*}
		a_i '   & = & a_i \cdot 10^{t(m+1)+n},\\
		b_i '   & = & b_i \cdot 10^{t(m+1)+n}, \text{ and }\\
		c_{j,s}' & = & c_{j,s} \cdot 10^{t(m+1)+n},
	\end{eqnarray*}
	which together form the weight vector~$W_E$.
	
	Now, we construct an instance of our control problem $(\phi, k)$.
	Let $k$ be the limit for the
	number of players that can be deleted.
	Further, let 
	\[
	q'' = 10^{t(m+1)+n}q'.
	\]
	Note that,
	as shown by Kaczmarek and Rothe~\cite{kac-rot:c:control-by-adding-players-to-change-or-maintain-shapley-shubik-or-the-penrose-banzhaf-power-index-in-WVGs-is-complete-for-np-pp,kac-rot:t:control-by-adding-players-to-change-or-maintain-shapley-shubik-or-the-penrose-banzhaf-power-index-in-WVGs-is-complete-for-np-pp},
\begin{itemize}
  \item the first summand of $q'$ can be achieved only by
	taking exactly one element from $\{a_i,b_i\}$ for all
	$i\in\{1,\ldots,n\}$ and summing them up, and
  \item the second summand of $q'$ cannot be achieved only by the elements
	$c_{j,s}$, with $j\in\{1,\ldots,m\}$ and $s\in\{0,\ldots,r\}$ 
\end{itemize}
In the case of $q''$, the situation is analogous.

Define the quota of the weighted voting game by
\begin{equation}\label{deletingplayers-decreasePBI:def:q}
  q = 2\cdot \left(w_A + w_B  + w_C + w_E + 10^{t}\right) + 1.
\end{equation}

	\begin{table}[t!]
			\caption{\label{tab:deleting-decrease-PBI}
				Groups of players in the proof of
				Theorem~\ref{thm:deleting-decrease-PBI}, with their numbers and weights}
			\centering
				\begin{tabular}{c|c|c}
					\toprule
					\textbf{Group} & \textbf{Number of Players} & \textbf{Weights} \\
					\midrule
					player~$1$ & $1$ & $1$ \\
					\midrule
					$A$ & $2k$ & $W_A$ \\
					\midrule
					$B$ & $2n-2k$ & $W_B$ \\
					\midrule
					$C$ & $m(r+1)$ & $W_C$ \\
					\midrule
					$D$ & $k$ & \makecell{$q-q'-ix-x'$, \\  $i\in\{0,\ldots,k-1\}$} \\
					\midrule
					$E$ & $2n+m(r+1)$ & $W_E$ \\
					\midrule
					$F$ & $1$ & $q-q''-x'$ \\
					\midrule	
					$S$ & $k^2 (k+2)$ & \makecell{$q-(a_i + b_i)-jy-lz$, \\ %
						$i \in \{1,\ldots,k\}$, \\ $j\in\{0,\ldots,k+1\}$, \\ %
						$l\in\{1,\ldots,k\}$} \\
					\midrule
					$T$ & $k^2 (n+1)$ & \makecell{$q-(a_i + b_i)-jy'-lz'$, \\ %
						$i \in \{1,\ldots,k\}$, \\ $j\in\{0,\ldots,n\}$, \\ %
						$l\in\{1,\ldots,k\}$} \\
					\midrule
					$U$ & $k (k+2)$ & \makecell{$q-jy^{*}-z^{*}_i$, \\ %
						$i\in\{1,\ldots,k\}$, \\ %
						$j\in\{0,\ldots,k+1\}$} \\
					\midrule
					$V$ & $k (n+1)$ & \makecell{$q-jy^{**}-z^{*}_i$, \\ %
						$i\in\{1,\ldots,k\}$, \\ %
						$j\in\{0,\ldots,n\}$} \\
					\midrule
					$X$ & $k$ & $x=1$ \\
					\midrule
					$X'$ & $2k$ & $x'=k+1$ \\
					\midrule
					$Y$ & $k+1$ & $y=(2k+1)x'$ \\
					\midrule
					$Y'$ & $n$ & $y'=(k+2)y$ \\
					\midrule
					$Y^{*}$ & $k+1$ & $y^{*}=(n+1)y'$ \\
					\midrule
					$Y^{**}$ & $n$ & $y^{**}=(k+2)y^{*}$ \\
					\midrule
					$Z$ & $k+1$ & $z=(n+1)y^{**}$ \\
					\midrule
					$Z'$ & $k+1$ & $z'=(k+2)z$ \\
					\midrule
					$Z^{*}$ & $k$ & $z^{*}_1,\ldots,z^{*}_k$ \\
					\bottomrule
				\end{tabular}
	\end{table}
	Finally, let $N$ be the set of players in the game $\mathcal{G}$ divided into the groups and with the weights presented in Table~\ref{tab:deleting-decrease-PBI}.

	Let us first discuss which coalitions player~$1$ can be pivotal for
	in the game $\mathcal{G}$.\footnote{For any $M\subseteq N\setminus\{1\}$, player~$1$ is pivotal for some subset of the coalitions in $\mathcal{G}_{\setminus M}$.}
	
	Player~$1$ is pivotal for those coalitions of players in
	$N\setminus \{1\}$ whose total weight is $q-1$.
	Note that any two players from
	\begin{equation}
		\label{eq:union-of-sets}
		D\cup F \cup S \cup T \cup U \cup V
	\end{equation}
	together have a	weight larger than~$q$.
	Therefore, at most one player from the union of sets
	in Equation~(\ref{eq:union-of-sets})
	can be in any coalition player~$1$ is pivotal for.  Moreover,
	by Equation~(\ref{deletingplayers-decreasePBI:def:q}), all other players
	together have a total weight smaller than $q-1$. 
	That means that any coalition $K \subseteq N\setminus \{1\}$ with a total weight of $q-1$ has to contain \emph{exactly} one of
	the players in
	the union of sets from Equation~(\ref{eq:union-of-sets}).
	Now, whether this player is in $D$, $F$, $S$,
	$T$, $U$, or $V$
	has consequences as to which other players will also be in such a
	weight-$(q-1)$ coalition~$K$.
	Accordingly, we distinguish the following six cases:

	\begin{description}\label{cases}
		\item[Case 1:]\label{case-1} If $K$ contains a player from~$D$ (with weight, say,
		\[
		q-q'-ix-x'
		\]
		for some~$i$, $0 \leq i \leq k-1$), $K$ also has to
		contain those players from $A\cup B \cup C$ whose weights sum up to $q'$, $i$ players from $X$ with weight~$x=1$, and a player from~$X'$.
		Also, recall that $q'$ can be achieved only by a set
		of players whose weights take exactly one of the values from
		$\{a_{j},b_{j}\}$ for each $j\in\{1,\ldots,n\}$,
		so $K$ must contain exactly $n$ players from  
		$A \cup B$.
		\smallskip
		
		\item[Case 2:]\label{case-2} If $K$ contains the player from 
		  $F$ with weight
                  \[
                  q-q''-x',
                  \]
                  $K$ has to contain those players from $E$ whose weights sum up to $q''$ and a player from~$X'$.
		\smallskip
		
		\item[Case 3:]\label{case-3} If $K$ contains a player from $S$ (with weight, say,
		\[
		q-(a_i + b_i)-jy-lz
		\]
		for some~$i$, $1 \leq i \leq k$, some~$j$, $0 \le j \le k+1$, and some~$l$,     
		$1\le l \le k$), $K$ also contains some pair of players from~$A$, $j$ players from~$Y$, and $l$ players from~$Z$.
		\smallskip
		
		\item[Case 4:]\label{case-4} If $K$ contains a player from $T$ (with weight, say,
		\[
		q-(a_i + b_i)-jy'-lz'
		\]
		for some~$i$, $1 \leq i \leq k$, some~$j$, $0 \le j \le n$, and some~$l$,     
		$1\le l \le k$), $K$ also contains some pair of players from~$A$, $j$ players from~$Y'$, and $l$ players from~$Z'$.
		\smallskip
		
		\item[Case 5:]\label{case-5} If $K$ contains a player from $U$ (with weight, say,
		\[	  
		q-jy^{*}-z^{*}_i
		\]
		for some~$i$, $1\le i\le k$, and for some~$j$, $0 \le j \le k+1$), $K$ also contains $j$ players from $Y^{*}$ and a player from~$Z^{*}$.
		\smallskip
		
		\item[Case 6:]\label{case-6} If $K$ contains a player from $V$ (with weight, say,
		\[
		q-jy^{**}-z^{*}_i
		\]
		for some~$i$, $1\le i\le k$, and for some~$j$, $0 \le j \le n$), $K$ also contains $j$ players from $Y^{**}$ and a player from~$Z^{*}$.
	\end{description}

	For any $\np$ problem $L$ (represented by an NP machine~$M_L$) and input~$x$, let $\#L(x)$ denote the number of solutions of $M_L$ on input~$x$.
	For example, using the ``standard'' NP machine~$M_{\textsc{SAT}}$ that, given a boolean formula, guesses its truth assignments and verifies whether they are satisfying, $\#\textsc{SAT}(\phi)$ denotes the number of truth assignments satisfying~$\phi$.
	Let $\xi = \#\textsc{SAT}(\phi)$.

	Let \textsc{SubsetSum} (see, e.g.,~\cite{gar-joh:b:int}) be the following well-known, $\np$-complete problem:
\EP{SubsetSum}
   {Positive integer sizes $s_1, \ldots , s_d$ and a positive integer~$\alpha$.}
   {Is there a subset $H \subseteq \{1, \ldots , d\}$ such that
     \[
     \sum_{h \in H} s_h = \alpha?
     \]
     \vspace*{-3mm}
   }

	Using the result by Kaczmarek and Rothe~\cite{kac-rot:c:control-by-adding-players-to-change-or-maintain-shapley-shubik-or-the-penrose-banzhaf-power-index-in-WVGs-is-complete-for-np-pp,kac-rot:t:control-by-adding-players-to-change-or-maintain-shapley-shubik-or-the-penrose-banzhaf-power-index-in-WVGs-is-complete-for-np-pp},
	\begin{eqnarray*}
		\xi & = & \#\textsc{SubsetSum}(A\cup B\cup C, q')\\
		& = & \#\textsc{SubsetSum}(E,q'').
	\end{eqnarray*}
	
	Considering all the cases above, player~$1$'s Penrose--Banzhaf index in the game $\mathcal{G}$ yields to
	\begin{align*}
		\PenroseBanzhaf(\mathcal{G},1)
		~=~ & \frac{2k(2^k -1)\xi+ 2k\xi + k2^{k+1} (2^{k+1} - 2)}{2^{|N|-1}} \\
		& {}+ \frac{k2^n (2^{k+1} -2)+k2^{k+1} + k2^n}{2^{|N|-1}} \\
		~=~ & \frac{2k 2^k\xi + k (2^{n} + 2^{k+1})(2^{k+1} - 1)}{2^{|N|-1}}.
	\end{align*}

	We now show that $(\phi, k)\in\textsc{E-Minority-SAT}$ if and only if $(\mathcal{G}, 1, k)\in$
	\textsc{Control-by-Deleting-Players-to-Decrease-PBI}.

	\proofonlyif
	Suppose that $(\phi, k) \in \textsc{E-Minority-SAT}$,
	i.e., there exists an assignment of $x_1,\ldots,x_k$ such that at most half of the assignments of the remaining $n-k$ variables satisfy
	the Boolean formula~$\phi$.  Let us fix one of these
	satisfying assignments.
	From this fixed assignment, the first
	$k$ positions correspond uniquely to some players $A' \subseteq A$, $|A'|=|A\setminus A'|=k$.
	Let us remove the players in $A\setminus A'$ from the game~$\mathcal{G}$
	(see~\cite{kac-rot:c:control-by-adding-players-to-change-or-maintain-shapley-shubik-or-the-penrose-banzhaf-power-index-in-WVGs-is-complete-for-np-pp,kac-rot:t:control-by-adding-players-to-change-or-maintain-shapley-shubik-or-the-penrose-banzhaf-power-index-in-WVGs-is-complete-for-np-pp} for details).
	
	Since there are at most $2^{n-k-1}$ (i.e., fewer than $2^{n-k-1}+1$) assignments for
	$x_{n-k},\ldots,x_n$ which---together with the fixed assignments for
	$x_1,\ldots,x_k$---satisfy~$\phi$, there are fewer
	than $2^{n-k-1}+1$ subsets of $A\cup B \cup C$ such that the players' weights
	in each subset sum up to~$q'$. Player~$1$  is still pivotal for all coalitions described in Cases~$2$, $5$, and~$6$, for
	fewer than
	\[
        2k(2^k -1)(2^{n-k-1}+1)
        \]
        coalitions described in Case~$1$, and player~$1$ is not pivotal for any coalition described in Cases~$3$ and~$4$ anymore.
	Therefore,
		\begin{eqnarray*}
			\lefteqn{\PenroseBanzhaf(\mathcal{G}_{\setminus (A\setminus A')},1)}\\
			& < & \frac{2k\left(2^k -1\right)\left(2^{n-k-1}+1\right) + 2k\xi + k\left(2^{n}+2^{k+1}\right)}{2^{|N|-1-k}} \\
			& = & \PenroseBanzhaf(\mathcal{G},1),
		\end{eqnarray*}
	so player~$1$'s Penrose--Banzhaf index is strictly smaller in 
	$\mathcal{G}_{\setminus (A\setminus A')}$ than in~$\mathcal{G}$,
	i.e., we have constructed a yes-instance of our control problem.

	\proofif
	Assume now that $(\phi, k) \not\in \textsc{E-Minority-SAT}$,
	i.e., there does not exist any
	assignment of the variables $x_1,\ldots,x_k$ such that at most half
	of assignments of the remaining $n-k$ variables satisfy the Boolean
	formula~$\phi$.  In other words, for each assignment of $x_1,\ldots,x_k$,
	there exist at least
        \[
        2^{n-k-1}+1
        \]
        assignments of $x_{k+1},\ldots,x_n$
	that satisfy~$\phi$.
        This also means that 
	\begin{align}\label{xi}
		2^n & \ge  \xi \ge 2^k \left(2^{n-k-1}+1\right).
	\end{align}
	
	Let us start with the case considered in the previous implication, i.e., the case of deletion of $k$ players $A\setminus A'$ that correspond to some truth assignment for $\phi$ (but this time, we consider any of the assignments):
	\begin{eqnarray*}
		\lefteqn{\PenroseBanzhaf(\mathcal{G}_{\setminus (A\setminus A')},1)} \\
		& \ge & \frac{2k(2^k -1)(2^{n-k-1}+1) + 2k\xi + k(2^{n}+2^{k+1})}{2^{|N|-1-k}} \\
		& = & \PenroseBanzhaf(\mathcal{G},1). 
	\end{eqnarray*}
	
	Now, we are going to analyze other possible deletions.
	First, note that the number of coalitions from Case~$4$ is not smaller than the number of coalitions from any other case:
	\begin{eqnarray}
		\label{thelargest:3}
		k2^n (2^{k+1}-2) & \ge & k2^{k+1}(2^{k+1}-2), \\
		\label{thelargest:1or2}
		k2^n (2^{k+1}-2) %
		& \ge & 2k\xi (2^k - 1) > 2k\xi, \text{ and } \\
		\label{thelargest:5and6}
		k2^n (2^{k+1}-2) & \stackrel{k\ge 4}{>} & k(2^n + 2^n) \ge k(2^n + 2^{k+1}).
	\end{eqnarray}
	
	Let us consider now the summands defined by Cases~$5$ and~$6$.
	Removing
        \[
        l_1 + l_2 + l_3 = l < k
        \] 
        players from~$U$,\footnote{Note that if $l=k$, the index will increase by Equation~(\ref{thelargest:5and6}) and that the summands can decrease to zero only if we remove all $k$ players from $Z^{*}$.}
	$Y^{*}$, and~$Z^{*}$, respectively (i.e., those players forming coalitions from Case~$5$; the situation of the summand defined by Case~$6$ is analogous). Let us analyze first when the numerator decreases the most so that we do not have to consider all possible combinations for fixed~$l_1$, $l_2$, and~$l_3$: 
	After removing the $l_2$ players from $Y^{*}$ (and since they have the same weight, they are symmetric and it does not matter which ones we delete), it will be impossible for player~$1$ to be pivotal for the coalitions formed, i.e., by the players from $U$ that need more than
        \[
        k+1-l_2
        \]
        players from $Y^{*}$ to achieve their total weight of~$q-1$, so we do not need to delete these players from the game---it would not decrease player~$1$'s index any more.
        
	Also, the players from $U$ forming the most coalitions after the deletion of the mentioned players from $Y^{*}$ are players with weights of the form
		\[
		q - \left\lfloor \frac{k+1-l_2}{2} \right\rfloor y^{*}-z^{*}_i,
		\]
	        \hspace*{-1.8mm}$1\le i \le k$, each forming
                \[
                  {k+1-l_2 \choose \floor{\frac{k+1-l_2}{2}}}
                  \]
                  coalitions with weight $q-1$, so we get the biggest decrease if we remove $l_1$ of those players with $i$ such that the players with weight~$z^{*}_i$ appear in the new game.
                  
	Therefore, the summand defined by Case~$5$ changes to at least
	\begin{eqnarray*}
		& & \frac{(k-l_3)2^{k+1-l_2} - l_1{k+1-l_2 \choose \floor{\frac{k+1-l_2}{2}}}}{2^{|N|-1-l}} \\ 
		& \ge & \frac{k2^{k+1}}{2^{|N|-1}} + 2^{k+1}\frac{ \left(2^{l_1 + l_3}-1\right) k -  2^{l_1 + l_3 }\left(\frac{1}{2}l_1 + l_3\right)}{2^{|N|-1}}  
	\end{eqnarray*}
	and because
        \[
	2^{l_1 + l_3}  \ge l_1 + l_3 +1 \text{ for $l_1 + l_3 \ge 0$, and }
	k\ge l_1 + l_3 + 1,
	\]
	we have
	\[
	\left(2^{l_1 + l_3}-1\right)k-  2^{l_1 + l_3 }\left(\frac{1}{2}l_1 + l_3\right) \ge 0,
	\]
	so 
	\begin{align}\label{case:d}
		\frac{(k-l_3)2^{k+1-l_2} - l_1{k+1-l_2 \choose \floor{\frac{k+1-l_2}{2}}}}{2^{|N|-1-l}} &  \ge \frac{k2^{k+1}}{2^{|N|-1}}, 
	\end{align}
	i.e., the summand does not decrease (and if we also delete some $l'$ players from
        \[
        N\setminus (\{1\}\cup U \cup Y^{*} \cup Z^{*}),
        \]
        the summand's value will increase by a factor of~$2^{l'}$).

	Let us now consider Case~$3$.
	Let $l_1$, $l_2$, and $l_3$ with
        \[
        l_1+l_2+l_3 = l
        \]
        be the number of players being removed from~$S$, $Y$, and~$Z$, respectively.
	Let $l'<k$ (the case of $l'=k$ was already analyzed) be the number of players deleted from $A$ such that no ``whole pair'' $(a_i,b_i)$ for any $i\in\{1,\ldots,k\}$ is deleted, and let $1\le l+l' \le k$.
	Consider the case giving us the least number of coalitions for which $1$ stays pivotal.
	The players from $Y$ and $Z$ are symmetric, so it does not matter which ones we remove.
	In the new game, there are $k+1-l_2$ players with weight~$y$, $k+1-l_3$ players with weight~$z$, and $k-l'$ whole pairs in~$A$.
	Since player~$1$ cannot be pivotal for the coalitions that need more players from $Y$ or $Z$ than there are in the new game anymore, and for coalitions that need the removed players from $A$, we do not need to remove the players from $S$ forming coalitions that either need more players from $Y$ or $Z$ or the whole pair from $A$ that is not there anymore to achieve the total weight~$q-1$.
	The most coalitions (after deleting the $l_2 + l_3 + l'$ players) are formed by players from $S$ having weights of the form
	\begin{small}
		\[
		q-(a_i + b_i) - \left\lfloor \frac{k+1-l_2}{2} \right\rfloor y - \left\lceil \frac{k+1-l_3}{2} \right\rceil z
		\]
	\end{small}
	\hspace*{-2mm}(each forming
        \[
          {k+1-l_2 \choose \floor{\frac{k+1-l_2}{2}}}{k+1-l_3 \choose \ceil{\frac{k+1-l_3}{2}}}
          \]
          coalitions of weight \mbox{$q-1$}), and there are
	at least $k-l'\ge l_1$ players with such a weight.
	Let us remove any $l_1$ players with such a weight.
	From Case~$3$ in the new game, it can be seen that 
	\begin{small}
		\begin{align*}
			& \frac{(k-l')2^{k+1-l_2}\left(2^{k+1-l_3}-1\right) - l_1 {k+1-l_2 \choose \floor{\frac{k+1-l_2}{2}}}{k+1-l_3 \choose \ceil{\frac{k+1-l_3}{2}}}}{2^{|N|-1-l-l'}} \\
			\ge ~ & \frac{k2^{k+1}\left(2^{k+1}-2\right)}{2^{|N|-1}} \\
			& + \frac{(k-l')2^{2k + l_1 + l'} + k2^{k+2}  - (k-l')2^{k+1+l_1 + l_3 + l'}}{2^{|N|-1}}, \\
		\end{align*}
	\end{small}
	\hspace*{-0.7mm}and for $\left(2^{l_1 + l_3 + l'}-1\right)\frac{k2^{k+1}}{2^{|N|-1}}$ from the increase of the summand defined by Case~$5$, we have
	\begin{align*}
		& \frac{(k-l')2^{k+1-l_2}\left(2^{k+1-l_3}-1\right) - l_1 {k+1-l_2 \choose \floor{\frac{k+1-l_2}{2}}}{k+1-l_3 \choose \ceil{\frac{k+1-l_3}{2}}}}{2^{|N|-1-l-l'}} \\
		& + \left(2^{l_1 + l_3 + l'}-1\right)\frac{k2^{k+1}}{2^{|N|-1}} > ~ \frac{k2^{k+1}\left(2^{k+1}-2\right)}{2^{|N|-1}}.
	\end{align*}
	Note that if we removed some whole pair $(a_i,b_i)$, $i\in\{1,\ldots,k\}$, from $A$, we would get exactly the same number of coalitions for the summands as in the case of removing only one of the players from the pair, i.e., the new game would just contain fewer players, so the summands would even be larger (since the denominator would be smaller at the same time).
	Removing players from the rest of the groups will only even more increase the summands above.

	Let us consider removing
        \[
        l_1 + l_2 + l_3 + l' = l + l' \le k
        \]
        players from~$T$, $Y'$, $Z'$, and~$A$, respectively, assuming that $l'<k$ and no whole pair $(a_i,b_i)$ for any $i\in\{1,\ldots,k\}$ is removed from the game.
	Let us focus on the summand defined by Case~$4$: %
	It can be shown that the summand together with (a part of) the increase from the summand defined by Case~$6$,
        \[
        \left(2^{l + l'}-1\right)\frac{k2^n}{2^{|N|-1}},
        \]
        is greater than in~$\mathcal{G}$.
	However, removing these players can have an impact on the summand from Case~$1$: Unless $l_1=l_2=l_3=0$ and $l'>0$ (the latter is assumed because otherwise, we would make no changes among the players forming the coalitions counted in these cases), this summand alone increases; but in the other case, this summand changes to
	at least
	\begin{align*}
		& \frac{2k\left(2^k -1\right)2^{k-l'}\left(2^{n-k-1}+1\right)}{2^{|N|-1-l'}} \\
		= ~ & \frac{2k\left(2^k -1\right)2^{n}}{2^{|N|-1}} -  \frac{2k\left(2^k -1\right)\left(2^{n-1}-2^{k}\right)}{2^{|N|-1}} \\
		\stackrel{(\ref{xi})}{\ge} ~ & \frac{2k\left(2^k -1\right)\xi}{2^{|N|-1}} -  \frac{2k\left(2^k -1\right)\left(2^{n-1}-2^{k}\right)}{2^{|N|-1}}.
	\end{align*}
	It also can be shown that in this case, the increase of the summand from Case~$4$ is greater
	than the subtrahend above.
	Moreover, if we deleted at least one more player from $A$ from a pair $(a_i, b_i)$ from which we have already removed a player in the cases above or any other player that forms some coalitions from Case~$1$ causing the decrease of the summand to~$0$, it will not change the number of coalitions counted in the other summand, but will decrease its denominator and therefore, by Equation~(\ref{thelargest:1or2}), the sum will still increase.
	
	The summands defined by Cases~$1$ and~$2$ might collapse to $0$ also by deleting one player
	from each case.	
	But then, 	 
	if we also delete a player (or players) decreasing   
	the summands defined by Cases~$1$ and~$2$ to~$0$---noting that there have to be removed at least two players to decrease both the summands---the increase of the summand from Case~$5$ is greater than them by Equation~(\ref{thelargest:1or2}).  
	
	Summarizing the above analysis:
	\begin{enumerate}
		\item Removing $k$ players as it was in the previous implication but for any set corresponding to a truth assignment for the first $k$ variables, the index increases in the new game created from $\mathcal{G}$.
		\item If we remove some of the players forming coalitions from Cases~$5$ and~$6$, there are two possibilities:
		\begin{itemize}
			\item Either we remove $k$ players only from~$Z^{*}$, excluding all the coalitions from the index in the worst situation; however, by Equation~(\ref{thelargest:5and6}), the new index is greater due to the coalitions from Case~$4$,
			\item or we remove fewer than $k$ players---deleting or not other players forming coalitions from the other cases---then, the summands defined by Cases~$5$ and~$6$ increase
			(each nondecreases, but either no other players were removed and the other summands increase, or some other player was removed, then this summand increases).	
		\end{itemize}
		\item Removing players forming the coalitions from Case~$3$ makes the sum of this summand together with the increase of the summand defined by Case~$5$ in the new game greater.
		\item If we remove some players forming the coalitions from Case~$4$, this summand (possibly
		together with the summand from Case~$6$) increases, and when the summand from Case~$1$ decreases (assuming we remove only the players that are part of the coalitions from Case~$4$), the sum of the two summands still increases.
		\item If we remove the players that are crucial for the summands from Cases~$1$ and~$2$ (not studied in the previous situation) and they decrease even to~$0$, the index increases due to Equation~(\ref{thelargest:1or2})
		and the fact that the summand by Case~$6$ increases.
	\end{enumerate}
	
	When removing other combinations of players, the Penrose--Banzhaf index of player~$1$ cannot be smaller than in the cases we presented above.
	Thus the Penrose--Banzhaf index of player~$1$
	cannot decrease for $\mathcal{G}$ by deleting up to $k$ players from $N\setminus\{1\}$ in game~$\mathcal{G}$, so we have
	a no-instance of our control problem.~\end{proof}

As mentioned earlier, the proof of $\np^{\PP}$-completeness for the other two problems
can be found in the appendix.
\begin{theorem}\label{thm:deleting-remaining-PBI}
		The problems
		\textsc{Control-by-Deleting-Players-to-Nonincrease-PBI} and
		\textsc{Control-by-Deleting-Players-to-Maintain-PBI}
		are $\np^{\PP}$-complete. 
\end{theorem}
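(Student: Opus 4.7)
\medskip\noindent\textbf{Proof proposal.}
The $\np^{\PP}$ upper bound was observed by Rey and Rothe~\cite{rey-rot:j:structural-control-in-weighted-voting-games} for every deletion-based control problem and applies to both goals; what remains is $\np^{\PP}$-hardness for each. The plan is to reuse the gadget of Theorem~\ref{thm:deleting-decrease-PBI}---the groups $A$, $B$, $C$, $D$, $E$, $F$, $S$, $T$, $U$, $V$ together with the auxiliaries $X$, $X'$, $Y$, $Y'$, $Y^{*}$, $Y^{**}$, $Z$, $Z'$, $Z^{*}$ of Table~\ref{tab:deleting-decrease-PBI}---and to adapt it separately for each target relation.

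For \textsc{Control-by-Deleting-Players-to-Nonincrease-PBI} I would again reduce from \textsc{E-Minority-SAT}. The ``only if'' direction is free, since any deletion witnessing decrease also witnesses nonincrease. For the ``if'' direction the case analysis of Theorem~\ref{thm:deleting-decrease-PBI} must be sharpened so that whenever $(\phi,k)\notin\textsc{E-Minority-SAT}$ and at least one player is removed, the index \emph{strictly} increases rather than merely nondecreases. Items~2--5 of the summary at the end of that proof already deliver strict increase whenever a player outside the intended ``$k$-from-$A$'' deletion is touched, via the strict bounds~(\ref{thelargest:3})--(\ref{thelargest:5and6}) and the fact that any deletion shrinks the denominator $2^{|N|-1}$ and thus strictly enlarges every unaffected summand. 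The delicate case is item~1, the deletion treated via Equation~(\ref{xi}); there the same calculation has to be re-audited to force strict increase rather than admit equality, which can be arranged by nudging the multiplicities of $S$ and $T$ by a small additive constant so that the intended deletion gives the pointwise minimum of the index and every other pattern strictly exceeds it.

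For \textsc{Control-by-Deleting-Players-to-Maintain-PBI} I would instead reduce from the exact variant alluded to after \textsc{E-Minority-SAT}, where one asks for a partial assignment making \emph{exactly} a prescribed number $\ell$ of extensions satisfy $\phi$, and which is also $\np^{\PP}$-complete by~\cite{kac-rot:c:control-by-adding-players-to-change-or-maintain-shapley-shubik-or-the-penrose-banzhaf-power-index-in-WVGs-is-complete-for-np-pp,kac-rot:t:control-by-adding-players-to-change-or-maintain-shapley-shubik-or-the-penrose-banzhaf-power-index-in-WVGs-is-complete-for-np-pp}. Writing $\nu$ for the number of satisfying extensions of the assignment induced by a ``$k$-from-$A$'' deletion, the Case~1 contribution becomes $2k(2^k-1)\nu/2^{|N|-1-k}$ while Cases~2--6 contribute a fixed quantity; choosing the quota offset used for $D$ and the cardinality of $A$ so that the total post-deletion index equals $\PenroseBanzhaf(\mathcal{G},1)$ precisely when $\nu=\ell$ turns maintenance into exactness. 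The strict-change analysis outlined for the nonincrease goal then rules out every other deletion pattern, so no accidental equality can occur.

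The main obstacle is the maintain case: the arithmetic must produce equality \emph{if and only if} the exact counter hits~$\ell$. This requires simultaneously calibrating the weights and multiplicities of $U$, $V$, $Y^{*}$, $Y^{**}$, $Z^{*}$, $S$, and $T$ so that (i)~the intended deletion reduces the maintenance condition to the single equation $\nu=\ell$, and (ii)~every non-intended deletion produces a strictly larger \emph{or} a strictly smaller index, ruling out coincidences in both directions. The strict-inequality audit required for~(ii) is the bulk of the technical work, which is why the full proofs are deferred to the appendix.
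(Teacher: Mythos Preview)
Your plan for \textsc{Nonincrease} is close to the paper's, though the fix is different in detail. The paper does not ``nudge'' $S$ and $T$; it simply drops the groups $S$, $U$, $Y$, $Y^{*}$, $Z$ entirely, working with the subgame $\mathcal{G}'=\mathcal{G}_{\setminus(S\cup U\cup Y\cup Y^{*}\cup Z)}$. The effect is to shift the equality point of the intended $k$-from-$A$ deletion from $\nu=2^{n-k-1}+1$ down to $\nu=2^{n-k-1}$, so that ``$\nu\le 2^{n-k-1}$'' (E-Minority-SAT) lines up exactly with ``index nonincreases,'' while for a no-instance every deletion---including the intended one, now with $\nu\ge 2^{n-k-1}+1$---gives a strict increase. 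Your idea of perturbing multiplicities aims at the same threshold shift and could be made to work, but the paper's subtraction is cleaner and avoids re-auditing Cases~3 and~5 after a perturbation.

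For \textsc{Maintain} there is a genuine gap. You correctly pick the exact-count source problem, but your proposed calibration---``choosing the quota offset used for $D$ and the cardinality of $A$''---does not provide enough freedom: $|A|=2k$ is forced by $k$, and the offsets in $D$ only control which subsets of $X$ and $X'$ appear, not the target count. The missing idea is how to make an \emph{arbitrary} integer $\ell$ appear as the coefficient in the index formula. The paper does this by writing $\ell=2^{\delta_1}+\cdots+2^{\delta_h}$, adding $h$ new player groups $L_1,\ldots,L_h$ of sizes $\delta_1,\ldots,\delta_h$, and replicating each of $S$, $T$, $U$, $V$ into $h$ families $S_i,T_i,U_i,V_i$ whose coalitions additionally draw from $L_i$; this manufactures a multiplicative factor $\sum_i 2^{\delta_i}=\ell$ in the Case~3--6 counts and makes the intended deletion yield equality iff $\nu=\ell$. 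There is also a subtlety you do not anticipate: one must restrict to instances with $h\ge 2$ (i.e., $\ell$ not a power of~$2$), proved separately to remain $\np^{\PP}$-hard, because with $h=1$ deleting players from $L_1$ would preserve the product $2^{|L'|}\ell'=\ell$ and hence maintain the index without encoding any SAT information---exactly the ``accidental equality'' you worry about. Your outline does not supply either ingredient.
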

\OMIT{
	\begin{proofsketch}
		(1) The proof is essentially analogous to the proof of Theorem~\ref{thm:deleting-decrease-PBI}, but it is enough to consider the game~$\mathcal{G'}=\mathcal{G}_{\setminus(S\cup U \cup Y \cup Y^{*} \cup Z)}$ here, i.e., with player set
		\[
		N'=N\setminus (S\cup U \cup Y \cup Y^{*} \cup Z)
		\]
		instead of~$N$.
		
		(2) The proof is similar to the other proofs, but the reduction is created from a special case of the $\textsc{E-Exact-SAT}$ problem where the given number $\ell$ is not a power of $2$ (the fact that the complexity does not change with this restriction is shown in the appendix).
		First, we define $\delta_1,\ldots,\delta_h,$ $h\in\mathbb{N}$, fulfilling $\delta_1 > \cdots >\delta_h$ and
		\[
		\ell = 2^{\delta_1} + \cdots + 2^{\delta_h}
		\]
		for $h, \delta_1\le n$, $h\ge 2$.
		For these values, we add $h$ groups of new players $L_i$, $i\in\{1,\ldots,h\}$, containing $\delta_i$ players of weight~$\delta_i '$ with $\delta_1 ' =1$ and $\delta_j ' = (\delta_{j-1}+1)\delta_{j-1}'$ for $j\in\{2,\ldots,h\}$.
		We redefine the value of weight~$x$ to be equal to $(\delta_h +1)\delta_h '$.
		We add two more players to the set~$Y^{**}$ with the same weights as the other players in this group.
		Finally, we replace the sets~$S$, $T$, $U$, and $V$ with the following groups for $i\in\{1,\ldots,h\}$:
		\begin{table}[H]
			\begin{center}
				\begin{tabular}{c|c}
					\toprule
					\textbf{Group} & \textbf{Weights} \\
					\midrule
					$S_i$  & \makecell{$q-(a_{i'} + b_{i'})-jy-j' z - j'' \delta_i '$ \\ for $i' \in \{1,\ldots,k\}$, $j\in\{0,\ldots,k+1\}$, \\  
						$j'\in\{0,\ldots,k\}$, and $j''\in\{0,\ldots,\delta_i\}$} \\
					\midrule	
					$T_i$  & \makecell{$q-(a_{i'} + b_{i'})-jy'-j' z' - j'' \delta_i '$ \\ for $i' \in \{1,\ldots,k\}$, $j\in\{0,\ldots,n+2\}$, \\ 	
						$j'\in\{0,\ldots,k\}$, and $j''\in\{0,\ldots,\delta_i\}$} \\
					\midrule
					$U_i$  & \makecell{$q-jy^{*}-(a_{i'} + b_{i'}) - j' \delta_i '$ \\ for $i'\in\{1,\ldots,k\}$, $j\in\{1,\ldots,k\}$, \\ and $j'\in\{0,\ldots,\delta_i\}$} \\
					\midrule
					$V_i$  & \makecell{$q-jy^{**}-z^{*}_{i'} - j' \delta_i '$ \\ for $i'\in\{1,\ldots,k\}$, $j'\in\{0,\ldots,\delta_i\}$, \\ and  $j\in\{0,0,1,2,\ldots,n+1,n+2,n+2\}$}\\
					\bottomrule
				\end{tabular}
			\end{center}
		\end{table}
	\end{proofsketch}
} %

\section{Conclusions}
\label{sec:conclusions}

We have shown that control by deleting players from WVGs with the goals of decreasing, nonincreasing, and maintaining a given player's Penrose--Banzhaf index is $\np^{\PP}$-complete.
We have employed the proof method of Kaczmarek and Rothe~\cite{kac-rot:c:control-by-adding-players-to-change-or-maintain-shapley-shubik-or-the-penrose-banzhaf-power-index-in-WVGs-is-complete-for-np-pp,kac-rot:t:control-by-adding-players-to-change-or-maintain-shapley-shubik-or-the-penrose-banzhaf-power-index-in-WVGs-is-complete-for-np-pp}, but our proofs are considerably more involved.
As a challenging task for future work, we propose to further develop this method
to solve the missing cases: the goals of increasing or nondecreasing the distinguished player's power for the Penrose--Banzhaf index and all five goals for the Shapley--Shubik index and other power indices. 
For instance, the normalized Penrose--Banzhaf index poses
greater technical challenge in the comparison to its alternative: It requires the analysis of coalitions, not only
those for which a distinguished player is pivotal, but we need the information about all players in a given game.  

We also propose to study these problems in the model of Kaczmarek and
Rothe~\cite{kac-rot:j:controlling-weighted-voting-games-by-deleting-or-adding-players-with-or-without-changing-the-quota}
where the quota is changing proportionally when players are removed.
Finally, this method might also help to close the huge complexity gap between a known $\np^{\PP}$ upper bound and much weaker
known lower bounds for other problems on WVGs (e.g., for false-name manipulation~\cite{azi-bac-elk-pat:j:false-name-manipulations-wvg,rey-rot:j:false-name-manipulation-PP-hard}), 
as well as on graph-restricted WVGs that are controlled by adding or deleting edges in the underlying communication network~\cite{kac-rot-tal:j:control-by-adding-or-deleting-edges-in-grwvgs}.

\begin{ack}
  We thank the reviewers for helpful comments; in particular, we thank the reviewer who suggested how to convey certain complexity issues more completely to the reader, which we were happy to follow to the extent permitted by limitation of space.
  This work was supported in part by DFG grant
  RO~1202/21-2 (project number 438204498).
\end{ack}

\bibliography{joannabibliography}

\clearpage

\appendix

\section{Full Proof of Theorem~\ref{thm:deleting-decrease-PBI} and of the First Part of Theorem~\ref{thm:deleting-remaining-PBI}}

\begin{proofs} [Full proof of $\np^{\PP}$-hardness of \textsc{Control-by-Deleting-Players-to-Decrease-PBI} (Theorem~\ref{thm:deleting-decrease-PBI}) and of \textsc{Control-by-Deleting-Players-to-Nonin\-crease-PBI} (first part of Theorem~\ref{thm:deleting-remaining-PBI}).]
	
	We will prove $\np^{\PP}$-hardness using a reduction from
	$\textsc{E-Minority-SAT}$.  Let $(\phi, k)$ be a given instance of
	$\textsc{E-Minority-SAT}$, where $\phi$ is a Boolean formula in CNF
	with variables $x_1,\dots,x_n$ and $m$ clauses,
	and $4\le k<n$.  
	Before we construct
	instances of our control problems from $(\phi, k)$, we need to
	choose some numbers and introduce some notation.

	Let $z^{*}_1 = (k+2)z'$ for $z'$ defined in 
	Table~\ref{tab:deleting-decrease-PBI}, and define
	\[
	z^{*}_i = z^{*}_1 + \sum_{j=1}^{i-1}z^{*}_j
	\]
	for $i\in\{2,\ldots,k\}$. 
	Let $t \in \mathbb{N}$ be such that 
	\begin{equation}
		10^{t} > \max\left\{2^{\ceil{\log_{2} n}+1}, 2z^{*}_k\right\}.
	\end{equation}
	For that $t$, let $A$, $B$, $C$, $a_i$, $b_i$, and $c_{j,s}$ for $i \in \{1,\ldots,n\}$, $j\in\{1,\ldots,m\}$, and $s\in\{0,\ldots,r\}$, and $q'$ be defined as in Definition~\ref{def:prereduction}.
	Finally, let
	\begin{eqnarray*}
		a_i '   & = & a_i \cdot 10^{t(m+1)+n},\\
		b_i '   & = & b_i \cdot 10^{t(m+1)+n}, \text{ and }\\
		c_{j,s}' & = & c_{j,s} \cdot 10^{t(m+1)+n},
	\end{eqnarray*}
	which together form the weight vector~$W_E$.
	
	Now, we construct
	the instances of our control problems.  Let $k$ be the limit for the
	number of players that can be deleted.
	Further, let 
	\[
	q'' = 10^{t(m+1)+n}q'.
	\]
	Note that,
	as shown by Kaczmarek and Rothe~\cite{kac-rot:c:control-by-adding-players-to-change-or-maintain-shapley-shubik-or-the-penrose-banzhaf-power-index-in-WVGs-is-complete-for-np-pp,kac-rot:t:control-by-adding-players-to-change-or-maintain-shapley-shubik-or-the-penrose-banzhaf-power-index-in-WVGs-is-complete-for-np-pp},
	the first summand of $q'$ can be achieved only by
	taking exactly one element from $\{a_i,b_i\}$ for all
	$i\in\{1,\ldots,n\}$ and summing them up, and that
	the second summand of $q'$ cannot be achieved only by the elements
	$c_{j,s}$, with $j\in\{1,\ldots,m\}$ and $s\in\{0,\ldots,r\}$ 	
	(in the case of $q''$ and the values in $W_E$, the situation is analogous).

        From our given instance $(\phi, k)$ of $\textsc{E-Minority-SAT}$, we now construct an instance $(\mathcal{G},1,k)$ of \textsc{Control-by-Deleting-Players-to-Decrease-PBI} and an instance $(\mathcal{G}',1,k)$ of \textsc{Control-by-Deleting-Players-to-Nonincrease-PBI}, i.e., the distinguished player in both games is~$1$, and at most $k$ players can be deleted from them.
        Define the quota of both weighted voting games, $\mathcal{G}$ and $\mathcal{G}'$, by
	\begin{equation}\label{addingplayers-increasePBI:def:q}
		q=2\cdot \left(w_A + w_B  + w_C + w_E + 10^{t}\right) +1.
	\end{equation}
        Let $N$ be the set of players in  
	game~$\mathcal{G}$, divided into the groups and with the weights presented in Table~\ref{tab:deleting-decrease-PBI}.
	Finally, the set of players in 
	game~$\mathcal{G}'$ is defined by $N'=N\setminus (S\cup U \cup Y \cup Y^{*} \cup Z)$.

	Let us first discuss which coalitions player~$1$ can be pivotal for
	in game $\mathcal{G}$.\footnote{For any $M\subseteq N\setminus\{1\}$, $1$ is pivotal for some subset of the coalitions in $\mathcal{G}_{\setminus M}$. Also note that $\mathcal{G}' = \mathcal{G}_{\setminus (N\setminus N')}$.}
	Player~$1$ is pivotal for those coalitions of players in
	$N\setminus \{1\}$ whose total weight is $q-1$.
	First, note that any two players from $D\cup 
	F \cup S \cup T \cup U \cup V$
	together have a
	weight larger than $q$.  Therefore, at most one player from $D\cup 
	F\cup S \cup T \cup U \cup V$
	can be in any coalition player~$1$ is pivotal for.  Moreover,
	by~(\ref{addingplayers-increasePBI:def:q}), all other players
	together have a total weight smaller than $q-1$. 
	That means that any coalition $K \subseteq N\setminus \{1\}$ with a total weight of $q-1$ has to contain \emph{exactly} one of
	the players in $D\cup 
	F \cup S \cup T \cup U \cup V$.  Now, whether this player is in $D$, $F$, $S$,
	$T$, 
	$U$, or $V$
	has consequences as to which other players will also be in such a
	weight-$(q-1)$ coalition~$K$:

	\begin{description}
		\item[Case 1:] If $K$ contains a player from~$D$ (with weight, say,
		\[
		q-q'-ix-x'
		\]
		for some~$i$, $0 \leq i \leq k-1$), $K$ also has to
		contain those players from $A\cup B \cup C$ whose weights sum up to $q'$, $i$ players from $X$ with weight~$x=1$, and a player from~$X'$.
		Also, recall that $q'$ can be achieved only by a set
		of players whose weights take exactly one of the values from
		$\{a_{i'},b_{i'}\}$ for each $i'\in\{1,\ldots,n\}$,
		so $K$ must contain exactly $n$ players from  
		$A \cup B$.
		\smallskip
		
		\item[Case 2:] If $K$ contains the player from 
		$F$ with weight
		\[
		q-q''-x',
		\]
		$K$ has to contain those players from $E$ whose weights sum up to $q''$ and a player from~$X'$.
		\smallskip
		
		\item[Case 3:] If $K$ contains a player from $S$ (with weight, say,
		\[
		q-(a_i + b_i)-jy-lz
		\]
		for some~$i$, $1 \leq i \leq k$, some~$j$, $0 \le j \le k+1$, and some~$l$,     
		$1\le l \le k$), $K$ also contains some pair of players from~$A$, $j$ players from~$Y$, and $l$ players from~$Z$.
		\smallskip
		
		\item[Case 4:] If $K$ contains a player from $T$ (with weight, say,
		\[
		q-(a_i + b_i)-jy'-lz'
		\]
		for some~$i$, $1 \leq i \leq k$, some~$j$, $0 \le j \le n$, and some~$l$,     
		$1\le l \le k$), $K$ also contains some pair of players from~$A$, $j$ players from~$Y'$, and $l$ players from~$Z'$.
		\smallskip
		
		\item[Case 5:] If $K$ contains a player from $U$ (with weight, say,
		\[	  
		q-jy^{*}-z^{*}_i
		\]
		for some~$i$, $1\le i\le k$, and for some~$j$, $0 \le j \le k+1$), $K$ also contains $j$ players from $Y^{*}$ and a player from~$Z^{*}$.
		\smallskip
		
		\item[Case 6:] If $K$ contains a player from $V$ (with weight, say,
		\[
		q-jy^{**}-z^{*}_i
		\]
		for some~$i$, $1\le i\le k$, and for some~$j$, $0 \le j \le n$), $K$ also contains $j$ players from $Y^{**}$ and a player from~$Z^{*}$.
	\end{description}
\OMIT{
	\begin{description}
		\item[Case 1:] If $K$ contains a player from~$D$ (with weight, say,
		$q-q'-ix-x'$ for some~$i$, $0 \leq i \leq k-1$), $K$ also has to
		contain those players from $A\cup B \cup 
		C$ whose weights sum up to $q'$, $i$ players from $X$ with 	
		weight~$x$, and a player from $X'$.  Also, recall that $q'$ can be achieved only by a set
		of players whose weights take exactly one of the values from
		$\{a_i,b_i\}$ for each $i\in\{1,\ldots,n\}$,
		so $K$ must contain exactly $n$ players from  
		$A \cup B$.
		
		\item[Case 2:] If $K$ contains the player from 
		$F$, $K$ has to contain those players from $E$ whose weights sum up to $q''$ and a player from $X'$.
		
		\item[Case 3:] If $K$ contains a player from $S$ (with weight, say,
		$q-(a_i + b_i)-jy-lz$ for some~$i$, $1 \leq i \leq k$, some~$j$, $0 \le j \le k+1$, and some~$l$, 		
		$1\le l \le k$), $K$ also contains some pair of players from $A$,	
		$j$ players from $Y$, and $l$ players from $Z$.
		
		\item[Case 4:] If $K$ contains a player from $T$ (with weight, say,
		$q-(a_i + b_i)-jy'-lz'$ for some~$i$, $1 \leq i \leq k$, some~$j$, $0 \le j \le n$, and some~$l$, 	
		$1\le l \le k$), $K$ also contains some pair of players from $A$, 	
		$j$ players from $Y'$, and $l$ players from $Z'$.
		
		\item[Case 5:] If $K$ contains a player from $U$ (with weight, say,
		$q-jy^{*}-z^{*}_i$ for some~$i$, $1\le i\le k$, and for some~$j$, $0 \le j \le k+1$, $K$ also contains $j$ players from $Y^{*}$ and a player from $Z^{*}$.
		
		\item[Case 6:] If $K$ contains a player from $V$ (with weight, say,
		$q-jy^{**}-z^{*}_i$ for some~$i$, $1\le i\le k$, and for some~$j$, $0 \le j \le n$, $K$ also contains $j$ players from $Y^{**}$ and a player from $Z^{*}$.
	\end{description}
} %
	
	For any $\np$ problem $L$ (represented by an NP machine~$M_L$) and input~$x$, let $\#L(x)$ denote the number of solutions of $M_L$ on input~$x$.
	For example, using the ``standard'' NP machine~$M_{\textsc{SAT}}$ that, given a boolean formula, guesses its truth assignments and verifies whether they are satisfying, $\#\textsc{SAT}(\phi)$ denotes the number of truth assignments satisfying~$\phi$.
	Let $\xi = \#\textsc{SAT}(\phi)$.
	
	Let \textsc{SubsetSum} (see, e.g.,~\cite{gar-joh:b:int}) be the well-known, $\np$-complete problem: 
	\EP{SubsetSum}
	{Positive integer sizes $s_1, \ldots , s_d$ and a positive integer~$\alpha$.}
	{Is there a subset $H \subseteq \{1, \ldots , d\}$ such that
		\[
		\sum_{h \in H} s_h = \alpha?
		\]
		\vspace*{-3mm}
	}
	Using a result by Kaczmarek and Rothe~\cite{kac-rot:c:control-by-adding-players-to-change-or-maintain-shapley-shubik-or-the-penrose-banzhaf-power-index-in-WVGs-is-complete-for-np-pp,kac-rot:t:control-by-adding-players-to-change-or-maintain-shapley-shubik-or-the-penrose-banzhaf-power-index-in-WVGs-is-complete-for-np-pp},
	\begin{eqnarray*}
		\xi & = & \#\textsc{SubsetSum}(A\cup B\cup C, q')\\
		& = & \#\textsc{SubsetSum}(E,q'').
	\end{eqnarray*}  
	
	Considering all the cases above, player~$1$'s Penrose--Banzhaf indices in game $\mathcal{G}$ and in game $\mathcal{G}'$ are as follows:
	\begin{align*}
		\PenroseBanzhaf(\mathcal{G},1) & =\frac{2k(2^k -1)\xi+ 2k\xi + k2^{k+1} (2^{k+1} - 2)}{2^{|N|-1}} \\
		& +\frac{ k2^n (2^{k+1} -2)+k2^{k+1} + k2^n}{2^{|N|-1}} \\
		& =\frac{2k 2^k\xi + k (2^{n} + 2^{k+1})(2^{k+1} - 1)}{2^{|N|-1}} ~ \textrm{ and } \\
		\PenroseBanzhaf(\mathcal{G}',1) & =\frac{2k(2^k -1)\xi+ 2k\xi  + k2^n (2^{k+1} -2) + k2^n}{2^{|N'|-1}} \\
		& =\frac{2k 2^k\xi + k 2^{n}(2^{k+1} - 1)}{2^{|N'|-1}}.
	\end{align*}

	We now show that the following statements are true, which shows the correctness of our reductions: 
	\begin{enumerate}
		\item  $(\phi, k)$ is a
		yes-instance of $\textsc{E-Minority-SAT}$ if and only if $(\mathcal{G}, 1, k)$ as defined above is a yes-instance of
		\textsc{Control-by-Deleting-Players-to-Decrease-PBI},
		\item  $(\phi, k)$ is a
		yes-instance of $\textsc{E-Minority-SAT}$ if and only if $(\mathcal{G}', 1, k)$ as defined above is a yes-instance of
		\textsc{Control-by-Deleting-Players-to-Nonincrease-PBI}.
	\end{enumerate}

	\proofonlyif
	Suppose that $(\phi, k)$ is a yes-instance of $\textsc{E-Minority-SAT}$,
	i.e., there exists an assignment of $x_1,\ldots,x_k$ such that at most half of assignments of the remaining $n-k$ variables satisfies
	the Boolean formula~$\phi$.  Let us fix one of these
	satisfying assignments.  From this fixed assignment, the first
	$k$ positions correspond to the players $A' \subseteq A$, $|A'|=|A\setminus A'|=k$,
	for which the players $A\setminus A'$ are removed from game~$\mathcal{G}$ and from game~$\mathcal{G}'$
	(see~\cite{kac-rot:c:control-by-adding-players-to-change-or-maintain-shapley-shubik-or-the-penrose-banzhaf-power-index-in-WVGs-is-complete-for-np-pp,kac-rot:t:control-by-adding-players-to-change-or-maintain-shapley-shubik-or-the-penrose-banzhaf-power-index-in-WVGs-is-complete-for-np-pp} for details).
	
	Since there are at most $2^{n-k-1}$, i.e., less than $2^{n-k-1}+1$, assignments for
	$x_{n-k},\ldots,x_n$ which---together with the fixed assignments for
	$x_1,\ldots,x_k$---satisfy~$\phi$, there are less
	than $2^{n-k-1}+1$ subsets of $A\cup B \cup C$ such that the players' weights
	in each subset sum up to~$q'$. Player~$1$  is still pivotal for all coalitions described in Case~$2$, Case~$5$ and Case~$6$, 
	for less than $2k(2^k -1)(2^{n-k-1}+1)$, coalitions described in Case~$1$, and is not pivotal for any coalition described in Case~$3$ and Case~$4$ anymore. Therefore,
	\begin{small}	
		\begin{eqnarray*}
		  \lefteqn{\PenroseBanzhaf(\mathcal{G}_{\setminus (A\setminus A')},1)}\\
                  & < & \frac{2k(2^k -1)(2^{n-k-1}+1) + 2k\xi + k(2^{n}+2^{k+1})}{2^{|N|-1-k}} \\
	          & = & \frac{k(2^k -1)(2^{n-k}+2)2^k + 2k \xi 2^k +k(2^n + 2^{k+1})2^k}{2^{|N|-1}} \\
		  & = & \frac{2k 2^k \xi + k(2^{n}+2^{k+1})(2^{k+1} -1)}{2^{|N|-1}}  = \PenroseBanzhaf(\mathcal{G},1) ~ \textrm{ and }\\
		  \lefteqn{\PenroseBanzhaf(\mathcal{G}_{\setminus (A\setminus A')}',1)} \\
                  & \le & \frac{2k(2^k -1)2^{n-k-1} + 2k\xi + k2^{n}}{2^{|N|-1-k}} \\
		  & = & \frac{k(2^k -1)2^{n-k}2^k + 2k \xi 2^k +k2^n 2^k}{2^{|N|-1}} \\
		  & = & \frac{2k 2^k \xi + k2^{n}(2^{k+1} -1)}{2^{|N|-1}}  = \PenroseBanzhaf(\mathcal{G}',1), 
		\end{eqnarray*}
	\end{small}
	\hspace*{-2.5mm}
	so player~$1$'s Penrose--Banzhaf power index is strictly smaller in 
	game~$\mathcal{G}_{\setminus (A\setminus A')}$ than in~$\mathcal{G}$, and is not greater in
	game~$\mathcal{G}_{\setminus (A\setminus A')}'$ than in~$\mathcal{G}'$,
	i.e., we have constructed yes-instances of our two control problems.

	\proofif
	Let us
	assume now that $(\phi, k)$ is a no-instance of $\textsc{E-Minority-SAT}$,
	i.e., there does not exist any
	assignment of the variables $x_1,\ldots,x_k$ such that at most half
	of assignments of the remaining $n-k$ variables satisfies the Boolean
	formula~$\phi$.  In other words, for each assignment of $x_1,\ldots,x_k$,
	there exist at least $2^{n-k-1}+1$ assignments of $x_{k+1},\ldots,x_n$
	that satisfy~$\phi$. This also means that 
	\begin{align}\label{app:xi}
		2^n & \ge  \xi \ge 2^k \left(2^{n-k-1}+1\right).
	\end{align}
	
	Let us start with the case considered in the
        \emph{``Only if''} direction of the proof above, i.e., the case of deleting $k$ players $A'$ that correspond to some truth assignment for $\phi$ (but this time, we consider any of the assignments).
        When deleting $A'$ from $N$ in~$\mathcal{G}$, we obtain:
	\begin{small}	
		\begin{eqnarray*}
		  \lefteqn{\PenroseBanzhaf(\mathcal{G}_{\setminus A'},1)} \\
                  & \ge & \frac{2k(2^k -1)(2^{n-k-1}+1) + 2k\xi + k(2^{n}+2^{k+1})}{2^{|N|-1-k}} \\
		  & = & \frac{k(2^k -1)(2^{n-k}+2)2^k + 2k \xi 2^k +k(2^n + 2^{k+1})2^k}{2^{|N|-1}} \\
		  & = & \frac{2k 2^k \xi + k(2^{n}+2^{k+1})(2^{k+1} -1)}{2^{|N|-1}}  \\
                  & = & \PenroseBanzhaf(\mathcal{G},1),
		\end{eqnarray*}
                and when deleting $A'$ from $N'$ in~$\mathcal{G}'$, we obtain:
		\begin{eqnarray*}
		  \lefteqn{\PenroseBanzhaf(\mathcal{G}_{\setminus A'}',1)} \\
                  & > & \frac{2k(2^k -1)2^{n-k-1} + 2k\xi + k2^{n}}{2^{|N|-1-k}} \\
		  & = & \frac{k(2^k -1)2^{n-k}2^k + 2k \xi 2^k +k2^n 2^k}{2^{|N|-1}} \\
		  & = & \frac{2k 2^k \xi + k2^{n}(2^{k+1} -1)}{2^{|N|-1}} \\
                  & = & \PenroseBanzhaf(\mathcal{G}',1). 
		\end{eqnarray*}
	\end{small}
	Thus we have constructed no-instances of our two control problems in the case we currently consider.

	Now, we are going to analyze
        all other possible deletions. 
	First note that 
	the number of coalitions from Case~$4$ is not smaller than the number of coalitions from any other case:
  \begin{eqnarray}
    \label{app:thelargest:3}
	k2^n \left(2^{k+1}-2\right) & \ge & k2^{k+1}\left(2^{k+1}-2\right),
	\\
        \nonumber
        k2^n \left(2^{k+1}-2\right) & = & 2k2^{n}\left(2^{k}-1\right) \\
                                   & \ge & 2k\xi \left(2^k - 1\right) ~>~ 2k\xi, \text{ and }
        \label{app:thelargest:1or2}
	\\
        \label{app:thelargest:5and6}
	k2^n \left(2^{k+1}-2\right) &
	\stackrel{k\ge 4}{>} & k\left(2^n + 2^n\right) ~\ge~ k\left(2^n + 2^{k+1}\right).
	\end{eqnarray}
      (\ref{app:thelargest:3}) compares the number of coalitions from Case~$4$ with those from from Case~$3$; (\ref{app:thelargest:1or2}) compares Case~$4$ with Cases~$1$ and~$2$; and (\ref{app:thelargest:5and6}) compares Case~$4$ with Cases~$5$ and~$6$.
      
	Consider the summands defined by Cases~$5$ and~$6$.
	Suppose we remove $l_1 + l_2 + l_3 = l < k$ players from~$U$, $Y^{*}$, and $Z^{*}$, respectively,\footnote{Note that if $l=k$, the index will increase by~(\ref{app:thelargest:5and6}), and that the summands can decrease to zero only if we remove all $k$ players from~$Z^{*}$.}
	i.e., players forming coalitions from Case~$5$ (the situation of the summand defined by Case~$6$ is analogous).
	Let us analyze first when the numerator decreases the most so that we do not have to consider all possible combinations for fixed $l_1$, $l_2$, and $l_3$: 
	After removing the $l_2$ players from $Y^{*}$ (and since they have the same weight, they are symmetric and it does not matter which ones we delete), it will be impossible for player~$1$ to be pivotal for the coalitions formed, i.a., by the players from $U$ that need more than $k+1-l_2$ players from $Y^{*}$ to achieve their total weight of~$q-1$, so we do not need to delete these players from the game---it would not decrease player~$1$'s index
	any more. Also, the players from $U$ forming at most number of coalitions after the deletion of the mentioned players from $Y^{*}$ are players with weight of form
	$$q-\left\lfloor \frac{k+1-l_2}{2} \right\rfloor y^{*}-z^{*}_i,$$ 
	$1\le i \le k$, each forming
	$${k+1-l_2 \choose \floor{\frac{k+1-l_2}{2}}}$$ 
	coalitions with weight~$q-1$, so we get the largest decrease if we remove $l_1$ from those players with $i$ such that the players with weight~$z^{*}_i$ appear in the new game	
	(note that $l_1 < k - l_3$). Therefore, the summand defined by Case~$5$ changes to at least 
	\begin{small}	
		\begin{align*}
			& \frac{(k-l_3)2^{k+1-l_2} - l_1{k+1-l_2 \choose \floor{\frac{k+1-l_2}{2}}}}{2^{|N|-1-l}} \\ 
			& =  \frac{2^l \Big((k-l_3)2^{k+1-l_2} - l_1{k+1-l_2 \choose \floor{\frac{k+1-l_2}{2}}}\Big)}{2^{|N|-1}} \\
			& = \frac{2^{l_1 + l_3}(k-l_3)2^{k+1} - 2^{l}l_1{k+1-l_2 \choose \floor{\frac{k+1-l_2}{2}}}}{2^{|N|-1}} \\
			& = \frac{k2^{k+1} + ((2^{l_1 + l_3}-1)k - 2^{l_1 + l_3}l_3) 2^{k+1} - 2^{l}l_1{k+1-l_2 \choose \floor{\frac{k+1-l_2}{2}}}}{2^{|N|-1}} \\
			& \ge \frac{k2^{k+1}}{2^{|N|-1}} + \frac{ (2^{l_1 + l_3}-1)k2^{k+1}- 2^{k+1+l_1 + l_3}l_3 - 2^{l}l_1 2^{k-l_2}}{2^{|N|-1}} \\
			& = \frac{k2^{k+1}}{2^{|N|-1}} + \frac{ (2^{l_1 + l_3}-1)k2^{k+1}- 2^{k+1+l_1 + l_3}l_3 - l_1 2^{k+l_1 + l_3}}{2^{|N|-1}} \\
			& = \frac{k2^{k+1}}{2^{|N|-1}} + 2^{k+1}\frac{ (2^{l_1 + l_3}-1)k-  2^{l_1 + l_3 }(\frac{1}{2}l_1 + l_3)}{2^{|N|-1}}  
		\end{align*}
	\end{small}
	and because $2^{l_1 + l_3}  \ge l_1 + l_3 +1$ for $l_1 + l_3 \ge 0$, and $k\ge l_1 + l_3 + 1$,
	\begin{align*}
		(2^{l_1 + l_3}-1)k & -  2^{l_1 + l_3 }(\frac{1}{2}l_1 + l_3)  \\ & \ge (2^{l_1 + l_3}-1)(l_1+l_3+1)-  2^{l_1 + l_3 }(l_1 + l_3) \\
		& = 2^{l_1 + l_3}-1 -l_1-l_3 \ge 0,
	\end{align*}
	so 
	\begin{align}\label{app:case:d}
		\frac{(k-l_3)2^{k+1-l_2} - l_1{k+1-l_2 \choose \floor{\frac{k+1-l_2}{2}}}}{2^{|N|-1-l}} &  \ge \frac{k2^{k+1}}{2^{|N|-1}}, 
	\end{align}
	i.e., the summand does not decrease (and if we delete also some $l'$ players from $N\setminus (\{1\}\cup U \cup Y^{*} \cup Z^{*})$, the value of the summand will increase by a factor of~$2^{l'}$).

	Now, let us consider only Case~$3$ (which is counted only for $\mathcal{G}$).
	Let $l_1,l_2,l_3$ with $l_1+l_2+l_3 = l$ be the number of players being removed from $S$, $Y$, and $Z$, respectively, let $l'<k$ (the case of $l'=k$ was already analyzed) be the number of players deleted from $A$ such that no whole pair $(a_i,b_i)$ for any $i\in\{1,\ldots,k\}$ is deleted and let $1\le l+l' \le k$. Let us consider the case giving us 
	the smallest number of coalitions for which $1$ stays pivotal. The players from $Y$ and $Z$ are symmetric, therefore it does not matter which ones we remove. In the new game, there are $k+1-l_2$ players with weight~$y$, $k+1-l_3$ players with weight~$z$, and $k-l'$ whole pairs in $A$. Since player~$1$ cannot be pivotal for the coalitions that need more players from $Y$ or $Z$ than there are in the new game anymore, and for coalitions that need the removed players from $A$, we do not need to remove the players from $S$ forming coalitions that either need more players from $Y$ or $Z$ or the whole pair from $A$ that is not there anymore to achieve the total weight~$q-1$. The most number of coalitions (after deleting the $l_2 + l_3 + l'$ players) are formed by players from $S$ having weight of form	
	\[
	q-(a_i + b_i) - \left\lfloor \frac{k+1-l_2}{2}\right\rfloor y - \left\lceil\frac{k+1-l_3}{2}\right\rceil z
	\]
	(each forming 
	$${k+1-l_2 \choose \floor{\frac{k+1-l_2}{2}}}{k+1-l_3 \choose \ceil{\frac{k+1-l_3}{2}}}$$ 
	coalitions of weight $q-1$), and there are not less than $k-l'\ge l_1$ players with such a weight. Let us remove any $l_1$ players with such a weight. From Case~$3$ in the new game, we get the summand of at least the following value in player~$1$'s index---let us consider it dividing it into four cases: 
	\begin{enumerate}
		\item $l_3 > 0 \wedge l_1 = 0$ :
		\begin{align*}
			& \frac{(k-l')2^{k+1-l_2}(2^{k+1-l_3}-1)}{2^{|N|-1-l-l'}} \\  = ~ & \frac{2^{l+l'}(k-l')2^{k+1-l_2}(2^{k+1-l_3}-1)}{2^{|N|-1}} \\
			= ~ & \frac{2^{l'}(k-l')2^{k+1}(2^{k+1}-2^{l_3})}{2^{|N|-1}}  
		\end{align*}
		and since $2^{l'}\ge l'+1$ for $l'\ge 0$, and $l'<k$,
		\begin{align*}
			(2^{l'}-1)k & \ge (2^{l'}-1)(l'+1) \\
			& = 2^{l'}l' + 2^{l'} - l' - 1 \\
			& \ge 2^{l'} l', 
		\end{align*}
		which implies
		\begin{align}\label{k:inequality}
			2^{l'}(k-l') & \ge k, 
		\end{align} 
		therefore,
		\begin{small}	
			\begin{align*}
				& \frac{(k-l')2^{k+1-l_2}(2^{k+1-l_3}-1)}{2^{|N|-1-l-l'}} \\ \ge ~ & 
				\frac{k2^{k+1}(2^{k+1}-2^{l_3})}{2^{|N|-1}}  \\
				= ~ & \frac{k2^{k+1}(2^{k+1}-2)}{2^{|N|-1}} - \frac{k2^{k+1}(2^{l_3}-2)}{2^{|N|-1}} \\
				= ~ & \frac{k2^{k+1}(2^{k+1}-2)}{2^{|N|-1}} - 2^{l_3}\frac{k(2^{k+1}-2)}{2^{|N|-1}} - \frac{2^{l_3+1}k}{2^{|N|-1}} + \frac{k2^{k+2}}{2^{|N|-1}}.
			\end{align*}
		\end{small}
		From (\ref{app:case:d}), there is an increase of the summand defined by Case~$5$ by at least $\alpha=(2^{l_3}-1)\frac{k2^{k+1}}{2^{|N|-1}}$, so adding it to our current case, it leaves
		\begin{align*}
			& \frac{(k-l')2^{k+1-l_2}(2^{k+1-l_3}-1)}{2^{|N|-1-l-l'}} + \alpha \\  > ~ & 
			\frac{k2^{k+1}(2^{k+1}-2)}{2^{|N|-1}} - \frac{k(2^{k+1}-2)}{2^{|N|-1}} + \frac{k(2^{k+2}-2^{l_3 +1})}{2^{|N|-1}} \\
			\ge ~ & \frac{k2^{k+1}(2^{k+1}-2)}{2^{|N|-1}} - \frac{k(2^{k+1}-2)}{2^{|N|-1}} + \frac{k2^{k+1}}{2^{|N|-1}} \\
			> ~ & \frac{k2^{k+1}(2^{k+1}-2)}{2^{|N|-1}} 
		\end{align*}
		so the sum of the two summands (defined by Case~$3$ and Case~$5$) increases in this case.
		
		\item $l_3 > 0 \wedge l_1 > 0$:
		\begin{small}	
			\begin{align*}
				& \frac{(k-l')2^{k+1-l_2}(2^{k+1-l_3}-1) - l_1 {k+1-l_2 \choose \floor{\frac{k+1-l_2}{2}}}{k+1-l_3 \choose \ceil{\frac{k+1-l_3}{2}}}}{2^{|N|-1-l-l'}} \\
				\ge ~ & \frac{(k-l')2^{k+1-l_2}(2^{k+1-l_3}-1) - l_1 2^{k-l_2}2^{k-l_3}}{2^{|N|-1-l-l'}}%
				\\
				= ~ & \frac{1}{2^{|N|-1}}\Big(2^{l_1 + l'}(k-l')2^{k+1}(2^{k+1}-2^{l_3}) - l_1 2^{2k+l_1 +l'}\Big) \\
				= ~ & \frac{1}{2^{|N|-1}}\Big(2^{l_1 + l'}k2^{k+1}(2^{k+1}- 2)- 2^{l_1 + l'}k2^{k+1}(2^{l_3}-2) \\
				& - 2^{l_1 + l'}l'2^{k+1}(2^{k+1}-2^{l_3}) - l_1 2^{2k+l_1 +l'}\Big) \\
				= ~ & \frac{k2^{k+1}(2^{k+1}-2)}{2^{|N|-1}} \\
				& + \frac{1}{2^{|N|-1}}\Big(k2^{2k+2 + l_1 + l'} - k2^{k+2 + l_1 + l'} -k2^{2k+2} %
				\\
				& + k2^{k+2} - k2^{k+1+l_1 + l_3 + l'} + k2^{k+2+l_1 +l'} %
				\\
				& - l'2^{2k+2 + l_1 + l'}  + l'2^{k+1 + l_1 + l_3 + l'} - l_1 2^{2k+l_1 +l'}\Big) \\
				= ~ & \frac{k2^{k+1}(2^{k+1}-2)}{2^{|N|-1}} \\
				& + \frac{1}{2^{|N|-1}}\Big((k-l')2^{2k+2 + l_1 + l'}  -k2^{2k+2} + k2^{k+2} \\
				& - (k-l')2^{k+1+l_1 + l_3 + l'}  - l_1 2^{2k+l_1 +l'}\Big) \\
				\stackrel{(\ref{k:inequality})}{\ge} ~ & \frac{k2^{k+1}(2^{k+1}-2)}{2^{|N|-1}} \\
				& + \frac{1}{2^{|N|-1}}\Big((k-l')2^{2k+1 + l_1 + l'} +k2^{2k+1 + l_1} -k2^{2k+2}  \\
				& + k2^{k+2} - (k-l')2^{k+1+l_1 + l_3 + l'}  - l_1 2^{2k+l_1 +l'}\Big) \\
				\ge ~ & \frac{k2^{k+1}(2^{k+1}-2)}{2^{|N|-1}} \\
				& + \frac{1}{2^{|N|-1}}\Big((k-l')2^{2k + l_1 + l'} + k2^{k+2} \\ 
				& - (k-l')2^{k+1+l_1 + l_3 + l'} \Big) \\
			\end{align*}
		\end{small}
		and for $\alpha=(2^{l_1 + l_3 + l'}-1)\frac{k2^{k+1}}{2^{|N|-1}}$ 	
		(from Case~$5$),
		\begin{align*}
			\alpha + & \frac{(k-l')2^{k+1-l_2}(2^{k+1-l_3}-1) - l_1 {k+1-l_2 \choose \floor{\frac{k+1-l_2}{2}}}{k+1-l_3 \choose \ceil{\frac{k+1-l_3}{2}}}}{2^{|N|-1-l-l'}} %
			\\
			\ge ~ & \frac{k2^{k+1}(2^{k+1}-2)}{2^{|N|-1}}
			+ \frac{1}{2^{|N|-1}}\Big(k2^{2k + l_1}  
			+ k2^{k+2} -k2^{k+1}\Big) \\
			> ~ & \frac{k2^{k+1}(2^{k+1}-2)}{2^{|N|-1}}.
		\end{align*}
		
		\item $l_3 = 0 \wedge l_1 = 0$:
		\begin{align*}
			& \frac{(k-l')2^{k+1-l_2}(2^{k+1}-2)}{2^{|N|-1-l_2-l'}} \\ 
			= ~ & \frac{2^{l_2 +l'}(k-l')2^{k+1-l_2}(2^{k+1}-2)}{2^{|N|-1}} \\
			= ~ & \frac{2^{l'}(k-l')2^{k+1}(2^{k+1}-2)}{2^{|N|-1}} \\ 
			\stackrel{(\ref{k:inequality})}{\ge} ~ & \frac{k2^{k+1}(2^{k+1}-2)}{2^{|N|-1}}.
		\end{align*}
		
		\item $l_3 = 0 \wedge l_1 > 0$:
		\begin{small}
			\begin{align*}
				& \frac{(k-l')2^{k+1-l_2}(2^{k+1}-2) - l_1 {k+1-l_2 \choose \floor{\frac{k+1-l_2}{2}}}{k+1 \choose \floor{\frac{k+1}{2}}}}{2^{|N|-1-l-l'}} %
				\\
				\ge ~ & \frac{1}{2^{|N|-1-l-l'}}\Big((k-l')2^{k+1-l_2}(2^{k+1}-2) - l_1 2^{k-l_2}2^{k}\Big)  \\
				= ~ & \frac{1}{2^{|N|-1}}\Big(2^{l_1 + l'}(k-l')2^{k+1}(2^{k+1}-2) - 2^{l_1 + l'} l_1 2^{2k}\Big)  \\
				\ge ~ & \frac{k2^{k+1}(2^{k+1}-2)}{2^{|N|-1}} \\
				& + \frac{1}{2^{|N|-1}}\Big((2^{l_1 +l'}-1)k2^{k+1}(2^{k+1}-2)\\
				& -2^{l_1 + l'}l' 2^{k+1}(2^{k+1}-2)- l_1 2^{2k+l_1 + l'}\Big)  \\
				= ~ & \frac{k2^{k+1}(2^{k+1}-2)}{2^{|N|-1}} \\
				& + \frac{1}{2^{|N|-1}}\Big((2^{k+1+l_1 +l'}-2^{l_1 +l' +1}-2^{k+1}+2)k2^{k+1} \\
				& -l' 2^{2k+2+l_1 +l'} + l'2^{k+2+l_1 +l'}- l_1 2^{2k+l_1 + l'}\Big)  \\
				= ~ & \frac{k2^{k+1}(2^{k+1}-2)}{2^{|N|-1}} \\
				& + \frac{1}{2^{|N|-1}}\Big(k2^{2k+2+l_1 +l'}-k2^{k+2 +l_1 +l'}-k2^{2k+2} \\
				& +k2^{k+2} -l' 2^{2k+2+l_1 +l'} + l'2^{k+2+l_1 +l'}- l_1 2^{2k+l_1 + l'}\Big)  \\
				= ~ & \frac{k2^{k+1}(2^{k+1}-2)}{2^{|N|-1}} \\
				& + \frac{1}{2^{|N|-1}}\Big((k-l')2^{2k+1+l_1 +l'}+(k-l')2^{2k+l_1 +l'} \\
				& +(k-l')2^{2k+l_1 +l'} - (k-l')2^{k+2 +l_1 +l'}-k2^{2k+2} \\
				& +k2^{k+2} - l_1 2^{2k+l_1 + l'}\Big)  \\
				\ge ~ & \frac{k2^{k+1}(2^{k+1}-2)}{2^{|N|-1}} \\
				& + \frac{1}{2^{|N|-1}}\Big((k-l')2^{2k+1+l_1 +l'}+(k-l')2^{2k+l_1 +l'} \\
				& - (k-l')2^{k+2 +l_1 +l'}-k2^{2k+2}+k2^{k+2} \Big).  
			\end{align*}
		\end{small}
		If $l'=0$, then
		\begin{small}
			\[
			k(2^{2k+1+l_1}+2^{2k+l_1}- 2^{k+2 +l_1}-2^{2k+2}+2^{k+2}) > k2^{k+2} > 0, 
			\]
		\end{small}
		and otherwise,
		\begin{small}
			\begin{align*}
				(k-l') & 2^{2k+1+l_1 +l'} +(k-l')2^{2k+l_1 +l'} - (k-l')2^{k+2 +l_1 +l'}\\
				& -k2^{2k+2}+k2^{k+2}	\\
				\stackrel{(\ref{k:inequality})}{\ge} ~ & (k-l')2^{2k+l_1 +l'}- (k-l')2^{k+2 +l_1 +l'}
				+k2^{k+2} \\
				> ~ & k2^{k+2} \\
				> ~ & 0, 
			\end{align*}
		\end{small}
		so the summand increases also in this case.
	\end{enumerate}	
	Note that if we removed some whole pair $(a_i,b_i)$, $i\in\{1,\ldots,k\}$, from $A$, we would get exactly the same number of coalitions for the summands as in the case of removing only one of the players from the pair, i.e., the new game would just contain less players, therefore, the summand would be even larger (since the denominator would be smaller at that time).
	
	Removing players from the rest of the groups will only increase even more the summands above. Now, we will focus on the coalitions defined by Case~$1$, Case~$2$, and Case~$4$. Moreover, because the increase of the summand considered above depends on the summand defined by Case~$5$, we will not consider it anymore for the three cases.

	First, let us consider removing $l_1 + l_2 + l_3 + l' = l + l' \le k$ players from $T$, $Y'$, $Z'$, and $A$, respectively, with the assumptions that $l'<k$ and no whole pair $(a_i,b_i)$ for any $i\in\{1,\ldots,k\}$ is removed from the game.
	Let us focus on the summands defined by Case~$4$ and by Case~$1$:
	\begin{enumerate}	
		\item For $l_3 > 0 \wedge l_1 = 0$, the summand by Case~$4$ is equal to
		\begin{align*}
			& \frac{(k-l')2^{n-l_2}(2^{k+1-l_3}-1)}{2^{|N|-1-l-l'}} 	\\
			= ~ &  \frac{2^{l'}(k-l')2^{n}(2^{k+1}-2^{l_3})}{2^{|N|-1}} \\
			= ~ &  \frac{2^{l'}(k-l')2^{n}(2^{k+1}-2)-2^{l'}(k-l')2^{n}(2^{l_3}-2)}{2^{|N|-1}} \\
			= ~ &  \frac{k2^n (2^{k+1}-2)}{2^{|N|-1}} 
			\\& + \frac{2^n}{2^{|N|-1}}\Big((2^{l'}-1)k(2^{k+1}-2) -2^{l'}l'(2^{k+1}-2) \\
			& -2^{l'}(k-l')(2^{l_3}-2)\Big) \\
			= ~ & \frac{k2^n (2^{k+1}-2)}{2^{|N|-1}} \\
			&  + \frac{2^n}{2^{|N|-1}}\Big(k2^{k+1+l'}-k2^{k+1}-k2^{l'+1}+2k- l'2^{k+1+l'} \\
			&+l'2^{l'+1}
			-k2^{l_3+l'} +k2^{l'+1}+l'2^{l_3 +l'}-l'2^{l'+1}\Big) \\
			= ~ &  \frac{k2^n (2^{k+1}-2)}{2^{|N|-1}} \\
			& + \frac{2^n}{2^{|N|-1}}\Big(k2^{k+1+l'}-k2^{k+1}+2k - l'2^{k+1+l'} \\
			& -k2^{l_3+l'} +l'2^{l_3 +l'}\Big) \\
			= ~ &  \frac{k2^n (2^{k+1}-2)}{2^{|N|-1}} \\
			&  + \frac{2^n}{2^{|N|-1}}\Big(2^{k+1}(2^{l'}(k-l')-k)+2k -(k-l')2^{l_3+l'}\Big) \\
			\stackrel{(\ref{k:inequality})}{\ge}  &  \frac{k2^n (2^{k+1}-2)}{2^{|N|-1}}  + \frac{2^n}{2^{|N|-1}}\Big(2k -(k-l')2^{l_3+l'}\Big).
		\end{align*}
		The new summand defined by Case~$6$ has increased by at least $\gamma = (2^{l_3 + l'}-1)\frac{k2^n}{2^{|N|-1}}$, so
		\begin{small}
			\begin{align*}
				& \frac{(k-l')2^{n-l_2}(2^{k+1-l_3}-1)}{2^{|N|-1-l-l'}}  + \gamma	\\
				\ge ~ &  \frac{k2^n (2^{k+1}-2)}{2^{|N|-1}}  + \frac{2^n}{2^{|N|-1}}\Big(2k -(k-l')2^{l_3+l'}\Big) \\
				&  +  (2^{l_3 + l'}-1)\frac{k2^n}{2^{|N|-1}} \\
				= ~ &  \frac{k2^n (2^{k+1}-2)}{2^{|N|-1}}  + \frac{2^n}{2^{|N|-1}}\Big(k -(k-l')2^{l_3+l'} + k2^{l_3+l'}\Big)  \\
				\ge ~ &  \frac{k2^n (2^{k+1}-2)}{2^{|N|-1}}  + \frac{k2^n}{2^{|N|-1}} 
			\end{align*}
		\end{small}
		therefore the sum of the two summands is larger than the sum of these summands in the old game for $l'>0$, 
		as well as for $l'=0$, since
		\begin{small}
			\begin{align*}
				& \frac{k2^{n-l_2}(2^{k+1-l_3}-1)}{2^{|N|-1-l}}  + \gamma	\\
				= ~ &  \frac{k2^n (2^{k+1}-2)-k2^n(2^{l_3}-2)}{2^{|N|-1}}   +  (2^{l_3}-1)\frac{k2^n}{2^{|N|-1}} \\
				= ~ &  \frac{k2^n (2^{k+1}-2)}{2^{|N|-1}}  + \frac{k2^n}{2^{|N|-1}}\Big(2^{l_3}-1-2^{l_3}+2\Big)  \\
				= ~ &  \frac{k2^n (2^{k+1}-2)}{2^{|N|-1}}  + \frac{k2^n}{2^{|N|-1}} 
			\end{align*}
		\end{small}
		\noindent
		At the same time (assuming we do not delete other players forming these coalitions), the summand by Case~$1$ is at least
		\begin{align*}
			& \frac{2k(2^k -1)2^{k-l'}(2^{n-k-1}+1)}{2^{|N|-1-l-l'}} \\ 
			= ~ & \frac{2^{l}2k(2^k -1)(2^{n-1}+2^{k})}{2^{|N|-1}}  \\
			= ~ & \frac{2^{l-1}2k(2^k -1)(2^{n}+2^{k+1})}{2^{|N|-1}} \\	
			\ge ~ & \frac{2k(2^k -1)(2^{n}+2^{k+1})}{2^{|N|-1}}  \\
			\stackrel{(\ref{app:xi})}{>} ~ & \frac{2k(2^k -1)\xi}{2^{|N|-1}}.
		\end{align*}
		
		\item For $l_3>0 \wedge l_1 > 0$, the summand by Case~$4$ is at least
		\begin{align*}
			& \frac{(k-l')2^{n-l_2}(2^{k+1-l_3}-1)- l_1 {n-l_2 \choose \floor{\frac{n-l_2}{2}}}{k+1-l_3 \choose \floor{\frac{k+1-l_3}{2}}}}{2^{|N|-1-l-l'}}	\\
			\ge ~ & \frac{k2^n (2^{k+1}-2)}{2^{|N|-1}} 
			\\& + \frac{1}{2^{|N|-1}}\Big((2^{l_1+l'}-1)2^n k(2^{k+1}-2)\\
			&  -2^{l_1+l'}l'2^n (2^{k+1}-2) \\
			& -2^{l_1+l'}(k-l')2^n(2^{l_3}-2)
			-2^{l + l'}l_1 2^{n-1 - l_2} 2^{k-l_3}\Big) \\
			= ~ & \frac{k2^n (2^{k+1}-2)}{2^{|N|-1}} 
			\\& + \frac{2^n}{2^{|N|-1}}\Big(k2^{k+l_1+l'+1}-k2^{l_1 + l' + 1}-k2^{k+1}+2k   \\
			& -l'2^{k+l_1+l'+1}+l'2^{l_1 + l' +1}-2^{l_1+l_3+l'}(k-l') \\
			& +k2^{l_1 + l' +1}-l'2^{l_1 + l' +1} -l_1 2^{k+l_1 +l' -1}\Big) \\
			= ~ & \frac{k2^n (2^{k+1}-2)}{2^{|N|-1}} 
			\\& + \frac{2^n}{2^{|N|-1}}\Big(4(k-l')2^{k+l_1+l'-1}\\
			& -k2^{k+1}+2k -2^{l_1+l_3+l'}(k-l')
			-l_1 2^{k+l_1 +l' -1}\Big) \\
			\ge ~ & \frac{k2^n (2^{k+1}-2)}{2^{|N|-1}} 
			\\& + \frac{2^n}{2^{|N|-1}}\Big((k-l')2^{k+l_1+l'}-k2^{k+1}+2k \Big) \\
			\stackrel{(\ref{k:inequality})}{\ge} ~ &  \frac{k2^n (2^{k+1}-2)}{2^{|N|-1}} + \frac{k2^{n+1}}{2^{|N|-1}} \\
		\end{align*}
		so the summand nondecreases also in this case. The summand by Case~$1$ is exactly as in the previous situation.
		
		\item Let $l_3=0 \wedge l_1 = 0$.  The summand of Case~$1$ is at least
		\begin{small}
			\begin{align*}
				\frac{2k(2^k -1)2^{k-l'}(2^{n-k-1}+1)}{2^{|N|-1-l_2-l'}} & = ~  \frac{2^{l_2}2k(2^k -1)(2^{n-1}+2^{k})}{2^{|N|-1}}.
			\end{align*}
		\end{small}
		If $l_2>0$, then
		\begin{small}
			\begin{align*}
				\frac{2^{l_2}2k(2^k -1)(2^{n-1}+2^{k})}{2^{|N|-1}} \ge ~ & \frac{2^{l_2 -1}2k(2^k -1)(2^{n}+2^{k+1})}{2^{|N|-1}}	\\
				\stackrel{(\ref{app:xi})}{>} ~ &  \frac{2k(2^k -1)\xi}{2^{|N|-1}},
			\end{align*}
		\end{small}
		and otherwise (and let $l'>0$ because we would make no changes among the players forming the coalitions counted in the cases), 
		it is at least
		\begin{align*}
			& 	\frac{2k(2^k -1)2^{k-l'}(2^{n-k-1}+1)}{2^{|N|-1-l'}} \\ = ~ & \frac{2k(2^k -1)(2^{n-1}+2^{k})}{2^{|N|-1}} \\
			= ~ & \frac{2k(2^k -1)2^{n}}{2^{|N|-1}} -  \frac{2k(2^k -1)(2^{n-1}-2^{k})}{2^{|N|-1}} \\
			\stackrel{(\ref{app:xi})}{\ge} ~ & \frac{2k(2^k -1)\xi}{2^{|N|-1}} -  
			\frac{k(2^k -1)(2^{n}-2^{k+1})}{2^{|N|-1}}.
		\end{align*}
		The summand of Case~$4$ 
		for $l'>0$
		is equal to
		\begin{align*}
			& \frac{(k-l')2^{n-l_2}(2^{k+1}-2)}{2^{|N|-1-l_2-l'}}\\  = ~ & \frac{2^{l'}(k-l')2^{n}(2^{k+1}-2)}{2^{|N|-1}} \\
			= ~ &  \frac{k 2^n (2^{k+1}-2)}{2^{|N|-1}} \\
			& + \frac{(2^{l'}-1)2k2^{n}(2^{k}-1)-2^{l'}l'2^{n}(2^{k+1}-2)}{2^{|N|-1}} \\
			= ~ &  \frac{k 2^n (2^{k+1}-2)}{2^{|N|-1}}  + \frac{2k2^{n-1}(2^{k}-1)}{2^{|N|-1}}  \\	
			&+ \frac{(2^{l'}-1)2k2^{n-1}(2^{k}-1)}{2^{|N|-1}} \\
			&+ \frac{(2^{l'}-2)2k2^{n-1}(2^{k}-1) -2^{l'}l'2^{n}(2^{k+1}-2)}{2^{|N|-1}} \\
			= ~ & \frac{k 2^n (2^{k+1}-2)}{2^{|N|-1}}  + \frac{2k2^{n-1}(2^{k}-1)}{2^{|N|-1}}  \\
			&+ 2^{n-1} (2^{k+1}-2) \frac{2^{l'+1}k -3k -2^{l'+1}l'}{2^{|N|-1}} \\
			= ~ & \frac{k 2^n (2^{k+1}-2)}{2^{|N|-1}}  + \frac{2k2^{n-1}(2^{k}-1)}{2^{|N|-1}}  \\
			&+ 2^{n-1} (2^{k+1}-2) \frac{(2^{l'+1}-3)k -2^{l'+1}l'}{2^{|N|-1}}
		\end{align*}
		and since $k\ge 4$ and $l'\in\{1,\ldots,k-1\}$,
		\begin{align*}
			(2^{l'+1}-3)k -2^{l'+1}l' & \ge (2^{l'+1}-3)(l'+1) -2^{l'+1}l' \\
			& = 2^{l'+1} - 3l'-3\ge 0
		\end{align*}
		for $l'\ge 3$, for $l'=1$,
		\[
		(2^{l'+1}-3)k -2^{l'+1}l' = k-4 \ge 0,
		\]
		and finally, for $l'=2$,
		\[
		(2^{l'+1}-3)k -2^{l'+1}l' = 5k-16 \ge 20-16 > 0.
		\]
		Therefore,
		\begin{small}
			\[
			\frac{(k-l')2^{n-l_2}(2^{k+1}-2)}{2^{|N|-1-l_2-l'}} \ge  \frac{k 2^n (2^{k+1}-2)}{2^{|N|-1}}  + \frac{k2^{n}(2^{k}-1)}{2^{|N|-1}},
			\]
		\end{small}
		which together with the summand by Case~$1$, is greater than 	
		the both summands together in the old game. 	
		If $l'=0$, the summand by Case~$4$ stays unchanged.
		
		\item For $l_3=0 \wedge l_1 > 0$, the summand of Case~$1$ is as in the first two situations. The summand defined by Case~$4$ in the new game is equal to
		\begin{align*}
			& \frac{(k-l')2^{n-l_2}(2^{k+1}-2)-l_1 {n-l_2 \choose \floor{\frac{n-l_2}{2}}}{k+1 \choose \floor{\frac{k+1}{2}}}}{2^{|N|-1-l-l'}}	\\
			\ge ~ & \frac{k2^n (2^{k+1}-2)}{2^{|N|-1}} 
			\\& + \frac{1}{2^{|N|-1}}\Big((2^{l_1+l'}-1)2^n k(2^{k+1}-2) \\
			& -2^{l_1+l'}l'2^n (2^{k+1}-2) -2^{l_1 + l'}l_1 2^{n-1+k}\Big) \\
			= ~ &  \frac{k2^n (2^{k+1}-2)}{2^{|N|-1}} 
			\\& + \frac{2^n}{2^{|N|-1}}\Big(k2^{k+l_1+l'+1}-k2^{l_1 + l' + 1}-k2^{k+1}+2k \\
			& -l'2^{k+l_1+l'+1} +l'2^{l_1 + l' +1} -l_1 2^{k+l_1 +l' -1}\Big) \\
			= ~ &  \frac{k2^n (2^{k+1}-2)}{2^{|N|-1}} 
			\\& + \frac{2^n}{2^{|N|-1}}\Big(4(k-l')2^{k+l_1+l'-1}-(k-l')2^{l_1 + l' + 1} \\ 
			& -k2^{k+1}+2k -l_1 2^{k+l_1 +l' -1}\Big) \\
			\ge ~ &  \frac{k2^n (2^{k+1}-2)}{2^{|N|-1}} \\
			& + \frac{2^n}{2^{|N|-1}}\Big((k-l')2^{k+l_1+l'}-k2^{k+1}+2k \Big) \\
			\stackrel{(\ref{k:inequality})}{\ge} ~ &  \frac{k2^n (2^{k+1}-2)}{2^{|N|-1}} 
			+ \frac{2^n}{2^{|N|-1}}\Big(k2^{k+l_1}-k2^{k+1}+2k \Big) \\
			> ~ &  \frac{k2^n (2^{k+1}-2)}{2^{|N|-1}}, 
		\end{align*}
		so the summand increases.
		The summand by Case~$1$ reacts as it did in the first two cases.	
	\end{enumerate}  
	The summands defined by Case~$1$ and Case~$2$ may be able to be decreased to $0$ by deleting one player from 
	one player from $C$ and one player from $E$, while decreasing the denominator only by a factor of $2^2 = 4$ (or by $2$ if we delete only one of the players, which also will not change the numerator of the other summand). 
	If we delete at least one more player from $A$ from a pair $(a_i, b_i)$ from which we have already removed a player in the cases above or any other player that forms some coalitions from Case~$1$ also causing the decrease of the summand to $0$, it will not change the number of coalitions counted in the other summand, but will decrease its denominator and therefore, by (\ref{app:thelargest:1or2}), the sum will still increase. We get the same situation, if we also delete a player or players decreasing the summand defined by Case~$2$ to $0$.    
	
	Summarizing the whole analysis above:
	\begin{enumerate}
		\item Removing $k$ players as it was in the previous implication but for any set corresponding to a truth assignment for the first $k$ variables, the index increases in the new game created from $\mathcal{G}$ and it nondecrease in the game created from $\mathcal{G}'$.
		\item If we remove some of the players forming coalitions from Case~$5$ and Case~$6$, there are two possibilities, i.e.,
		\begin{itemize}
			\item Either we remove $k$ only from $Z^{*}$ excluding all the coalitions from the index in the worst situation, but by (\ref{app:thelargest:5and6}), the new index is greater due to the coalitions from Case~$4$, or
			\item We remove less of the players---deleting or not other players forming coalitions from the other cases---and then, the summands defined by Case~$5$ and by Case~$6$ 
			together increase (note that the sets of players that define the both cases are disjoint
			except for $Z^*$), while each separately nondecreases---however, without removing any other player changing some other summands, the other summands increase. 
		\end{itemize}
		\item For $\mathcal{G}$, removing players forming the coalitions from Case~$3$ makes the sum of this summand and the summand defined by Case~$5$ in the new game greater.
		\item  If we remove some players forming the coalitions from Case~$4$, this summand (eventuelly together with the summand by Case~$6$) increases, and
		even 
		when the summand by Case~$1$ decreases---assuming we remove only the players that are part of the coalitions by Case~$4$---the sum of the two summands still increases. 
		\item If we remove the players that are crucial for the summands by Case~$1$ and Case~$2$ (not studied in the previous situation) and they decrease even to $0$, the index increases due to (\ref{app:thelargest:1or2})	
		and the fact that these players are not a part of any coalition in all the other cases 
		(anymore).
	\end{enumerate}

	The indices that player~$1$ has after removing other combinations of players are not smaller than the cases we presented above. Thus the Penrose--Banzhaf index of player~$1$
	cannot decrease for $\mathcal{G}$ and nonincrease for $\mathcal{G}'$ by deleting up to $k$ players from $N\setminus\{1\}$ in
	game~$\mathcal{G}$ and from $N'\setminus\{1\}$ in game~$\mathcal{G}'$, so we have no-instances of our control
	problems.
\end{proofs}

\section{Full Proof of the Second Part of Theorem~\ref{thm:deleting-remaining-PBI}}

We start by defining a problem introduced by Kaczmarek and Rothe~\cite{kac-rot:c:control-by-adding-players-to-change-or-maintain-shapley-shubik-or-the-penrose-banzhaf-power-index-in-WVGs-is-complete-for-np-pp,kac-rot:t:control-by-adding-players-to-change-or-maintain-shapley-shubik-or-the-penrose-banzhaf-power-index-in-WVGs-is-complete-for-np-pp}:
\EP{\textsc{E-Exact-SAT}}
{A boolean formula $\phi$ with $n$ variables $x_1,\ldots,x_n$, an
	integer~$k$, $1\le k\le n$, and 
	an integer~$\ell>0$.}
{Is there an assignment to the first $k$ variables $x_1,\ldots,x_k$
	such that \emph{exactly} $\ell$ assignments to the remaining
	$n-k$ variables $x_{k+1},\ldots,x_n$ 
	satisfy~$\phi$?}

We now show that in the definition of \textsc{E-Exact-SAT}, we can assume that $\ell$ is not a power of~$2$, i.e., $\ell$ is can be written as
\[
\ell = 2^{\delta_1} + \cdots + 2^{\delta_h}
\]
for $h, \delta_1,\ldots,\delta_h\in\mathbb{N}$
with $h\ge 2$
and $\delta_1>\cdots>\delta_h$. Let us call this special case \textsc{E-Exact-SAT$^{*}$}.
We now show that this special case is no easier to solve than the general problem.

\begin{lemma}
	\textsc{E-Exact-SAT} can be reduced to \textsc{E-Exact-SAT}$^{*}$.
\end{lemma}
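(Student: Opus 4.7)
The plan is to define a polynomial-time many-one reduction $\rho$ that leaves the instance alone when $\ell$ is already not a power of $2$ (this can be checked in time linear in the bit-length of $\ell$, e.g.\ by counting the number of ones in its binary expansion), and otherwise applies a small padding gadget to force the target count to become a sum of two distinct powers of $2$ while preserving the yes/no answer.

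For the remaining case $\ell = 2^d$, I would first assume without loss of generality that $\ell \leq 2^{n-k}$ (otherwise $(\phi,k,\ell)$ is trivially a no-instance and we map it to a fixed trivial no-instance of \textsc{E-Exact-SAT}$^{*}$) and that $n - k \geq 1$ (otherwise adjoin a dummy counted variable $x_{n+1}$ not occurring in $\phi$ and double the target to $2\ell$; this preserves membership and keeps the target a power of $2$). Then I would introduce two fresh variables $y_1, y_2$, place them among the counted variables, and define $\phi'$ in CNF by appending the literal $y_2$ to every clause of $\phi$ and adding the single new clause $(y_1 \vee y_2)$; logically, $\phi' \equiv (\phi \wedge y_1) \vee y_2$. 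The output of $\rho$ is then $(\phi', k, \ell')$ with $\ell' := 2^{n-k+1} + \ell$.

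For correctness I would verify that for every partial assignment $a$ to $x_1,\ldots,x_k$, the number of extensions of $a$ over $(x_{k+1},\ldots,x_n,y_1,y_2)$ satisfying $\phi'$ is exactly $2^{n-k+1} + c(a)$, where $c(a)$ denotes the number of extensions to $(x_{k+1},\ldots,x_n)$ satisfying $\phi$: the $2^{n-k+1}$ completions with $y_2=1$ always satisfy $\phi'$, while those with $y_2=0$ force $y_1=1$ and $\phi$ to hold, contributing exactly $c(a)$. Hence $c(a) = \ell$ if and only if the new count equals $\ell'$, so the two instances are equivalent.

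The main obstacle, and the reason the construction is arranged this way, is to guarantee that the new target $\ell'$ is genuinely not a power of $2$. Since the bound $\ell \leq 2^{n-k}$ gives $d \leq n-k < n-k+1$, the value $\ell' = 2^{n-k+1} + 2^d$ is a sum of two distinct powers of $2$ and therefore not itself a power of $2$, which is precisely the condition defining \textsc{E-Exact-SAT}$^{*}$. All steps run in polynomial time, so $\rho$ is the required polynomial-time many-one reduction.
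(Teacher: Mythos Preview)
Your proof is correct but takes a different route from the paper's. The paper uses a uniform \emph{multiplicative} gadget: it maps every instance $(\phi,k,\ell)$ to $(\phi',k,3\ell)$ where $\phi' = \phi \wedge (x_{n+1}\vee x_{n+2})$, with the two fresh variables appended to the counted block. Since the new clause has exactly three satisfying assignments independently of the rest, every count is multiplied by~$3$; and $3\ell$ can never be a power of~$2$ because $3\nmid 2^m$. This avoids all case distinctions.

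Your construction instead uses an \emph{additive} gadget, shifting each count by $2^{n-k+1}$ via $\phi' \equiv (\phi\wedge y_1)\vee y_2$, and relies on the case analysis ``$\ell$ already not a power of~$2$'' versus ``$\ell=2^d$'' (plus the edge cases $\ell>2^{n-k}$ and $n=k$). This is sound, and the count computation $2^{n-k+1}+c(a)$ is verified correctly. The trade-off is that the paper's reduction is shorter and needs no boundary handling or fixed no-instance, whereas yours makes the ``two distinct powers of~$2$'' structure of the new target completely explicit. One minor point: your dummy-variable step for $n=k$ produces a variable not occurring in the formula, which the paper's conventions exclude; this is easily patched (e.g., by folding that case into the main gadget), but the paper's multiply-by-$3$ trick sidesteps the issue entirely.
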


\begin{proof}
	We provide the following reduction from \textsc{E-Exact-SAT}: For an instance $(\phi, k, \ell)$, we define
	\[
	\phi' = \phi \wedge (x_{n+1}\vee x_{n+2})
	\]
	for two new variables $x_{n+1}$ and $x_{n+2}$ (added to the end of the variable list for $(\phi, k, \ell)$).
	The last clause is always true for three assignments for the two variables (independently from the assignment for all the other variables): $(1,1)$, $(1,0)$, and $(0,1)$. The instance for the special case is defined as $(\phi',k,3\ell)$.
	
	First, let 
	\[
	\ell = 2^{\delta_1} + \cdots + 2^{\delta_h}
	\]  
	for any $h\in\{1,\ldots,n\}$. Then
	\begin{align*}
		3\ell & = 3\cdot(2^{\delta_1} + \cdots + 2^{\delta_h}) \\
		& = 2^{\delta_1 +1} + \cdots + 2^{\delta_h +1} + 2^{\delta_1} + \cdots + 2^{\delta_h} \\
		& = \ell' +  2^{\delta_h}
	\end{align*}
	for some $\ell' > 2^{\delta_h}$, i.e., $3\ell$ needs more than one $1$ in its binary representation. So, it is a valid value for an input for \textsc{E-Exact-SAT$^{*}$}.
	
	\proofonlyif
	Let $(\phi, k, \ell)\in\textsc{E-Exact-SAT}$. The same assignment for the first $k$ variables being the answer for this input is an answer for the input created in our reduction. Let us fix this assignment. Then exactly $\ell$ assignments for the rest variables in $\phi$ satisfy the formula, and exactly $3\ell$ assignments (extended for the new two variables) satisfy~$\phi'$.
	
	\proofif
	Let $(\phi', k, 3\ell)\in $ \textsc{E-Exact-SAT$^{*}$}. Let us fix again the first $k$ assignment being an answer for the problem. Since $x_{n+1}$ and $x_{n+2}$ appear only in the last clause, which is not a part of $\phi$, for each of 
	the assignments for them that make the clause true, there are exactly the same assignments for the other $n-k$ variables that satisfy together with the $k+2$ fixed values formula~$\phi'$. That means, for each of these assignments for $x_{n+1}$ and $x_{n+2}$, there are exactly $\ell$ assignments, and they satisfy---together with the fixed $k$ first values---also formula~$\phi$.
\end{proof}

\begin{proofs} [Full proof of $\np^{\PP}$-hardness of \textsc{Control-by-Deleting-Players-to-Maintain-PBI} (second part of Theorem~\ref{thm:deleting-remaining-PBI}).]
	
	We will prove $\np^{\PP}$-hardness using a reduction from the problem
	\textsc{E-Exact-SAT$^{*}$} defined above.
	Let $(\phi, k, \ell)$ be a given instance of
	this problem, where $\phi$ is a Boolean formula in CNF
	with variables $x_1,\dots,x_n$ and $m$ clauses, and $4\le k<n$. 
	Also, $\ell$ and $l$ will represent two different notations. 
	Before we construct
	instances of our control problems from $(\phi, k, \ell)$, we need to
	choose some numbers and introduce some notations.
	
	Let $\delta_1 > \cdots >\delta_h$ be the values fulfilling
	\[
	\ell = 2^{\delta_1} + \cdots + 2^{\delta_h}
	\]
	for $h, \delta_1\le n$ and $h\ge2$. 
	
	Let $z^{*}_1 = (k+2)z'$ for $z'$ defined in Table~\ref{tab:deleting-maintain-PBI} and for $i\in\{2,\ldots,k\}$
	\[
	z^{*}_i = z^{*}_1 + \sum_{j=1}^{i-1}z^{*}_j.
	\]
	Let 
	$t \in \mathbb{N}$ be such that 
	\begin{equation}\label{m:def:t}
		10^{t} > \max\left\{2^{\ceil{\log_{2} n}+1}, 2z^{*}_k\right\}.
	\end{equation}
	For that $t$, let $A$, $B$, $C$, $a_i$, $b_i$, and $c_{j,s}$ for $i \in \{1,\ldots,n\}$, $j\in\{1,\ldots,m\}$, and $s\in\{0,\ldots,r\}$, and $q'$ be defined as in Definition~\ref{def:prereduction}.
	Finally, let
	\begin{eqnarray*}
		a_i '   & = & a_i \cdot 10^{t(m+1)+n},\\
		b_i '   & = & b_i \cdot 10^{t(m+1)+n}, \text{ and }\\
		c_{j,s}' & = & c_{j,s} \cdot 10^{t(m+1)+n},
	\end{eqnarray*}
	which together form the weight vector~$W_E$.
	
	Now, we construct
	the instances of our control problems.  Let $k$ be the limit for the
	number of players that can be deleted.
	Further, let 
	\[
	q'' = 10^{t(m+1)+n}q'.
	\]
	Note that,
	as shown by Kaczmarek and Rothe~\cite{kac-rot:c:control-by-adding-players-to-change-or-maintain-shapley-shubik-or-the-penrose-banzhaf-power-index-in-WVGs-is-complete-for-np-pp,kac-rot:t:control-by-adding-players-to-change-or-maintain-shapley-shubik-or-the-penrose-banzhaf-power-index-in-WVGs-is-complete-for-np-pp}, 
	the first summand of $q'$ can be achieved only by
	taking exactly one element from $\{a_i,b_i\}$ for all
	$i\in\{1,\ldots,n\}$ and summing them up, and that
	the second summand of $q'$ cannot be achieved only by the elements
	$c_{j,s}$, with $j\in\{1,\ldots,m\}$ and $s\in\{0,\ldots,r\}$ (in the case of $q''$, the situation is analogous).
	Let $N$ be the set of players in the game $\mathcal{G}$ divided into the groups with the weights presented in Table~\ref{tab:deleting-maintain-PBI}. Finally, define the quota of the weighted voting game by
	\begin{equation}\label{m:def:q}
		q=2\cdot \left(w_A + w_B  + w_C + w_E + 10^{t}\right) +1.
	\end{equation}
	
	\begin{table*}\setcounter{table}{1}
		\begin{center}
			\caption{\label{tab:deleting-maintain-PBI}
				Groups of players in the proof of the second part of
				Theorem~\ref{thm:deleting-remaining-PBI}, with their 
				numbers and weights (with $i\in\{1,\ldots,h\}$)
			}
			\renewcommand{\arraystretch}{1.5}
			\begin{tabular}{c|c|c}
				\toprule
				\textbf{Group} & \textbf{Number of Players} & \textbf{Weights} \\
				\midrule
				distinguished player~$p$ & $1$ & $1$ \\
				\midrule
				$A$ & $2k$ & $W_A$ \\
				\midrule
				$B$ & $2n-2k$ & $W_B$ \\
				\midrule
				$C$ & $m(r+1)$ & $W_C$ \\
				\midrule
				$D$ & $k$ & \makecell{$q-q'-i' x-x'$ \\ for $i' \in\{0,\ldots,k-1\}$} \\
				\midrule
				$E$ & $2n+m(r+1)$ & $W_E$ \\
				\midrule
				$F$ & $1$ & $q-q''-x'$ \\
				\midrule
				$L_i$ & $\delta_i$ & $\delta_i ' = (\delta_i +1)\delta_{i-1}'$ with $\delta_1 ' = 1$ \\
				\midrule
				$S_i$ & $k^2 (k+2) (\delta_i +1)$ & \makecell{$q-(a_{i'} + b_{i'})-jy-j' z - j'' \delta_i '$ \\ for $i' \in \{1,\ldots,k\}$, $j\in\{0,\ldots,k+1\}$, \\  
					$j'\in\{0,\ldots,k-1\}$, 
					and $j''\in\{0,\ldots,\delta_i\}$} \\
				\midrule
				$T_i$ & $k^2 (n+3) (\delta_i+1)$ & \makecell{$q-(a_{i'} + b_{i'})-jy'-j' z' - j'' \delta_i '$ \\ for $i' \in \{1,\ldots,k\}$, $j\in\{0,\ldots,n+2\}$, \\ 
					$j'\in\{0,\ldots,k-1\}$, 
					and $j''\in\{0,\ldots,\delta_i\}$} \\
				\midrule
				$U_i$ & $k^2 (\delta_i+1)$ & \makecell{$q-(a_{i'} + b_{i'})-jy^{*} - j' \delta_i '$ \\ for $i'\in\{1,\ldots,k\}$, $j\in\{1,\ldots,k\}$, \\ and $j'\in\{0,\ldots,\delta_i\}$} \\
				\midrule
				$V_i$ & $k (n+5) (\delta_i+1)$ & \makecell{$q-jy^{**}-z^{*}_{i'} - j' \delta_i '$ \\ for $i'\in\{1,\ldots,k\}$, $j\in\{0,0,1,\ldots,n+2,n+2\}$, \\ and $j'\in\{0,\ldots,\delta_i\}$} \\
				\midrule
				$X$ & $k$ & $x = (\delta_h +1) \delta_h '$ \\
				\midrule
				$X'$ & $2k$ & $x' = (k+1)x$ \\
				\midrule
				$Y$ & $k+1$ & $y = (2k+1)x'$ \\
				\midrule
				$Y'$ & $n+2$ & $y' = (k+2)y$ \\
				\midrule
				$Y^{*}$ & $k+1$ & $y^{*} = (n+3)y'$ \\
				\midrule
				$Y^{**}$ & $n+2$ & $y^{**} = (k+2)y^{*}$ \\
				\midrule
				$Z$ & $k$ & $z = (n+3)y^{**}$ \\
				\midrule
				$Z'$ & $k$ & $z' = (k+1)z$ \\
				\midrule
				$Z^{*}$ & $k$ & $z^{*}_1,\ldots,z^{*}_k$ \\
				\bottomrule
			\end{tabular}
		\end{center}
	\end{table*}
	
	Let us first discuss which coalitions player~$1$ can be pivotal for
	in game $\mathcal{G}$.\footnote{For any $M\subseteq N\setminus\{1\}$, $1$ is pivotal for some subset of the coalitions in $\mathcal{G}_{\setminus M}$.}
	Let $S = \bigcup_{i\in\{1,\ldots,h\}} S_i$, $T = \bigcup_{i\in\{1,\ldots,h\}} T_i$, $U = \bigcup_{i\in\{1,\ldots,h\}} U_i$, $V = \bigcup_{i\in\{1,\ldots,h\}} V_i$, and $L = \bigcup_{i\in\{1,\ldots,h\}} L_i$.
	Player~$1$ is pivotal for those coalitions of players in
	$N\setminus \{1\}$ whose total weight is $q-1$.
	First, note that any two players from $D\cup 
	F \cup S \cup T \cup U \cup V$
	together have a
	weight larger than $q$.  Therefore, at most one player from $D\cup 
	F\cup S \cup T \cup U \cup V$
	can be in any coalition player~$1$ is pivotal for.  Moreover,
	by~(\ref{m:def:q}), all other players
	together have a total weight smaller than $q-1$. 
	That means that any coalition $K \subseteq N\setminus \{1\}$ with a total weight of $q-1$ has to contain \emph{exactly} one of
	the players in $D\cup 
	F \cup  S \cup T \cup U \cup V$.  Now, whether this player is in $D$, $F$, $S$,
	$T$, 
	$U$, or $V$
	has consequences as to which other players will also be in such a
	weight-$(q-1)$ coalition~$K$:
	For $i\in\{1,\ldots,h\}$, 	

	\begin{description}\label{app:cases}
		\item[Case 1:]\label{app:case-1} If $K$ contains a player from~$D$ (with weight, say,
		\[
		q-q'-i'x-x'
		\]
		for some~$i'$, $0 \leq i' \leq k-1$), $K$ also has to
		contain those players from $A\cup B \cup 
		C$ whose weights sum up to $q'$, $i'$ players from $X$ with 
		weight~$x$, and a player from $X'$.  
		Also, recall that $q'$ can be achieved only by a set
		of players whose weights take exactly one of the values from
		$\{a_{i''},b_{i''}\}$ for each $i''\in\{1,\ldots,n\}$,
		so $K$ must contain exactly $n$ players from  
		$A \cup B$.
		
		\item[Case 2:]\label{app:case-2} If $K$ contains the player from 
		$F$, $K$ has to contain those players from $E$ whose weights sum up to $q''$ and a player from $X'$.
		
		\item[Case 3:]\label{app:case-3} If $K$ contains a player from $S_i$	
		 (with weight, say,
		\[
		q-(a_{i'} + b_{i'})-jy-j' z - j'' \delta_i '
		\] 
		for some~$i'$, $1 \leq i' \leq k$, some~$j$, $0 \le j \le k+1$, 		
		some~$j'$, 
		$0\le j' \le k-1$, and some~$j''$, $0\le j'' \le \delta_i$),  $K$ also contains some pair of players from $A$, $j$ players from $Y$, $j'$ players from $Z$, and $j''$ players from $L_i$.
		
		\item[Case 4:]\label{app:case-4} If $K$ contains a player from $T_i$ (with weight, say,
		\[
		q-(a_{i'} + b_{i'})-jy'-j' z' - j'' \delta_i '
		\]
		for some~$i'$, $1 \leq i' \leq k$, some~$j$, $0 \le j \le n+2$, 
		some~$j'$, $0\le j' \le k-1$, 
		and some~$j''$, $0\le j'' \le \delta_i$), $K$ also contains some pair of players from $A$, $j$ players from $Y'$, $j'$ players from $Z'$, and $j''$ players from $L_i$.
		
		\item[Case 5:]\label{app:case-5} If $K$ contains a player from $U_i$ (with weight, say,
		\[
		q-(a_{i'} + b_{i'})-jy^{*} - j' \delta_i '
		\]
		for some~$i'$, $1 \leq i' \leq k$, 	
		some~$j$, 
		$1 \le j \le k$, and some~$j'$, $0\le j' \le \delta_i$, $K$ also contains some pair of players from $A$, $j$ players from $Y^{*}$,  a player from $Z^{*}$, and $j'$ players from $L_i$.
		
		\item[Case 6:]\label{app:case-6} If $K$ contains a player from $V_i$ (with weight, say,
		\[
		q-jy^{**}-z_{i'}^{*} - j' \delta_i '
		\]
		for some~$i'$, $1 \leq i' \leq k$, some~$j$, $0 \le j \le n+2$, and some~$j'$, $0\le j' \le \delta_i$, $K$ also contains $j$ players from $Y^{**}$, a player from $Z^{*}$, and $j'$ players from $L_i$.
	\end{description}

	For any $\np$ problem $L$ (represented by an NP machine~$M_L$) and input~$x$, let $\#L(x)$ denote the number of solutions of $M_L$ on input~$x$.
	For example, using the ``standard'' NP machine~$M_{\textsc{SAT}}$ that, given a boolean formula, guesses its truth assignments and verifies whether they are satisfying, $\#\textsc{SAT}(\phi)$ denotes the number of truth assignments satisfying~$\phi$.
	Let $\xi = \#\textsc{SAT}(\phi)$.

	Let \textsc{SubsetSum} (see, e.g.,~\cite{gar-joh:b:int}) be the well-known, $\np$-complete problem: Given positive integer sizes $s_1, \ldots , s_d$ and a positive integer~$\alpha$, is there a subset $H \subseteq \{1, \ldots , d\}$ such that $\sum_{h \in H} s_h = \alpha$?
	Using the result by Kaczmarek and Rothe~\cite{kac-rot:c:control-by-adding-players-to-change-or-maintain-shapley-shubik-or-the-penrose-banzhaf-power-index-in-WVGs-is-complete-for-np-pp,kac-rot:t:control-by-adding-players-to-change-or-maintain-shapley-shubik-or-the-penrose-banzhaf-power-index-in-WVGs-is-complete-for-np-pp},
	\begin{eqnarray*}
		\xi & = & \#\textsc{SubsetSum}(A\cup B\cup C, q')\\
		& = & \#\textsc{SubsetSum}(E,q'').
	\end{eqnarray*}   
	
	Considering all the cases above, player~$1$'s Penrose--Banzhaf indices in game $\mathcal{G}$  is as follows:
	\begin{small}
		\begin{align*}
			& \PenroseBanzhaf(\mathcal{G},1) \\
			& =\frac{2k(2^k -1)\xi+ 2k\xi 
				+ k2^{k+1} (2^{k} - 1)(2^{\delta_1}+\cdots+2^{\delta_h})}{2^{|N|-1}} \\
			& ~~ + \frac{ k2^{n+2} (2^{k} -1)(2^{\delta_1}+\cdots+2^{\delta_h})}{2^{|N|-1}} \\
			& ~~ + \frac{k(2^{k+1}-2)(2^{\delta_1}+\cdots+2^{\delta_h})}{2^{|N|-1}} \\
			& ~~ + \frac{k(2^{n+2}+2)(2^{\delta_1}+\cdots+2^{\delta_h})}{2^{|N|-1}} \\
			& = \frac{2k(2^k -1)\xi+ 2k\xi + k\ell 2^{k+1} (2^{k} - 1)+ k\ell 2^{n+2} (2^{k} -1)}{2^{|N|-1}} \\
			& ~~ +\frac{k\ell (2^{k+1}-2) + k\ell (2^{n+2}+2)}{2^{|N|-1}}. \\
			& =\frac{2k 2^k\xi + k\ell (2^{n+2} + 2^{k+1})2^{k}}{2^{|N|-1}}. \\
		\end{align*}
	\end{small}

	We now show that 
	$(\phi, k, \ell)$ is a
	yes-instance of \textsc{E-Exact-SAT$^{*}$} if and only if $(\mathcal{G}, 1, k)$ as defined above is a yes-instance of
	\textsc{Control-by-Deleting-Players-to-Maintain-PBI}.

	\proofonlyif
	Suppose that $(\phi, k, \ell)$ is a yes-instance of \textsc{E-Exact-SAT$^{*}$},
	i.e., there exists an assignment of $x_1,\ldots,x_k$ such that exactly $\ell$ of assignments of the remaining $n-k$ variables 
	satisfy
	the Boolean formula~$\phi$.  Let us fix one of these
	satisfying assignments.  From this fixed assignment, the first
	$k$ positions correspond to the players $A' \subseteq A$, $|A'|=|A\setminus A'|=k$,
	for which the players $A\setminus A'$ are removed from game~$\mathcal{G}$
	(see~\cite{kac-rot:c:control-by-adding-players-to-change-or-maintain-shapley-shubik-or-the-penrose-banzhaf-power-index-in-WVGs-is-complete-for-np-pp,kac-rot:t:control-by-adding-players-to-change-or-maintain-shapley-shubik-or-the-penrose-banzhaf-power-index-in-WVGs-is-complete-for-np-pp} for details).
	
	Since there are exactly $\ell$ assignments for
	$x_{n-k},\ldots,x_n$ which---together with the fixed assignments for
	$x_1,\ldots,x_k$---satisfy~$\phi$, there are exactly
	$\ell$ subsets of $A\cup B \cup C$ such that the players' weights
	in each subset sum up to~$q'$. Player~$1$  is still pivotal for all coalitions described in Case~$2$ and Case~$6$, for $2k(2^k -1)\ell$ coalitions described in Case~$1$, and is not pivotal for any coalition described in Case~$3$, Case~$4$, and Case~$5$ anymore. Therefore,
	\begin{align*}
		\PenroseBanzhaf( & \mathcal{G}_{\setminus (A\setminus A')},1) \\ 
		& = \frac{2k(2^k -1)\ell + 2k\xi +  k\ell (2^{n+2}+2)}{2^{|N|-1-k}} \\
		& =\frac{2k \xi 2^k + k\ell (2^{k+1} -2) 2^{k} + k\ell (2^{n+2}+2)2^k}{2^{|N|-1}} \\
		& = \frac{2k 2^k \xi + k\ell (2^{n+2}+2^{k+1})2^{k}}{2^{|N|-1}} \\
		& = \PenroseBanzhaf(\mathcal{G},1),
	\end{align*}
	so player~$1$'s Penrose--Banzhaf index is exactly as in 
	$\mathcal{G}$,
	i.e., 
	we have constructed a yes-instance of our control problem.

	\proofif
	Assume now that $(\phi, k, \ell)$ is a no-instance of \textsc{E-Exact-SAT$^{*}$},
	i.e., there does not exist any
	assignment of the variables $x_1,\ldots,x_k$ such that exactly $\ell$
	assignments of the remaining $n-k$ variables 
	satisfy the Boolean
	formula~$\phi$.  In other words, for each assignment of $x_1,\ldots,x_k$,
	there exist either less or more than $\ell$ assignments of $x_{k+1},\ldots,x_n$
	that satisfy~$\phi$.
	
	Let us start with the case considered in the previous implication, i.e., the case of deletion of $k$ players $A'$ that correspond to some truth assignment for $\phi$ (but this time, we consider any of the assignments): either
	\begin{align*}
		\PenroseBanzhaf(\mathcal{G}_{\setminus A'},1) &
		> \frac{2k(2^k -1)\ell + 2k\xi +  k\ell (2^{n+2}+2)}{2^{|N|-1-k}} \\
		& = \frac{2k 2^k \xi + k\ell (2^{n+2}+2^{k+1})2^{k}}{2^{|N|-1}}  = \PenroseBanzhaf(\mathcal{G},1), \textrm{ or }  \\
		\PenroseBanzhaf(\mathcal{G}_{\setminus A'},1) &
		< \frac{2k(2^k -1)\ell + 2k\xi + k\ell (2^{n+2}+2)}{2^{|N|-1-k}} \\
		& = \frac{2k 2^k \xi + k\ell (2^{n+2}+2^{k+1})2^{k}}{2^{|N|-1}}   = \PenroseBanzhaf(\mathcal{G},1).
	\end{align*}
	Thus we have constructed a no-instance of our control problem in the case we currently consider.

	Now, we are going to analyze
        all other possible deletions.
        First note that the number of coalitions from Case~$4$ is larger than the number of coalitions from any other case:	
	\begin{eqnarray}
          \label{m:thelargest:3}
	  k\ell 2^{n+2} \left(2^{k}-1\right) & > & k\ell 2^{k+1}\left(2^{k}-1\right),
          \\
          \label{m:thelargest:1or2}
	  k\ell 2^{n+2} \left(2^{k}-1\right) & > &
	  2k\xi \left(2^k - 1\right)
          ~>~ 2k\xi, \text{ and }
	  \\
          \label{m:thelargest:5and6}
	  k\ell 2^{n+2} \left(2^{k}-1\right) &  \stackrel{k\ge 4}{>} &
	  k\ell \left(2^{n+2} + 2^{k+1}\right).
	\end{eqnarray}
      (\ref{m:thelargest:3}) compares the number of coalitions from Case~$4$ with those from from Case~$3$; (\ref{m:thelargest:1or2}) compares Case~$4$ with Cases~$1$ and~$2$; and (\ref{m:thelargest:5and6}) compares Case~$4$ with Cases~$5$ and~$6$.

	Moreover, note that if we delete some $\eta_i$ players from sets $L_i$, $0\le \eta_i \le \delta_i$,
	$1\le i \le h$, we get in the indices of
	player~$1$ some value
	\begin{align*}
		\ell '  & = ~ 2^{\delta_1 - \eta_1} + \cdots + 2^{\delta_h - \eta_h}
	\end{align*}
	instead of $\ell$, but the denominator increases by a factor of~$2^{\eta_1+\cdots+\eta_h}$, so
	\begin{align*}
		2^{\eta_1+\cdots+\eta_h}\ell '  & = ~2^{\eta_1+\cdots+\eta_h} (2^{\delta_1 - \eta_1} + \cdots + 2^{\delta_h - \eta_h}) \\
		& = ~ 2^{\eta_2+\cdots+\eta_h} 2^{\delta_1} + \cdots +2^{\eta_1+\cdots+\eta_{h-1}} 2^{\delta_h} \\
		& \ge ~ \ell,
	\end{align*}
	i.e.,
	\begin{align}\label{ell}
		2^{\eta_1+\cdots+\eta_h}\ell '  & \ge ~ \ell.
	\end{align}
	Moreover, since $h\ge 2$, if we delete at least one from these players, the sign above changes to $>$. 
	We will use the notation of $\ell'$ for a new value after possible removal of the players from $L_i$ later in our analysis---each time, its value should be clear from the description of the case being analyzed.
	
	Now, let us consider the summands defined by Case~$6$. Let us consider removing $l_1 + l_2 + l_3 + l_4 = l \le k$ with $l_1 + l_2 + l_3 < k$ (note that if the latter is $k$, the index will increase by (\ref{m:thelargest:5and6}) and that the summands can decrease to zero only if we remove all $k$ players from $Z^{*}$) and with $l_4 = l_{4,1}+\cdots + l_{4,h}$  players from $V$, $Y^{**}$, $Z^{*}$, and $L_i$, respectively, i.e., players forming coalitions from Case~$6$. Let us analyze first when the numerator decreases the most so that we do not have to consider all possible combinations for fixed $l_1$, $l_2$, $l_3$, and $l_4$: 
	After removing the $l_2$ players from $Y^{**}$ (and since they have the same weight, they are symmetric and it does not matter which ones we delete), it will be impossible for player~$1$ to be pivotal for the coalitions formed, i.a., by the players from $V$ that need more than $n+2-l_2$ players from $Y^{**}$ to achieve their total weight of~$q-1$, so we do not need to delete these players from the game---it would not decrease player~$1$'s 
	index any more. Also, the players from $V$ forming at most number of coalitions after the deletion of the mentioned players from $Y^{**}$ are players with weight of form
	$$q-\left\lfloor\frac{n+2-l_2}{2}\right\rfloor y^{**}-z^{*}_{i'} - \left\lfloor\frac{\delta_i - l_{4,i}}{2}\right\rfloor\delta_i ',$$ 
	$1\le i' \le k$, 
	$1\le i \le h$, each forming 
	$${n+2-l_2 \choose \floor{\frac{n+2-l_2}{2}}}{\delta_i - l_{4,i} \choose \floor{\frac{\delta_i - l_{4,i}}{2}}}$$ coalitions with weight~$q-1$, so we get the largest decrease if we remove $l_1$ from those players with $i'$ such that the players with weight~$z^{*}_{i'}$ appear in the new game. Therefore, for 
	\[
	j = \max_{i\in\{1,\ldots,h\}}{\delta_i - l_{4,i} \choose \floor{\frac{\delta_i - l_{4,i}}{2}}},
	\]
	if $l_2\ge 1$, then the summand defined by Case~$6$ changes to at least 
	\begin{align*} 
	  & \frac{(k-l_3)(2^{n+2-l_2}+1)\ell ' - l_1 j {n+2-l_2 \choose \floor{\frac{n+2-l_2}{2}}}}{2^{|N|-1-l}} \\
          & =  \frac{2^l  \Big((k-l_3)(2^{n+2-l_2}+1)\ell' - l_1 j {n+2-l_2 \choose \floor{\frac{n+2-l_2}{2}}}\Big)}{2^{|N|-1}} \\
	  & \ge \frac{2^{l_4}\Big(2^{l_1 + l_3}(k-l_3)(2^{n+2}+2)\ell' - 2^{l-l_4}l_1 j {n+2-l_2 \choose \floor{\frac{n+2-l_2}{2}}}\Big)}{2^{|N|-1}} \\
		& = 2^{l_4}\frac{k(2^{n+2}+2)\ell' + (2^{l_1 + l_3}-1)k(2^{n+2}+2)\ell'}{2^{|N|-1}} \\
	  & ~~~~~ + 2^{l_4}\frac{ - 2^{l_1 + l_3}l_3 (2^{n+2}+2)\ell' - 2^{l-l_4}l_1 j {n+2-l_2 \choose \floor{\frac{n+2-l_2}{2}}}}{2^{|N|-1}}
	\end{align*}
	\begin{align*}
	  & \stackrel{(\ref{ell})}{\ge} \frac{k(2^{n+2}+2)\ell}{2^{|N|-1}}   \\ 
		& ~~~~~ + 2^{l_4}\frac{(2^{l_1 + l_3}-1)k(2^{n+2}+2)\ell'- 2^{l_1 + l_3}l_3 (2^{n+2}+2)\ell'}{2^{|N|-1}} \\
		& ~~~~~ - 2^{l_4}\frac{2^{l-l_4}l_1 j 2^{n+1-l_2}}{2^{|N|-1}} \\
		& =  \frac{k(2^{n+2}+2)\ell}{2^{|N|-1}} \\ 
		& ~~~~~ + 2^{l_4}\frac{ (2^{l_1 + l_3}-1)k(2^{n+2}+2)\ell' - 2^{l_1 + l_3}l_3 (2^{n+2}+2)\ell'}{2^{|N|-1}} \\
		& ~~~~~ - 2^{l_4}\frac{ l_1 j 2^{n+1+l_1 + l_3}}{2^{|N|-1}} \\
		& \ge \frac{k(2^{n+2}+2)\ell}{2^{|N|-1}} \\
		& ~~~~~ + 2^{l_4}(2^{n+2}+2)\ell'\frac{ (2^{l_1 + l_3}-1)k-  2^{l_1 + l_3 }(\frac{j}{2\ell'}l_1 + l_3)}{2^{|N|-1}}  
	\end{align*}
	and because $2^{l_1 + l_3}  \ge l_1 + l_3 +1$ for $l_1 + l_3 \ge 0$,  $k\ge l_1 + l_3 + 1$, and $\frac{j}{\ell'}\le 1$,
	\begin{align*}
		& (2^{l_1 + l_3}-1)k-  2^{l_1 + l_3 }(\frac{j}{2\ell'}l_1 + l_3) \\ & \ge (2^{l_1 + l_3}-1)(l_1+l_3+1)-  2^{l_1 + l_3 }(l_1 + l_3) \\
		&  = 2^{l_1 + l_3}-1 -l_1-l_3 \\
		& \ge 0,
	\end{align*}
	so 
	\begin{small}
		\begin{align}\label{case:6}
			\frac{(k-l_3)(2^{n+2-l_2}+1)\ell' - l_1 j {n+2-l_2 \choose \floor{\frac{n+2-l_2}{2}}}}{2^{|N|-1-l}} &  \ge \frac{k(2^{n+2}+2)\ell}{2^{|N|-1}}. 
		\end{align}
	\end{small}
	If $l_2=0$, then the value $2^{n+2-l_2}+1$ is replaced by $2^{n+2}+2$ and ${n+2-l_2 \choose \floor{\frac{n+2-l_2}{2}}}$ by ${n+2 \choose \floor{\frac{n+2}{2}}}$, and the calculations are analogous,
	i.e., the summand does not decrease (and if we delete also some $l'$ players from 
	$N\setminus (\{1\}\cup V \cup Y^{**} \cup Z^{*}\cup L)$, the value of the summand will increase by a factor of~$2^{l'}$).

	Now, let us consider only Case~$3$.
	Let $l_1,l_2,l_3,l_4$ with $l_1+l_2+l_3 +l_4 = l$ and $l_4=l_{4,1}+\cdots+l_{4,h}$ be the number of players being removed from~$S$, $Y$, $Z$, and~$L$, respectively, let $l'<k$ (the case of $l'=k$ was already analyzed) be the number of players deleted from $A$ such that no whole pair $(a_i,b_i)$ for any $i\in\{1,\ldots,k\}$ is deleted, and let $1\le l+l' \le k$. Let us consider the case giving us 
	the smallest
	number of coalitions for which $1$ stays pivotal. The players from $Y$ and $Z$ are symmetric, therefore it does not matter which ones we remove. In the new game, there are $k+1-l_2$ players with weight~$y$, 
	$k-l_3$ players with weight~$z$, and $k-l'$ whole pairs in $A$. Since player~$1$ cannot be pivotal for the coalitions that need more players from $Y$ or $Z$ than there are in the new game anymore, and for coalitions that need the removed players from $A$, we do not need to remove the players from $S$ forming coalitions that either need more players from $Y$ or $Z$ or the whole pair from $A$ that is not there anymore to achieve the total weight~$q-1$. The most number of coalitions (after deleting the $l_2 + l_3 + l'$ players) are formed by players from $S$ having weight of form
	\begin{small}	
		\[
		q-(a_{i'} + b_{i'}) - \left\lfloor\frac{k+1-l_2}{2}\right\rfloor y - \left\lfloor\frac{k+1-l_3}{2}\right\rfloor z- \left\lfloor\frac{\delta_{i} - l_{4,i}}{2}\right\rfloor\delta_{i} '
		\]
	\end{small}
	(each forming 
	$${k+1-l_2 \choose \floor{\frac{k+1-l_2}{2}}}
	{k-l_3 \choose \floor{\frac{k-l_3}{2}}}{\delta_{i} - l_{4,i} \choose \floor{\frac{\delta_{i} - l_{4,i}}{2}}}$$ 
	coalitions of weight $q-1$), and there are not less than $k-l'\ge l_1$ players with such a weight. Let us remove any $l_1$ players with such a weight. Let  
	$$j = \max_{i\in\{1,\ldots,h\}}{\delta_i - l_{4,i} \choose \floor{\frac{\delta_i - l_{4,i}}{2}}}.$$ 
	From Case~$3$ in the new game, we get the summand of at least the following value in player~$1$'s power index---we distinguish the following four cases: 
	\begin{enumerate}
		\item $l_3 > 0 \wedge l_1 = 0$ :
		\begin{small}
			\begin{align*}
				\frac{(k-l')2^{k+1-l_2}2^{k-l_3}\ell'}{2^{|N|-1-l-l'}}  = ~ & \frac{2^{l+l'}(k-l')2^{k+1-l_2}2^{k-l_3}\ell'}{2^{|N|-1}} \\
				> ~ & \frac{2^{l_4+l'}(k-l')2^{k+1}(2^{k}-1)\ell'}{2^{|N|-1}}  
			\end{align*}
		\end{small}
		and since $2^{l'}\ge l'+1$ for $l'\ge 0$, and $l'<k$,
		\begin{align*}
			(2^{l'}-1)k & \ge (2^{l'}-1)(l'+1) \\
			& = 2^{l'}l' + 2^{l'} - l' - 1 \\
			& \ge 2^{l'} l', 
		\end{align*}
		which implies
		\begin{align}\label{m:k:inequality}
			2^{l'}(k-l') & \ge k, 
		\end{align} 
		therefore,
		\begin{align*}
			\frac{(k-l')2^{k+1-l_2}2^{k-l_3}\ell'}{2^{|N|-1-l-l'}} \stackrel{(\ref{m:k:inequality})}{>} ~ & 
			2^{l_4}\frac{k2^{k+1}(2^{k}-1)\ell'}{2^{|N|-1}}  \\
			\stackrel{(\ref{ell})}{\ge} ~ & \frac{k2^{k+1}(2^{k}-1)\ell}{2^{|N|-1}} 
		\end{align*}
		
		\item $l_3 > 0 \wedge l_1 > 0$:
		\begin{small}
			\begin{align*}
				& \frac{ (k-l')2^{k+1-l_2}2^{k-l_3}\ell' - l_1 j {k+1-l_2 \choose \floor{\frac{k+1-l_2}{2}}}{k-l_3 \choose \floor{\frac{k-l_3}{2}}}}{2^{|N|-1-l-l'}} \\
				\ge ~ & \frac{(k-l')2^{k+1-l_2}2^{k-l_3}\ell' 
				- l_1 j 2^{k-l_2}2^{k-1-l_3}}{2^{|N|-1-l-l'}}%
				\\
				= ~ & \frac{2^{l_4}}{2^{|N|-1}}\Big(2^{l_1 + l'}(k-l')2^{k+1}2^{k}\ell' - l_1 j 2^{2k+l_1 +l'-1}\Big) \\
				= ~ & \frac{2^{l_4}\ell'}{2^{|N|-1}}\Big(2^{l_1 + l'}k2^{k+1}(2^{k}- 1)+ 2^{l_1 + l'}k2^{k+1}  \\
				&  - 2^{l_1 + l'}l'2^{k+1}2^{k} - l_1 \frac{j}{\ell'} 2^{2k+l_1 +l'-1}\Big) \\
				= ~ & 2^{l_4}\ell'\frac{k2^{k+1}(2^{k}-1)}{2^{|N|-1}} \\
				& + \frac{2^{l_4}\ell'}{2^{|N|-1}}
				\Big(k2^{2k+1+l_1+l'} - k2^{k+1+l_1+l'} - k2^{2k+1} \\
				&  + k2^{k+1} 
				+ k2^{k+1+l_1 +l'} 
				- l'2^{2k+1 + l_1 + l'}  \\
				& - l_1 \frac{j}{\ell'} 2^{2k+l_1 +l'-1}\Big) 
			\\
				= ~ & 2^{l_4}\ell'\frac{k2^{k+1}(2^{k}-1)}{2^{|N|-1}} \\
				& + \frac{2^{l_4}\ell'}{2^{|N|-1}}\Big((k-l')2^{2k+1 + l_1 + l'}   -k2^{2k+1} 
				+ k2^{k+1} \\
                                \\
	\end{align*}
	\begin{align*}
				\stackrel{(\ref{m:k:inequality})}{\ge} ~ & 2^{l_4}\ell'\frac{k2^{k+1}(2^{k}-1)}{2^{|N|-1}} \\
				& + \frac{2^{l_4}\ell'}{2^{|N|-1}}\Big((k-l')2^{2k + l_1 + l'} +k2^{2k + l_1} -k2^{2k+1}  \\
				& +  k2^{k+1}   - l_1 \frac{j}{\ell'} 2^{2k+l_1 +l'-1}\Big) \\
				\ge ~ & 2^{l_4}\ell'\frac{k2^{k+1}(2^{k}-1)}{2^{|N|-1}} \\
				& + \frac{2^{l_4}\ell'}{2^{|N|-1}}\Big((k-l')2^{2k + l_1 + l'-1} 
				+ k2^{k+1} \Big) \\
				\stackrel{(\ref{ell})}{>}  ~ & \frac{k2^{k+1}(2^{k}-1)\ell}{2^{|N|-1}}.
			\end{align*}
		\end{small}
		
		\item $l_3 = 0 \wedge l_1 = 0$:
		\begin{align*} 
			& \frac{(k-l')2^{k+1-l_2}(2^{k}-1)\ell'}{2^{|N|-1-l_2-l_4-l'}} \\
			= ~ & 2^{l_4}\ell'\frac{2^{l_2 +l'}(k-l')2^{k+1-l_2}(2^{k}-1)}{2^{|N|-1}} \\
			= ~ & 2^{l_4}\ell'\frac{2^{l'}(k-l')2^{k+1}(2^{k}-1)}{2^{|N|-1}} \\ 
			\stackrel{(\ref{ell}),(\ref{m:k:inequality})}{\ge} & ~~ \frac{k2^{k+1}(2^{k}-1)\ell}{2^{|N|-1}}.
		\end{align*}
		
		\item $l_3 = 0 \wedge l_1 > 0$:
		\begin{small}
			\begin{align*}
				& \frac{(k-l')2^{k+1-l_2}(2^{k}-1)\ell' - l_1 j {k+1-l_2 \choose \floor{\frac{k+1-l_2}{2}}}{k \choose \floor{\frac{k}{2}}}}{2^{|N|-1-l-l'}} \\
				\ge ~ & \frac{(k-l')2^{k+1-l_2}(2^{k}-1)\ell' 
				- l_1 j 2^{k-l_2}2^{k-1}}{2^{|N|-1-l-l'}}%
				\\
				= ~ & \frac{2^{l_4}\ell'}{2^{|N|-1}}\Big(2^{l_1 + l'}(k-l')2^{k+1}(2^{k}-1) 
				\\
				& - 2^{l_1 + l'} l_1 \frac{j}{\ell'} 2^{2k-1}\Big)  \\
				= ~ & 2^{l_4}\ell'\frac{k2^{k+1}(2^{k}-1)}{2^{|N|-1}} \\
				& + \frac{2^{l_4}\ell'}{2^{|N|-1}}\Big((2^{l_1 +l'}-1)k2^{k+1}(2^{k}-1) \\
				& -2^{l_1 + l'}l' 2^{k+1}(2^{k}-1)- l_1 \frac{j}{\ell'} 2^{2k+l_1 + l'-1}\Big) \\
				= ~ & 2^{l_4}\ell'\frac{k2^{k+1}(2^{k}-1)}{2^{|N|-1}} \\
				& + \frac{2^{l_4}\ell'}{2^{|N|-1}}\Big((2^{k+l_1 +l'}-2^{l_1 +l'}-2^{k}+1)k2^{k+1} \\
				& -l' 2^{2k+1+l_1 +l'} + l'2^{k+1+l_1 +l'}- l_1 \frac{j}{\ell'} 2^{2k+l_1 + l'-1}\Big) \\
				= ~ & 2^{l_4}\ell'\frac{k2^{k+1}(2^{k}-1)}{2^{|N|-1}} 
				+ \frac{2^{l_4}\ell'}{2^{|N|-1}}\Big(k2^{2k+1+l_1 +l'}\\
				& -k2^{k+1 +l_1 +l'}-k2^{2k+1}+k2^{k+1} 
				-l' 2^{2k+1+l_1 +l'}\\
				& + l'2^{k+1+l_1 +l'}- l_1 \frac{j}{\ell'} 2^{2k+l_1 + l'-1}\Big)  
	\end{align*}
	\begin{align*}
				= ~ & 2^{l_4}\ell'\frac{k2^{k+1}(2^{k}-1)}{2^{|N|-1}} \\
				& + \frac{2^{l_4}\ell'}{2^{|N|-1}}\Big((k-l')2^{2k+l_1 +l'}+(k-l')2^{2k+l_1 +l'-1} \\
				& +(k-l')2^{2k+l_1 +l'-1} - (k-l')2^{k+1 +l_1 +l'}-k2^{2k+1} \\
				& +k2^{k+1} - l_1 \frac{j}{\ell'} 2^{2k+l_1 + l'-1}\Big)  \\
				\ge ~ & 2^{l_4}\ell'\frac{k2^{k+1}(2^{k}-1)}{2^{|N|-1}} \\
				& + \frac{2^{l_4}\ell'}{2^{|N|-1}}\Big((k-l')2^{2k+l_1 +l'}+(k-l')2^{2k+l_1 +l'-1} \\
				& - (k-l')2^{k+1 +l_1 +l'}-k2^{2k+1}+k2^{k+1} \Big).  
			\end{align*}
		\end{small}
		And since
		\begin{align*}
			& (k-l') 2^{2k+l_1 +l'} +(k-l')2^{2k+l_1 +l'-1}  \\
			& - (k-l')2^{k+1 +l_1 +l'}-k2^{2k+1}+k2^{k+1}	\\
			\stackrel{(\ref{m:k:inequality})}{\ge} ~ & (k-l')2^{2k+l_1 +l'-1}- (k-l')2^{k+1 +l_1 +l'}+k2^{k+1} \\
			> ~ & k2^{k+1} \\
			> ~ & 0, \\
		\end{align*}
		the summand increases also in this case.
	\end{enumerate}
	
	Note that if we removed some whole pair $(a_i,b_i)$, $i\in\{1,\ldots,k\}$, from $A$, we would get exactly the same number of coalitions for the summands as in the case of removing only one of the players from the pair, i.e., the new game would just contain less players, therefore, the summand would be even larger (since the denominator would be smaller at that time).

	Let us analyze the summand by Case~$5$. Let us consider removing $l_1 + l_2 + l_3 + l' = l \le k$ with $l'<k$ (if we remove the whole pair 
	$(a_{i'},b_{i'})$, the summand will not decrease more than by removing just one of the elements, so we assume the latter; then, $l'=k$ was considered separately) and $l_1+l_2<k$ (if $l_1+l_2=k$, the index will increase by (\ref{m:thelargest:5and6})) be the numbers of removed players from $U$, $Y^{*}$, $L$, and $A$, 
	respectively. As in the previous case, we can remove any players from $Y^{*}$. Then the most coalitions are formed with the players with weights of form
	$$q-(a_{i'}+b_{i'})-\left\lceil\frac{k+1-l_2}{2}\right\rceil y^{*}- \left\lfloor\frac{\delta_i - l_{3,i}}{2}\right\rfloor\delta_i ',$$ 
	for $i'$ for which $a_{i'}$ and $b_{i'}$ were not removed. 
	Let
	\[
	j = \max_{i\in\{1,\ldots,h\}}{\delta_i - l_{3,i} \choose \floor{\frac{\delta_i - l_{3,i}}{2}}}.
	\]
	Let us start with $l_2=0$. Then the summand defined by Case~$5$ changes to at least
	\begin{align*}
		& \frac{(k-l')(2^{k+1}-2)\ell ' - l_1 j {k+1 \choose \ceil{\frac{k+1}{2}}}}{2^{|N|-1-l}}.
	\end{align*}	
	If $l_1=0$, then by (\ref{ell}), the summand nondecreases. Otherwise,
	\begin{small}
		\begin{align*}
			& \frac{(k-l')(2^{k+1}-2)\ell ' - l_1 j {k+1 \choose \ceil{\frac{k+1}{2}}}}{2^{|N|-1-l}} \\
			& \ge \frac{2^{l_1+l_3+l'}\ell'}{2^{|N|-1}}\Big((k-l')(2^{k+1}-2) - l_1 \frac{j}{\ell'} 2^{k}\Big) 
\\
			& = 2^{l_3}\ell'\frac{k(2^{k+1}-2)}{2^{|N|-1}} 
			+ \frac{2^{l_3}\ell'}{2^{|N|-1}}\Big((2^{l_1+l'}-1)k(2^{k+1}-2)\\
			& ~~~~~ -2^{l_1+l'} l' (2^{k+1}-2) - l_1 \frac{j}{\ell'} 2^{k+l_1+l'}\Big) \\
			& = 2^{l_3}\ell'\frac{k(2^{k+1}-2)}{2^{|N|-1}} 
			+ \frac{2^{l_3}\ell'}{2^{|N|-1}}\Big(k2^{k+l_1+l'+1}\\
			& ~~~~~ - k2^{k+1} - k2^{l_1+l'+1}+2k - l'2^{k+l_1+l'+1} \\
			& ~~~~~ + l'2^{l_1+l'+1} - l_1 \frac{j}{\ell'} 2^{k+l_1+l'}\Big). \\
		\end{align*}
	\end{small}
	Since $l_1>0$, the summand by Case~$3$ increases at least by 
	\begin{align*}
		\alpha & =(2^{l_1}-1)k\ell 2^{k+1}(2^{k}-1) \\
		& =  (2^{k+l_1}-2^{l_1}-2^{k}+1) k\ell 2^{k+1} \\
		& \ge (2^{k+l_1-1}-2^{l_1}+1) k\ell 2^{k+1} \\
		& \ge (2^{l_1+l_3+l_1-1}-2^{l_1}+1) k\ell' 2^{k+1} \\
		& > 2^{l_1+l_3+l_1-2} k\ell' 2^{k+1}, \\
	\end{align*}
	so
	\begin{align*}
		& \frac{(k-l')(2^{k+1}-2)\ell ' - l_1 j {k+1 \choose \ceil{\frac{k+1}{2}}}}{2^{|N|-1-l}} + \alpha\\
		& > 2^{l_3}\ell'\frac{k(2^{k+1}-2)}{2^{|N|-1}} 
		+ \frac{2^{l_3}\ell'}{2^{|N|-1}}\Big((k-l')2^{k+l_1+l'+1}\\
		& ~~~~~ - k2^{k+1} - (k-l')2^{l_1+l'+1}+2k  \\
		& ~~~~~ - l_1 \frac{j}{\ell'} 2^{k+l_1+l'} + k2^{k+2l_1-1} \Big) \\
		& \ge 2^{l_3}\ell'\frac{k(2^{k+1}-2)}{2^{|N|-1}} 
		+ \frac{2^{l_3}\ell'}{2^{|N|-1}}\Big((k-l')2^{k+l_1+l'+1}\\
		& ~~~~~ - (k-l')2^{l_1+l'+1}+2k  - l_1 \frac{j}{\ell'} 2^{k+l_1+l'}  \Big) \\
		& \ge 2^{l_3}\ell'\frac{k(2^{k+1}-2)}{2^{|N|-1}} \\
		& ~~~~~ + \frac{2^{l_3}\ell'}{2^{|N|-1}}\Big((k-l')2^{k+l_1+l'} +2k  - l_1 \frac{j}{\ell'} 2^{k+l_1+l'}  \Big) \\
	\end{align*}
	and because $(k-l') \ge l_1 \ge l_1 \frac{j}{\ell'}$ and by (\ref{ell}), the sum of the two summands (i.e., by Case~$5$ and Case~$3$) increases in the case of $l_2=0$.
	Let  $l_2>0$ and $l_1>0$. Then the summand defined by Case~$5$ changes to at least
	\begin{small}
		\begin{align*} 
			& \frac{(k-l')(2^{k+1-l_2}-1)\ell ' - l_1 j {k+1-l_2 \choose \ceil{\frac{k+1-l_2}{2}}}}{2^{|N|-1-l}} \\ 
			& =  \frac{2^l  \Big((k-l')(2^{k+1-l_2}-1)\ell' - l_1 j {k+1-l_2 \choose \ceil{\frac{k+1-l_2}{2}}}\Big)}{2^{|N|-1}} \\
			& = 2^{l_3}\frac{2^{l_1  + l'}(k-l')(2^{k+1}-2^{l_2})\ell' - 2^{l-l_3}l_1 j {k+1-l_2 \choose \ceil{\frac{k+1-l_2}{2}}}}{2^{|N|-1}} \\
			& = 2^{l_3}\frac{2^{l_1  + l'}(k-l')(2^{k+1}-2)\ell' - 2^{l_1  + l'}(k-l')(2^{l_2}-2)\ell'}{2^{|N|-1}} \\
			& ~~~~~ - 2^{l_3}\frac{ 2^{l-l_3}l_1 j {k+1-l_2 \choose \ceil{\frac{k+1-l_2}{2}}}}{2^{|N|-1}} 
	\\	
			& = 2^{l_3}\frac{k\ell' (2^{k+1}-2)}{2^{|N|-1}}\\
			& ~~~~~ + 2^{l_3}\frac{(2^{l_1 + l'}-1)k(2^{k+1}-2)\ell'-2^{l_1  + l'}l'(2^{k+1}-2)\ell' }{2^{|N|-1}} \\
			& ~~~~~ + 2^{l_3}\frac{ - 2^{l_1  + l'}(k-l')(2^{l_2}-2)\ell' - 2^{l-l_3}l_1 j {k+1-l_2 \choose \ceil{\frac{k+1-l_2}{2}}}}{2^{|N|-1}} \\		
			& \ge 2^{l_3}\frac{k\ell' (2^{k+1}-2)}{2^{|N|-1}}\\
			& ~~~~~ + 2^{l_3}\frac{(2^{l_1 + l'}-1)k(2^{k+1}-2)\ell'-2^{l_1  + l'}l'(2^{k+1}-2)\ell' }{2^{|N|-1}} \\
			& ~~~~~ + 2^{l_3}\frac{ - 2^{l_1  + l'}(k-l')(2^{l_2}-2)\ell' - 2^{l_1+l'}l_1 j 2^{k}}{2^{|N|-1}} \\
			& =2^{l_3}\frac{k\ell' (2^{k+1}-2)}{2^{|N|-1}}\\
			& ~~~~~ + \frac{2^{l_3}\ell'}{2^{|N|-1}}\Big(k2^{k+l_1+l'+1} - k2^{k+1} - k2^{l_1+l'+1} \\
			& ~~~~~ + 2k - l' 2^{k+ l_1  + l'+1} + l'2^{l_1+l'+1}- k2^{l_1+l_2+l'}\\
			& ~~~~~  + k2^{l_1+l'+1} + l' 2^{l_1+l_2+l'} - l'2^{l_1+l'+1}   - l_1 \frac{j}{\ell'}2^{k+l_1+l'}\Big) \\
			& =2^{l_3}\frac{k\ell' (2^{k+1}-2)}{2^{|N|-1}}\\
			& ~~~~~ + \frac{2^{l_3}\ell'}{2^{|N|-1}}\Big(k2^{k+l_1+l'+1} - k2^{k+1}  + 2k  \\
			& ~~~~~ - k2^{l_1+l_2+l'}  + l' 2^{l_1+l_2+l'} - l'2^{k+l_1+l'+1}  -l_1 \frac{j}{\ell'}2^{k+l_1+l'}\Big) \\
		\end{align*}
	\end{small}
	and for $\alpha$ as defined earlier,
	\begin{align*} 
		& \frac{(k-l')(2^{k+1-l_2}-1)\ell ' - l_1 j {k+1-l_2 \choose \ceil{\frac{k+1-l_2}{2}}}}{2^{|N|-1-l}} + \alpha \\ 
		& > 2^{l_3}\frac{k\ell' (2^{k+1}-2)}{2^{|N|-1}}\\
		& ~~~~~ + \frac{2^{l_3}\ell'}{2^{|N|-1}}\Big(k2^{k+l_1+l'+1} - k2^{k+1}  + 2k  - k2^{l_1+l_2+l'} \\
		&  ~~~~~ + l' 2^{l_1+l_2+l'}   - l'2^{k+l_1+l'+1}  -l_1 \frac{j}{\ell'}2^{k+l_1+l'} 
		+ k2^{k+2l_1-1}\Big) \\
			\end{align*}
		\begin{align*}
		& \ge 2^{l_3}\frac{k\ell' (2^{k+1}-2)}{2^{|N|-1}}\\
		& ~~~~~ + \frac{2^{l_3}\ell'}{2^{|N|-1}}\Big((k-l')2^{k+l_1+l'+1}  + 2k \\
		& ~~~~~  - (k-l')2^{l_1+l_2+l'}   -l_1 \frac{j}{\ell'}2^{k+l_1+l'} \Big) \\
		& \ge 2^{l_3}\frac{k\ell' (2^{k+1}-2)}{2^{|N|-1}} \\
		& ~~~~~ + \frac{2^{l_3}\ell'}{2^{|N|-1}}\Big((k-l')2^{k+l_1+l'}  + 2k    -l_1 \frac{j}{\ell'}2^{k+l_1+l'} \Big). 
	\end{align*}
	As before, the sum of the two summand increases if $l_2>0$ and $l_1>0$. Finally, let $l_2>0$ and $l_1=0$. Then
	the summand defined by Case~$5$ changes to at least	
	\begin{small}
		\begin{align*} 
			& \frac{(k-l')(2^{k+1-l_2}-1)\ell ' }{2^{|N|-1-l}} \\ 
			& = \frac{2^{l_2+l_3+l'}\ell'}{2^{|N|-1}}(k-l')(2^{k+1-l_2}-1) 
	\\	
			& = \frac{2^{l_3+l'}\ell'}{2^{|N|-1}}(k-l')(2^{k+1}-2^{l_2}) \\
			& = \frac{2^{l_3+l'}\ell'}{2^{|N|-1}}\Big((k-l')(2^{k+1}-2)-(k-l')(2^{l_2}-2)\Big) \\
			& = 2^{l_3}\frac{k\ell' (2^{k+1}-2)}{2^{|N|-1}} \\
			& ~~~~~ + \frac{2^{l_3}\ell'}{2^{|N|-1}}\Big((2^{l'}-1)k(2^{k+1}-2) - 2^{l'}l'(2^{k+1}-2) \\
			&  ~~~~~ -2^{l'}(k-l')(2^{l_2}-2)\Big) \\
			& = 2^{l_3}\frac{k\ell' (2^{k+1}-2)}{2^{|N|-1}} \\
			& ~~~~~ + \frac{2^{l_3}\ell'}{2^{|N|-1}}\Big(k2^{k+l'+1} - k2^{k+1} - k2^{l'+1}+ 2k - l'2^{k+l'+1} \\
			& ~~~~~  + l'2^{l'+1} - k2^{l_2+l'} + k2^{l'+1} + l'2^{l_2+l'} - l'2^{l'+1} \Big) \\
			& = 2^{l_3}\frac{k\ell' (2^{k+1}-2)}{2^{|N|-1}} \\
			& ~~~~~ + \frac{2^{l_3}\ell'}{2^{|N|-1}}\Big(k2^{k+l'+1} - k2^{k+1} + 2k - l'2^{k+l'+1} \\
			& ~~~~~ - k2^{l_2+l'}  + l'2^{l_2+l'}  \Big). \\
		\end{align*}
	\end{small}
	The summand by Case~$3$ increases at least by 
	\begin{align*}
		\gamma & = (2^{l_2}-1)k\ell 2^{k+1}(2^{k}-1) \\
		& =  (2^{k+l_2}-2^{l_2}-2^{k}+1) k\ell 2^{k+1} \\
		& \ge (2^{k+l_2-1}-2^{l_2}+1) k\ell 2^{k+1} \\
		& \ge (2^{l_2+l_3+l_2-1}-2^{l_2}+1) k\ell' 2^{k+1} \\
		& > 2^{2l_2+l_3-2} k\ell' 2^{k+1}. \\
	\end{align*}
	Therefore,
	\begin{align*} 	
		& \frac{(k-l')(2^{k+1-l_2}-1)\ell ' }{2^{|N|-1-l}} + \gamma\\
		& > 2^{l_3}\frac{k\ell' (2^{k+1}-2)}{2^{|N|-1}} \\
		& ~~~~~ + \frac{2^{l_3}\ell'}{2^{|N|-1}}\Big(k2^{k+l'+1} - k2^{k+1} + 2k - l'2^{k+l'+1}  - k2^{l_2+l'}  \\
		& ~~~~~ + l'2^{l_2+l'} + k2^{k+2l_2-1}  \Big) \\
		& \ge 2^{l_3}\frac{k\ell' (2^{k+1}-2)}{2^{|N|-1}} \\
		& ~~~~~ + \frac{2^{l_3}\ell'}{2^{|N|-1}}\Big((k-l')2^{k+l'+1}  + 2k   - (k-l')2^{l_2+l'}  \Big) \\
		& > 2^{l_3}\frac{k\ell' (2^{k+1}-2)}{2^{|N|-1}}, 
	\end{align*}
	i.e., the sum of the two summands are greater 
	than 
	the corresponding sum in the old game.
	\OMIT{
		and as before
		\[
		(2^{l_1 + l'}-1)k-  2^{l_1  + l' }(\frac{j}{2\ell'}l_1 + l') \ge 0,
		\]
		so
		\begin{align*}%
			\frac{(k-l')(2^{k+1-l_2}-1)\ell' - l_1 j {k+1-l_2 \choose \ceil{\frac{n+2-l_2}{2}}}}{2^{|N|-1-l}}
			&  > 2^{l_3}\frac{k(2^{k+1}-2)\ell'}{2^{|N|-1}} \\ 
			& \stackrel{(\ref{ell})}{\ge}  \frac{k(2^{k+1}-2)\ell}{2^{|N|-1}}. 
		\end{align*}
		If $l_2=0$, the inequality sign above changes to~$\ge$. 
	}

	Let us consider removing $l_1 + l_2 + l_3 + l_4 + l' = l + l' \le k$ players from $T$, $Y'$, $Z'$, $L$, and $A$, respectively, with the assumptions that $l'<k$ and no whole pair $(a_i,b_i)$ for any $i\in\{1,\ldots,k\}$ is removed from the game.
	Let
	\[
	j = \max_{i\in\{1,\ldots,h\}}{\delta_i - l_{4,i} \choose \floor{\frac{\delta_i - l_{4,i}}{2}}}.
	\]
	Before we analyze the summand by Case~$4$, let us prove that for any $k\ge 4$ and $l'\in\{1,\ldots,k-1\}$, the following inequality is true:
	\begin{align}\label{3k/2}
		2^{l'}(k-l')\ge\frac{3}{2}k.
	\end{align}
	We are going to show it by induction. First let $k=4$. Then
	\begin{itemize}
		\item for $l'=1$, $2^{l'}(k-l') - \frac{3}{2}k = 2\cdot (4-1) - \frac{3}{2}\cdot4 = 0$,
		\item for $l'=2$, $2^{l'}(k-l') - \frac{3}{2}k = 4\cdot (4-2) - 6 = 2$, and
		\item for $l'=3$, $2^{l'}(k-l') - \frac{3}{2}k = 8\cdot (4-3) - 6 = 2.$
	\end{itemize}
	Let us assume that the inequality~(\ref{3k/2}) is true for $k\ge 4$ and $l'\in\{1,\ldots,k-1\}$. Then, for $k'=k+1$,
	\[
	2^{l'}(k'-l') = 2^{l'}(k-l') + 2^{l'} \ge \frac{3}{2}k + 2^{l'} > \frac{3}{2}k + \frac{3}{2} = \frac{3}{2}k',
	\]
	and for $l'=k$,
	\begin{align*}
		2^{k}(k'-k) - \frac{3}{2}k'  & = 2^k - \frac{3}{2}k - \frac{3}{2} \ge k^2 - \frac{3}{2}k - \frac{3}{2} \\
		& = (k - \frac{3-\sqrt{33}}{4})(k - \frac{3+\sqrt{33}}{4}) \\
		& > 0.
	\end{align*}
	Let us focus on the summand defined by Case~$4$ and let $l'>0$ (note that if $l'=0$, we do not need to use (\ref{3k/2}), and the values are still not smaller than the summand in the old game; also then, the changes made for the summand by Case~$4$ keep the summand by Case~$1$ unchanged): 
	\begin{enumerate}
		\item For $l_3 > 0 \wedge l_1 = 0$, the summand by Case~$4$ to equal to
		\begin{align*}
			& \frac{(k-l')2^{n+2-l_2}2^{k-l_3}\ell'}{2^{|N|-1-l-l'}} 	\\
			= ~ &  2^{l_4}\ell'\frac{2^{l'}(k-l')2^{n+2}2^{k}}{2^{|N|-1}} \\
			> ~ &  2^{l_4}\ell'\frac{2^{l'}(k-l')2^{n+2}(2^{k}-1)}{2^{|N|-1}} \\
			\stackrel{(\ref{3k/2})}{\ge} ~ &  2^{l_4}\ell'\frac{k2^{n+2}(2^{k}-1)}{2^{|N|-1}} + 2^{l_4}\ell'\frac{\frac{k}{2}2^{n+2}(2^{k}-1)}{2^{|N|-1}}\\
			\ge ~ &  \frac{k2^{n+2}(2^{k}-1)\ell}{2^{|N|-1}} + 2^{l_4}\ell'\frac{2k2^{n}(2^{k}-1)}{2^{|N|-1}},
		\end{align*}
		and additionally,
		\[
		2^{l_4}\ell'\frac{2k2^{n}(2^{k}-1)}{2^{|N|-1}} \ge \frac{2k(2^{k}-1)\xi}{2^{|N|-1}}.
		\]
		So, even if the summand by Case~$1$ decreases to $0$, the two summands together increase.
		
		\item For $l_3>0 \wedge l_1 > 0$, the summand by Case~$4$ is at least
		\begin{small}	
			\begin{align*}
				& \frac{(k-l')2^{n+2-l_2}2^{k-l_3}\ell'- l_1 j {n+2-l_2 \choose \floor{\frac{n+2-l_2}{2}}}{k-l_3 \choose \floor{\frac{k-l_3}{2}}}}{2^{|N|-1-l-l'}}	\\
				\ge ~ & 2^{l_4}\ell'\frac{k2^{n+2} (2^{k}-1)}{2^{|N|-1}} 
				\\& + \frac{2^{l_4}\ell'}{2^{|N|-1}}\Big((2^{l_1+l'}-1)2^{n+2} k(2^{k}-1) + 2^{l_1+l'}k2^{n+2}  \\
				& -2^{l_1+l'}l'2^{n+2} 2^{k} -2^{l + l' - l_4}l_1 \frac{j}{\ell'} 2^{n+1 - l_2} 2^{k-l_3-1}\Big) \\
				= ~ & 2^{l_4}\ell'\frac{k2^{n+2} (2^{k}-1)}{2^{|N|-1}} 
				+ \frac{2^{n+2+l_4}\ell'}{2^{|N|-1}}\Big(k2^{k+l_1+l'}
				\\& -k2^{l_1 + l'}-k2^{k}+k+ k2^{l_1+l'} \\
				&  -l'2^{k+l_1+l'} -l_1 \frac{j}{\ell'} 2^{k+l_1 +l' -2}\Big) \\
				= ~ & 2^{l_4}\ell'\frac{k2^{n+2} (2^{k}-1)}{2^{|N|-1}} 
				\\& + \frac{2^{n+2+l_4}\ell'}{2^{|N|-1}}\Big(4(k-l')2^{k+l_1+l'-2}
				\\& -k2^{k}+k 
				-l_1 \frac{j}{\ell'} 2^{k+l_1 +l' -2}\Big) \\
				\ge ~ & 2^{l_4}\ell'\frac{k2^{n+2} (2^{k}-1)}{2^{|N|-1}} 
				\\& + \frac{2^{n+2+l_4}\ell'}{2^{|N|-1}}\Big(3(k-l')2^{k+l_1+l'-2}-k2^{k}+k \Big) \\
				\stackrel{(\ref{3k/2})}{>} ~ &  \frac{k2^{n+2} (2^{k}-1)\ell}{2^{|N|-1}} + 
				\frac{k2^{n+2}2^{k}\ell}{2^{|N|-1}}. \\
			\end{align*}
		\end{small}
		Then
		\[
		\frac{k2^{n+2}2^{k}\ell}{2^{|N|-1}} > \frac{2k2^{n}2^{k}}{2^{|N|-1}}\ge \frac{2k2^{k}\xi}{2^{|N|-1}},
		\]
		so the summand with the two summands by Case~$1$ and Case~$2$ increase even if the two latter decrease to $0$.	
		
		\item Let $l_3=0 \wedge l_1 = 0$.  
		The summand of Case~$4$ is equal to
		\begin{align*}
			& \frac{(k-l')2^{n+2-l_2}(2^{k}-1)\ell'}{2^{|N|-1-l_2-l_4-l'}}\\
			= ~ & 2^{l_4}\ell'\frac{2^{l'}(k-l')2^{n+2}(2^{k}-1)}{2^{|N|-1}} \\
			= ~ &  2^{l_4}\ell'\frac{k 2^{n+2} (2^{k}-1)}{2^{|N|-1}} \\
			& + 2^{l_4}\ell'\frac{(2^{l'}-1)k2^{n+2}(2^{k}-1)-2^{l'}l'2^{n+2}(2^{k}-1)}{2^{|N|-1}} \\
			= ~ &  2^{l_4}\ell'\frac{k 2^{n+2} (2^{k}-1)}{2^{|N|-1}}   \\
			&+ 2^{l_4}\ell'\frac{2^{l'}(k-l')2^{n+2}(2^{k}-1)-k2^{n+2}(2^{k}-1)}{2^{|N|-1}} \\
			\stackrel{(\ref{3k/2})}{\ge} ~ & \frac{k 2^{n+2} (2^{k}-1)\ell}{2^{|N|-1}} + 2^{l_4}\ell'\frac{\frac{k}{2}2^{n+2}(2^{k}-1)}{2^{|N|-1}} \\ 
		\end{align*}
		and
		\[
		\frac{\frac{k}{2}2^{n+2}(2^{k}-1)}{2^{|N|-1}} = \frac{2k2^{n}(2^{k}-1)}{2^{|N|-1}} \ge \frac{2k(2^{k}-1)\xi}{2^{|N|-1}}.
		\]

		\item For $l_3=0 \wedge l_1 > 0$, 
		the summand of Case~$1$ is as in the first two situations. The summand defined by Case~$4$ in the new game is equal to 
		\begin{small}
			\begin{align*}
				& \frac{(k-l')2^{n+2-l_2}(2^{k}-1)-l_1 j {n+2-l_2 \choose \floor{\frac{n+2-l_2}{2}}}{k \choose \floor{\frac{k}{2}}}}{2^{|N|-1-l-l'}}	\\
				\ge ~ & 2^{l_4}\ell'\frac{k2^{n+2} (2^{k}-1)}{2^{|N|-1}} \\
				& + \frac{2^{l_4}\ell'}{2^{|N|-1}}\Big((2^{l_1+l'}-1)2^{n+2} k(2^{k}-1)  \\
				&  -2^{l_1+l'}l'2^{n+2} (2^{k}-1) -2^{l_1 + l'}l_1\frac{j}{\ell'} 2^{n+k}\Big) \\
				= ~ &  2^{l_4}\ell'\frac{k2^{n+2} (2^{k}-1)}{2^{|N|-1}} 
				\\& + \frac{2^{n+2+l_4}\ell'}{2^{|N|-1}}\Big(k2^{k+l_1+l'}-k2^{l_1 + l'}-k2^{k}+k \\
				& -l'2^{k+l_1+l'} +l'2^{l_1 + l' } -l_1\frac{j}{\ell'} 2^{k+l_1 +l' -2}\Big) \\
				= ~ &  2^{l_4}\ell' \frac{k2^{n+2} (2^{k}-1)}{2^{|N|-1}} \\
				& + \frac{ 2^{n+2+l_4}\ell'}{2^{|N|-1}}\Big(4(k-l')2^{k+l_1+l'-2}-(k-l')2^{l_1 + l'} \\
				& -k2^{k}+k -l_1 \frac{j}{\ell'} 2^{k+l_1 +l' -2}\Big) \\
				\ge ~ &   2^{l_4}\ell'\frac{k2^{n+2} (2^{k}-1)}{2^{|N|-1}} 
				+ \frac{ 2^{n+2+l_4}\ell'}{2^{|N|-1}}\Big(2^{l_1+l'}(3(k-l')2^{k-2} \\
				& -(k-l'))-k2^{k}+k \Big) 
		\\	
				= ~ &   2^{l_4}\ell'\frac{k2^{n+2} (2^{k}-1)}{2^{|N|-1}} \\ 
				& + \frac{ 2^{n+2+l_4}\ell'}{2^{|N|-1}}\Big(2^{l_1+l'}(k-l')(3\cdot2^{k-2}-1)-k2^{k}+k \Big) \\
		\end{align*}
		\begin{align*}		
				\stackrel{(\ref{m:k:inequality})}{\ge} ~ &   2^{l_4}\ell'\frac{k2^{n+2} (2^{k}-1)}{2^{|N|-1}} \\
				& + \frac{ 2^{n+2+l_4}\ell'}{2^{|N|-1}}\Big(k2^{l_1}(3\cdot2^{k-2}-1)-k2^{k}+k \Big) \\
				= ~ &   2^{l_4}\ell'\frac{k2^{n+2} (2^{k}-1)}{2^{|N|-1}} \\
				& + \frac{ 2^{n+2+l_4}\ell'}{2^{|N|-1}}\Big(k2^{k+l_1-1} + k2^{k-2+l_1}-k2^{l_1}-k2^{k}+k \Big) \\
				\ge ~ &   2^{l_4}\ell'\frac{k2^{n+2} (2^{k}-1)}{2^{|N|-1}} 
				+ \frac{ 2^{n+2+l_4}\ell'}{2^{|N|-1}}\Big(k2^{l_1}(2^{k-2}-1)+k \Big) \\
				\ge ~ &   2^{l_4}\ell'\frac{k2^{n+2} (2^{k}-1)}{2^{|N|-1}} 
				+ \frac{ 2^{n+2+l_4}\ell'}{2^{|N|-1}}\Big(k(2^{k-1}-2)+k \Big) \\
				> ~ &  \frac{k2^{n+2} (2^{k}-1)\ell}{2^{|N|-1}} + \frac{ 2k2^{n+1}(2^{k-1}-2)}{2^{|N|-1}}, 
			\end{align*}
		\end{small}
		so the summand increases. Also, for $l'>0$, together with part of the increase of the summand by Case~$6$:
		\begin{align*}
			\frac{ 2k2^{n+1}(2^{k-1}-2)}{2^{|N|-1}} & + (2^{l_1+l'}-1)\frac{ 2k(2^{n+1}+1)}{2^{|N|-1}}\\ 
			& > \frac{ 2k2^{n}(2^{k}+2)}{2^{|N|-1}} \\
			&  > \frac{ 2k(2^{k}-1)\xi}{2^{|N|-1}}.
		\end{align*}
	\end{enumerate}  
	
	Now, let us consider more closely the last two cases: Case~$1$ and Case~$2$. Both of them could be 
	equal to $0$ from the beginning; then, removing any player is an increase of the index due to all the cases analyzed above (the players forming the coalitions from Case~$6$ and the players forming coalitions from the rest of the cases are disjoint subsets, i.e., removing any player from one subset automatically increases the summands defined by the other subset while the removal of the former do not decrease the summands it defines). 
	If the summands are positive, removal of one player for each can decrease it to~$0$. However, as it was shown while analyzing the previous summand and 
	by (\ref{m:thelargest:1or2}), we see that the index still increases.
	
	Summarizing the whole analysis above:
	\begin{enumerate}
		\item Removing $k$ players as it was done in the previous implication---but for any set corresponding to a truth assignment for the first $k$ variables---the index in the new game is not equal to the index in the old game.
		\item Removing players forming the coalitions from Case~$3$ makes the summand not smaller if we remove no player from $S\cup Z$, and greater if at least one of the players is deleted.
		\item If we remove some of the players forming coalitions 
		from Case~$5$ and Case~$6$, there are two possibilities, i.e.,
		\begin{itemize}
			\item either we remove $k$ players only from~$Z^{*}$ excluding all the coalitions from the index in the extreme situation, but by (\ref{m:thelargest:5and6}), the new index is greater due to the coalitions from Case~$4$, or
			\item we remove fewer of the players defining these summands---deleting or not other players forming coalitions from the other cases---then, the summands by Case~$5$ 	
			(together with the increase made by the summand by Case~$3$) 
			and by Case~$6$ nondecrease. 
		\end{itemize}
		\item  If we remove some players forming the coalitions from Case~$4$, this summand increases, unless we do not delete any player from $T\cup Z'$---then, the summand nondecreases. 
		\item If we remove the players that are crucial for the summand by Case~$1$ and it decreases even to $0$ (or $\xi=0$ from the beginning), the summand increases due to the increase of the summand by Case~$4$, if we remove at least one player from $A$. Without removing any player from $A$, the summands still increase due to, e.g., (\ref{m:thelargest:1or2}).
		\item If we remove the players that are crucial for the summands by Case~$2$, it increases due to  (\ref{m:thelargest:1or2}).
	\end{enumerate}
	\noindent
	Note that there exists no group of players that is part of forming some coalitions counted in the summands by all the six cases unless $\xi = 0$---in this situation, each positive summand is defined, i.a., by players from $L$.
        But first, let us discuss shortly the situation when $\xi > 0$.
        In this case, we have to remove players from at least two groups to alter  the number of coalitions counted in all summands, and at least two groups will have impact on
        at most four summands.
        Therefore, for each of these groups, after removing some players, the summands they impacted are not smaller than they were in the old game; however, at the same time, the removal causes an increase of the remaining summands, which are not smaller than in the old game either, after removing some players defining them.
        Therefore, the index increases at the end.
        
        Now, let $\xi=0$.
        In this case, we do not have to worry about a decrease of the summands by Case~$1$ and Case~$2$.
        Let us focus on the remaining four summands.
        If we remove a player from the groups defining only the summands by Case~$1$ and Case~$2$, the index increases.
        Since the summand by Case~$4$ is larger than any other summand, removing a player not in $A\cup T\cup Y' \cup Z' \cup L$ will increase the index due to the increase of this summand, i.e., removal of each such player will create a value larger than each of the remaining three summands (\ref{m:thelargest:3})--(\ref{m:thelargest:5and6}) (or even larger than two such summands, as in the case of (\ref{m:thelargest:5and6})).
        So let us assume we do not remove these players either.
        From the remaining groups, only groups~$A$ and $L$ define more summands, so removing players from the other three groups does not make the summand by 	
	Case~$4$ smaller, but it increases the former summands.
        Furthermore, let us assume that we do not delete these players either.
        The players from $A$ do not form coalitions counted in the summand by Case~$6$, so if this summand is not smaller due to removal of other players, it also increases due to the deletion of players from $A$ (while the remaining summands are not smaller than before).
        Finally, consider deleting only players from $L$: Let us call them $L'$ and let $|L'|>0$.
        Then
	\begin{align*}
		\frac{2^k k(2^{n+2}+2^{k+1})\ell '}{2^{|N|-|L'|-1}}
		= ~~~~ & 2^{|L'|}\ell'\frac{2^k k(2^{n+2}+2^{k+1})}{2^{|N|-1}} \\
		\stackrel{(\ref{ell})}{>} ~~~ & \frac{2^k k\ell(2^{n+2}+2^{k+1})}{2^{|N|-1}} \\
		= ~~~~ & \PenroseBanzhaf(\mathcal{G},1).
	\end{align*}

        Summing up, in each possible case of deleting players, the Penrose--Banzhaf index of player~$1$ differs in the new game from that in the old game, so we have constructed a no-instance of our control problem, as desired.
\end{proofs}

\end{document}